\newtheorem{theorem}{Theorem}[section]       
\newtheorem{lemma}[theorem]{Lemma}
\newtheorem{proposition}[theorem]{Proposition}
\newtheorem{definition}[theorem]{Definition}
\DeclareMathOperator{\cov}{Cov}
\newcommand*{\cA}{\mathcal{A}}
\newcommand*{\cF}{\mathcal{F}}
\newcommand*{\cI}{\mathcal{I}}
\newcommand*{\cN}{\mathcal{N}}
\newcommand*{\cM}{\mathcal{M}}
\newcommand*{\cO}{\mathcal{O}}
\newcommand*{\cP}{\mathcal{P}}
\newcommand*{\cS}{\mathbf{S}}
\DeclareMathOperator*{\unif}{Unif} 
\DeclareMathOperator{\TV}{TV}
\newcommand*{\eps}{\varepsilon}
\newcommand*{\id}{\mathrm{id}}
\newcommand*{\tr}{\mathrm{Tr}}
\newcommand*{\spr}[2]{\langle #1 | #2 \rangle}
\newcommand*{\pr}[1]{\mathds{P}\left(#1 \right)}
\newcommand*{\ex}[1]{\mathds{E}\left(#1 \right)}
\newcommand{\ceil}[1]	{\left\lceil #1 \right\rceil}
\title{Lower Bounds on Learning Pauli Channels with  Individual Measurements }
\author{Omar Fawzi}
\author{Aadil Oufkir}
\author{Daniel Stilck França} 
\affil{\small{Univ Lyon, Inria, ENS Lyon, UCBL, LIP, Lyon, France}}
\date{\today}
\begin{document}
\maketitle
\begin{abstract} 
 Understanding the noise affecting a quantum device is of fundamental importance for scaling quantum technologies. A particularly important class of noise models is that of Pauli channels, as randomized compiling techniques can effectively bring any quantum channel to this form and are significantly more structured than general quantum channels.
In this paper, we show fundamental lower bounds on the sample complexity for learning Pauli channels in diamond norm. 
We consider strategies that may not use auxiliary systems entangled with the input to the unknown channel and have to perform a measurement before reusing the channel. 
For non-adaptive algorithms, we show a lower bound of $\Omega(2^{3n}\eps^{-2})$ to learn an $n$-qubit Pauli channel. In particular, this shows that the recently introduced learning procedure by \cite{flammia2020efficient} 
is essentially optimal. In the adaptive setting, we show a lower bound of $\Omega(2^{2.5n}\eps^{-2})$ for $\eps=\mathcal{O}(2^{-n})$, and a lower bound of $\Omega(2^{2n}\eps^{-2} )$ for any $\eps > 0$. This last lower bound holds even in a stronger model where in each step, before performing the measurement, the unknown channel may be used arbitrarily many times sequentially interspersed with unital operations.
\end{abstract}

\section{Introduction}

In spite of their impressive progress over the last few years~\cite{Arute2019,Zhong2020,Scholl2021,Ebadi2021}, the scaling and effective employment of quantum technologies still face many challenges. One of the most significant ones is how to tame the noise affecting such devices. For that, more effective tools are required to characterize and learn noisy quantum channels~\cite{Eisert2020}. As the number of parameters required to describe a  quantum channel scales exponentially in the size of the device, it is challenging to learn the noise beyond a few qubits. 

A class of quantum channels that deserves particular attention is that of Pauli channels~\cite[Sec. 4.1.2]{watrous2018theory}. 
In fact, the set of Pauli channels provides a simple and effective model of incoherent noise, admitting a representation in terms of a probability distribution corresponding to different Pauli errors and inheriting the rich structure of the Pauli matrices. In addition, it defines a physically relevant noise model (see e.g., \cite{harper2020efficient}) and the noise affecting a device can always be mapped into a Pauli channel by using randomized compiling~\cite{wallman2016noise} techniques without incurring a loss in fidelity. These properties make the problem of Pauli tomography, i.e., learning a Pauli channel, particularly relevant.

A popular and widely used technique to learn Pauli channels is 
randomized benchmarking and its variations~\cite{flammia2020efficient,magesan2012characterizing,francca2018approximate,helsen2022general,helsen2019new}. The reason for its popularity is that it satisfies several desirable properties: It is robust to errors both in the initial state preparation and measurements (SPAM errors), it only uses simple input states and measurements and it does not use any  auxiliary systems. The motivating question for this work is to ask whether this protocol is optimal given such properties and restrictions or if we can hope to find more efficient protocols. Thus, we derive lower bounds for protocols that learn Pauli channels that fit the setting of randomized benchmarking protocols, in the sense that we are allowed to apply a channel multiple times to an initial state and intersperse its use with unitaries (or more generally unital channels) before measuring the state. However, we are not going to consider protocols where we can perform entangled measurements on multiple outputs of the channel at the same time  or perform measurements on auxiliary systems that are entangled with the system the Pauli channel acts on~\footnote{we will allow the learner to intersperse arbitrary unital operations between uses of the channel. Strictly speaking, it is necessary to act with a unitary on a auxiliary system to implement arbitrary unital operations on a system~\cite{Mendl2009}. Note, however, that we do not need to measure the auxiliary system to obtain an arbitrary unital map, so this operation does not violate our assumptions. However, the reason we add arbitrary unital maps is for the sake of generality, our goal is to capture the more restricted setting of the application of arbitrary unitaries between the channels, which corresponds to the setting of randomized benchmarking.}. 
We refer to this class of measurements as individual measurements. Although allowing for auxiliary systems that we can measure would lead to significantly more efficient and simple protocols as we will discuss later, in practical scenarios it is unclear if it is reasonable to assume we can noiselessly entangle the qubits of a noisy device we wish to characterize to another set of qubits, potentially of the same size as the original device. Thus, even though more complicated, our setup comes from a firm practical motivation.

\paragraph{Contributions}

Given the previous discussion, we introduce a class of strategies, that we call strategies with individual measurements, that encompass some of the desirable properties we mentioned. Denote by $\mathcal{P}$  the unknown $n$-qubit Pauli channel we want to learn. In a strategy with individual measurements, the learner repeats for $t=1, \dots, N$ the following operations: choose an arbitrary $n$-qubit state $\rho_t$, apply the unknown channel $\mathcal{P}$ and then perform a measurement $\mathcal{M}_t$ of his choosing obtaining an outcome $I_t$. The estimate that is returned by the learner is then a function of $I_1, \dots, I_N$. This model is illustrated in Figure~\ref{fig:incoherent-strategies} and  captures the requirement that no auxiliary systems are allowed. Note that such strategies need not be robust and can in principle prepare states and measurements that are not simple, but as we are establishing lower bounds on the resources needed, it only makes the result stronger to allow for more strategies. We refer the reader to Section~\ref{sec:preliminaries} for a formal description of this model. {Strategies with individual measurements for Pauli channels are in direct analogy with state tomography results that do not resort to a quantum memory~\cite{chen2022tight}, a widely studied setting that, as it is the case here for Pauli settings, is motivated by practical limitations of quantum devices.}

\begin{figure}
    \centering
 \begin{quantikz}[thin lines] 
					\gategroup[wires=1,steps=4,style={rounded corners,fill=blue!20,inner xsep=9pt},background]{}\fcolorbox{black}{green!50}{$\rho_{1}$} & \gate[style={fill=red!40}]{\cP} \qw& \gate[1, style={fill=green!50}]{\cM_1} \qw   &\cw\!\rstick{\hspace{-0.2em}$I_1$}  
				\end{quantikz} 
\\\vspace{0.4em}
$\big\downarrow$ 
\\
 \begin{quantikz}[thin lines] 
					\gategroup[wires=1,steps=4,style={rounded corners,fill=blue!20,inner xsep=9pt},background]{}\fcolorbox{black}{green!50}{$\rho_{2}$} & \gate[style={fill=red!40}]{\cP} \qw& \gate[1, style={fill=green!50}]{\cM_2} \qw   &\cw\!\rstick{\hspace{-0.2em}$I_2$} 
				\end{quantikz} 
\\
$\big\downarrow \vdots$
\\
 \begin{quantikz}[thin lines] 
					\gategroup[wires=1,steps=4,style={rounded corners,fill=blue!20,inner xsep=10pt},background]{}\fcolorbox{black}{green!50}{$\rho_{N}$} & \gate[style={fill=red!40}]{\cP} \qw& \gate[1, style={fill=green!50}]{\cM_N} \qw   &\cw\rstick{\hspace{-0.3em}$I_N$} 
				\end{quantikz} 
    \caption{Illustration of {a strategy with individual measurements}. The estimated channel $\widetilde{\cP}$ is computed from $I_1, \dots, I_N$.}
\label{fig:incoherent-strategies}
\end{figure}

We provide lower bounds on the number $N$ of times $\mathcal{P}$ is used by any {strategy with individual measurements} that learns an estimate of $\mathcal{P}$ to precision $\eps > 0$ in the diamond norm. More specifically, writing $d=2^n$, we obtain the following results (summarized in Table~\ref{tab}):

\begin{itemize}
    \item We start by considering non-adaptive strategies, for which the choices $\rho_t$ and $\cM_t$ do not depend on the previous measurement outcomes $I_1, \dots, I_{t-1}$. We show that any non-adaptive  {strategy  with individual measurements} for Pauli channels has to use the channel $N \geq \Omega(d^3/\eps^2)$ times. In particular, this shows that the randomized benchmarking algorithm of \cite[Result $1$]{flammia2020efficient} is optimal since the channels we consider in our construction have a spectral gap $\Delta\ge 1-4\eps$ and thus the total number of channel uses is at most twice the number of measurements. 
    This result is stated in Theorem~\ref{thm:LBNA}. 
    For the proof, we follow similar strategies pursued for learning quantum states \cite{paninski2008coincidence,flammia2012quantum,haah2016sample,lowe2022lower} and  construct an $\eps$-separated family of Pauli channels close to the maximally depolarizing channel and use it to encode a message from $[e^{\Omega(d^2)}]$ {through the sequence of outcomes that the learning algorithm produces. The correctness of the learning algorithm ensures that this   message is  decoded correctly with the same success probability.} Hence, the encoder and decoder should share at least $\Omega(d^2)$ nats of information. On the other hand, after each step, we show that the correlation between the encoder and decoder can only increase by at most $\cO(\eps^2/d)$ nats each time the channel is used.  
    Note that the naive upper bound on this correlation is $\cO(\eps^2)$, we obtain an improvement by a factor $d$ by exploiting the randomness in the construction of the Pauli channel. Our result holds in a stronger model where the channel $\cP$ can be used $m_t$ times intertwined with arbitrary unital channels before performing the measurement $\cM_t$. In this case, the condition satisfied by any algorithm is $N \ge  \Omega(d^3/\eps^2)$ or $\sum_{t=1}^N m_t \ge  \Omega(d^4/\eps^6)$.
    \item 
    In the more general adaptive setting, we first show that any {strategy with individual measurements} for Pauli channels should satisfy $N \geq \Omega(d^2/\eps^2)$. This bound holds in the stronger model where the unknown channel $\cP$ can be used $m_t$ times before the measurement, and the bound does not depend on $m_t$.
        This result is stated in Theorem~\ref{thm: GLB}. Furthermore, our main result about adaptive strategies is a lower bound $N \geq \Omega(d^{2.5}/\eps^2)$ provided $\eps\le 1/(20d)$. This result is stated in Theorem~\ref{thm: LB-adaptive}. The structure of the proof is similar to the one for non-adaptive setting but for adaptive strategies, it is more complicated to bound the increase in the information we obtain at each step of the algorithm. 
    For this, we change the previous construction and use normalized Gaussian random variables in the Pauli channel's coefficients.  The Gaussian variables allow us to break the dependency between the probability of measurements at different steps by applying Gaussian integration by parts on an upper bound of the mutual information. With this, we show that the information that is obtained by each new step is at most $\cO(k\eps^4/d^3)$ nats at step $k$ which gives the claimed bound.
\end{itemize}
\begin{table}[t]
\centering
\caption{Lower  and upper bounds for Pauli channel tomography using individual measurements. $N$ is the total number of steps or   measurements. 
}
\begin{tabular}{  |c|c| c|} 
\hline
  \textbf{Model}  &\textbf{Lower bound} &  \textbf{Upper bound}\\
  \hline\hline
    Non-adaptive   & $N \ge  \Omega(d^3/\eps^2)$  & $N=\tilde{\cO}(d^3/\eps^2)$\\
   $\ell_1$-distance&  [this work]&   \cite{flammia2020efficient}   \\ 
   \hline
  Non-adaptive & $N \ge  \Omega(1/\eps^2)$ &$N=\tilde{\cO}(1/\eps^2)$\\
 $\ell_\infty$-distance &   \cite{flammia2021pauli}& \cite{flammia2021pauli}\\
   \hline
  Adaptive &  $N\ge \Omega(d^2/\eps^2)$  &$N=\tilde{\cO}(d^3/\eps^2)$\\
  $\ell_1$-distance&[this work]&\cite{flammia2020efficient}\\
  \hline
   Adaptive,  $\ell_1$-distance  &$N\ge \Omega(d^{2.5}/\eps^2)$   & $N=\tilde{\cO}(d^3/\eps^2)$     \\
    $\eps\le 1/(20d)$ & [this work] &\cite{flammia2020efficient} \\
    \hline 
\end{tabular}
\label{tab}
\end{table}

\paragraph{Related work} 
Learning Pauli channels has been considered in different settings. \cite{flammia2020efficient} provides an algorithm for learning Pauli channels in $\ell_2$-norm using $\Tilde{\cO}( d/\eps^2)$ measurements. This implies an upper bound of $\Tilde{\cO}( d^3/\eps^2)$ for learning Pauli channel in $\ell_1$-norm. 
This article addresses an open question posed in \cite{flammia2020efficient} about a lower bound for learning Pauli channels. As previously discussed, we show that the algorithm of \cite{flammia2020efficient} is optimal up to logarithmic factors.
Moreover, learning a Pauli channel in $\ell_\infty$-norm was shown to be solvable with $\tilde{\Theta}(1/\eps^2)$ measurements in \cite{flammia2021pauli} and this is optimal up to logarithmic factors. 
These algorithms do not use an ancilla system.
The work of \cite{chen2022quantum} shows an exponential separation between allowing and not allowing ancilla for estimating the Pauli eigenvalues in $\ell_\infty$-norm. Using the Parseval–Plancherel identity, their upper bound can be translated to learning in $\ell_1$-norm with an $n$-qubit ancilla assisted algorithm using $\tilde{\cO}(d^2/\eps^2)$  measurements. {The upper bound ${\cO}(d^2/\eps^2)$ for the $\ell_1$-norm can be achieved through Bell sampling \cite{flammia2021pauli}.} However, our lower bounds do not apply in this setting since we only consider ancilla-free strategies. We also note that~\cite{chen2022quantum} shows a lower bound of $\Omega(d^{\frac{1}{3}}/\eps^{2})$ { individual measurements}  to learn the eigenvalues of $\mathcal{P}$ in the adaptive  setting up to  $\eps$ in $\ell_\infty$-norm and $\Omega(d/\eps^{2})$ in the non-adaptive  setting. 
However, this is a different figure of merit than the one we consider. Moreover, one could argue that the diamond norm, which to the best of our knowledge was not considered before this work, provides the strongest and operationally motivated definition of learning a channel, as it is again in direct analogy with the trace distance for states~\cite{watrous2018theory,chen2022tight}.

Other noteworthy protocols to learn quantum channels include gate set tomography~\cite{gate_set_tomo} and techniques based on compressed sensing~\cite{PhysRevLett.121.170502}. Although they apply to more general classes of channels, they do not offer quantitative or qualitative advantages over randomized benchmarking in the setting of Pauli channels. We refer the readers to the survey \cite{montanaro2013survey} for results on testing quantum channels and to~\cite{Eisert2020} for quantum channel learning. 
For the state tomography problem, it is known that the optimal copy complexity for {strategies with individual measurements} in both adaptive and non-adaptive settings is $\Theta(d^3/\eps^2)$ \cite{haah2016sample,chen2022tight}. 
 In contrast, for  non-adaptive {strategies with individual measurements}, quantum channel tomography in the diamond norm can be done using $\Tilde{\Theta}\left(d^6/\eps^2\right)$ copies \cite{surawy2022projected,oufkir-sample-optimal}. 
However, if we add the Pauli structure to the channel, our lower bound along with the upper bound of \cite{flammia2020efficient} show that the optimal complexity is the same as state tomography complexity.

\section{Preliminaries} 
\label{sec:preliminaries}

Let $ \mathds{N}^*$  be the set of positive integers and $n\in \mathds{N}^*$. Let $d  \coloneqq  2^n$ be the dimension of an $n$-qubit system. We use the notation $[d] \coloneqq  \{1,\dots,d\}$. We adopt the bra-ket notation: a column vector is denoted $\ket{\phi}$ and its adjoint is denoted $\bra{\phi}=\ket{\phi}^\dagger$. With this notation, $\spr{\phi}{\psi}$ is the dot product  of the vectors $\phi$ and $\psi$ and, for a unit vector $\ket{\phi}$,  $\proj{\phi}$ is the rank-$1$ projector on the space spanned by the vector $\phi$. The set of unit vectors is denoted $\cS^d\coloneqq\{\ket{\phi} \in \mathds{C}^{d} : \spr{\phi}{\phi}=1 \}$. The canonical basis $\{e_i\}_{i\in [d]}$ is denoted $\{\ket{i}\}_{i\in [d]} \coloneqq  \{\ket{e_i}\}_{i\in [d]}$. A quantum state is a positive semi-definite Hermitian matrix of trace $1$. We will denote the identity matrix by $\mathds{I}_d\in\mathds{C}^{d\times d}$ and by $\id_d: \mathds{C}^{d\times d}\rightarrow \mathds{C}^{d\times d}$ the identity map. We will omit the the $d$ subscript if the dimension is clear from context. A quantum channel is a map $\cN: \mathds{C}^{d\times d}\rightarrow \mathds{C}^{d\times d}$ of the form $\cN(\rho)=\sum_{k}A_k \rho A_k^\dagger$ where the Kraus operators $\{A_k\}_k$ satisfy $\sum_k A_k^\dagger A_k=\mathds{I}$. 
A map $\cN$ is a quantum channel if, and only if, it is:
\begin{itemize}
    \item \textbf{completely positive:} for all $\rho\in \mathds{C}^{d^2\times d^2},\rho\succcurlyeq 0$, $[\id_d\otimes \cN](\rho)\succcurlyeq 0$ and
    \item \textbf{trace preserving:} for all $\rho\in \mathds{C}^{d\times d}$, $\tr(\cN(\rho))=\tr(\rho)$.
\end{itemize}
If the quantum channel $\cN$ satisfies further $\cN(\mathds{I})=\mathds{I}$, it is called \textit{unital}.

We define the diamond distance between two quantum channels $\cN$ and $\cM$ as the diamond norm of their difference:
\begin{align*}
    \|\cN-\cM\|_\diamond \coloneqq  \max_{\phi: \spr{\phi}{\phi}=1}\|[\id_d\otimes(\cN-\cM)](\proj{\phi}) \|_1
\end{align*}
where the the Schatten $1$-norm of a matrix $M$ is defined as $\|M\|_1 \coloneqq \tr\left(|M|\right)$ and $|M|\coloneqq \sqrt{M^\dagger M}$.

Pauli channels are quantum channels whose Kraus operators are weighted Pauli operators. Formally, an $n$-qubit Pauli channel $\cP$ can be written as follows:
\begin{align}\label{equ:pauli_channel}
    \cP(\rho)= \sum_{P\in \{\mathds{I},X,Y,Z\}^{\otimes n} } p(P) P \rho P
\end{align}
where the Pauli matrices \[\mathds{I}=\begin{pmatrix}
1 & 0 \\
0& 1
\end{pmatrix}, X=\begin{pmatrix}
0 & 1 \\
1& 0
\end{pmatrix}, Y=\begin{pmatrix}
0 & - \mathrm{i} \\
\mathrm{i} & 0
\end{pmatrix}, Z=\begin{pmatrix}
1 & 0 \\
0 & -1
\end{pmatrix} \] and $\{p(P)\}_{P\in \{\mathds{I},X,Y,Z\}^{\otimes n}}$ is a probability distribution. 
Let $\mathds{P}_n= \{\mathds{I},X,Y,Z\}^{\otimes n}$ be the set of Pauli operators. The elements of $\mathds{P}_n$ either commute or anti commute. Let $P$ and $Q$ be two Pauli operators, we have $PQ=(-1)^{P {\circ} Q}QP$ where $P{\circ}Q=0$ if $[P,Q]=0$ and $P{\circ}Q=1$ otherwise. 

We consider the Pauli channel tomography problem which consists of learning a Pauli channel in the diamond norm. Given a precision  parameter $\eps>0$, the goal is to construct a  Pauli channel $\widetilde{\cP}$ satisfying with at least a probability $2/3$:
\begin{align*}
    \|\cP-\widetilde{\cP}\|_\diamond\le \eps.
\end{align*}
An algorithm $\cA$ takes as input {$n$, is given a black-box (a.k.a.\ oracle) implementation of an unknown Pauli channel }
 $\cP$ and outputs a classical description of a Pauli channel $\widetilde{\cP}$. The
 algorithm $\cA$ is $1/3$-correct for this problem if it outputs a  Pauli channel $\widetilde{\cP}$ that is  $\eps$-close to $\cP$ with a probability of error at most $1/3$.
We choose to learn in the diamond norm because it characterizes the minimal error probability to distinguish between two quantum channels  when auxiliary systems are allowed \cite{watrous2018theory}. Since the diamond norm between two Pauli channels is exactly twice the $\TV$-distance between their corresponding probability distributions \cite{magesan2012characterizing}, approximating the Pauli channel $\cP$ in diamond norm is equivalent to approximating the probability distribution $p$ in $\TV$-distance. The latter is defined for two probability distributions $p$ and $q $ on $[d]$ as follows:
\begin{align*}
    \TV(p,q) \coloneqq  \frac{1}{2}\sum_{i=1}^d |p_i-q_i|.
\end{align*}

The learner can only extract classical information from the unknown $n$-qubit Pauli channel $\cP$ by performing a measurement on the output state. Throughout the paper, we only consider unentangled or {individual}  measurements. That is, the learner can only measure with an $n$-qubit measurement device and auxiliary qubits or {processing entangled multiple copies of the unknwon channel at once (i.e., $\cP^{\otimes n}$ for $n\in \mathds{N}^*$)}   are not allowed. This restriction is natural for the problem at hand, given that performing measurements on multiple copies requires a quantum memory, which is currently unavailable in the vast majority of experimental platforms.

More precisely, an $n$-qubit measurement is defined by a POVM (positive operator-valued measure) with a finite number of elements: this is a set of positive semi-definite matrices $\mathcal{M}=(M_i)_i $ acting on the Hilbert space $\mathds{C}^{2^n}$ and satisfying $\sum_i M_i=\mathds{I}$. The element $M_i$ in the POVM $\mathcal{M}$ is associated with the outcome $i$. The tuple $(\tr(\rho M_i))_i$ is non-negative and sums to $1$: it thus defines a probability distribution. Born's rule \cite{1926ZPhy...37..863B} says that the probability that the measurement on a quantum state $\rho$ using the POVM $\cM$ will output $i$ is exactly $\tr(\rho M_i)$. 

For an integer $t\ge 1$, we say that the learner is at step $t$ if it has already performed $t-1$ measurements. With this definition, the total number of steps is exactly the total number of measurements. However, depending on the setting, the total number of channel uses could be different than the total number of steps. The goal of the paper is to show lower bounds on the total number of steps as well as the total number of the channel uses.

A simple example we can propose to see the effect of reusing the channel is the following test: $H_0: \cP(\rho)=\rho$ vs $H_1: \cP(\rho)= (1-\eps)\rho+ \eps\tr(\rho)\frac{\mathds{I}}{d}$. We can choose as input the rank one state $\rho= \proj{0}$. Under the null hypothesis $H_0$, the channel does not affect the state $\proj{0}$. On the other hand, under $H_1$, if we apply the channel $\cP$ a number  $m\in \mathds{N}^* $ times the resulting quantum state is $\cP^{(m)}(\rho)=(1-\eps)^m\proj{0}+\left(1-(1-\eps)^m\right)\frac{\mathds{I}}{d}$. Hence, if we measure with the POVM $\cM= \{\proj{0}, \mathds{I}-\proj{0}\}$ of outcomes $0$ and $1$ respectively, under $H_0$ we will always see $0$ while under $H_1$, we will see $0$ with probability roughly $(1-\eps)^m$. Therefore, we can achieve a probability of error at most $\delta$ with only \textit{one measurement} but the channel is reused $\log(1/\delta)/\eps$-times\footnote{all the logs are taken in base $\mathrm{e}$ so the information is measured in ‘‘nats''.}. However, if we do not allow the application of the channel to the same register, then the number of measurements needed is  approximately  $\log(1/\delta)/\eps$.

For a tuple $I = (I_1, \dots, I_N)$, we denote, for $k \in [N]$, $I_{\leq k} = I_{< k+1} = (I_1, \dots, I_k)$. 


\section{A general lower bound on the number of steps required for Pauli channel tomography}

In this section, we consider the problem of learning a Pauli quantum channel using {individual} measurements. Unlike the usual state tomography problem for which at each step the learner can only choose the measurement device, for quantum channels, the learner has additional choices. First, in every setting, the learner can choose the input quantum state at each step. This choice can be done in an adaptive fashion: the input quantum state at a given step can be chosen depending on the previous observations (and of course the previous input states and POVMs). Second, the learner has the ability to reuse the Pauli quantum channel as much as it wants before performing the measurement. This is specific to quantum process tomography too since for state tomography using {individual} measurements, once a measurement is performed, the post-measurement quantum state is usually useless. Finally, the learner can intertwine arbitrary unital quantum channels and the unknown Pauli quantum channel before measuring the output of this (possibly long) sequence of quantum channels. 
We propose a lower bound on the number of steps required for the Pauli channel tomography problem in this general setting. 

Recall that  Pauli channel tomography problem is equivalent to  learning the probability $p$ in the $\TV$-distance. 
\begin{figure*}
    \centering
\begin{quantikz}[thin lines] 
\gategroup[wires=1, steps=10,style={rounded corners,fill=blue!20,inner xsep=13pt},background]{}\lstick{$\rho_1$} & \gate[style={fill=red!40}]{\cP} \qw  & \gate[style={fill=green!50}]{\cN_1}\qw & \gate[style={fill=red!40}]{\cP} \qw &\qw~~\dots~~ & \gate[style={fill=red!40}]{\cP} \qw  &  \gate[style={fill=green!50}]{\cN_{m_1-1}} \qw& \gate[style={fill=red!40}]{\cP}\qw & \gate[style={fill=green!50}]{\cM_{1}}   &\cw \rstick{\hspace{-0.2em}$I_1$}  
\end{quantikz}
\\
\vspace{0.22em}
$\big\downarrow (I_1)$
\\
\begin{quantikz}[thin lines] 
\gategroup[wires=1,steps=10,style={rounded corners,fill=blue!20,inner xsep=14pt},background]{}\lstick{$\rho_{2}^{I_{1}}$} & \gate[style={fill=red!40}]{\cP} \qw  & \gate[style={fill=green!50}]{\cN_1^{I_{1}}}\qw & \gate[style={fill=red!40}]{\cP} \qw &\qw~~\dots~~ & \gate[style={fill=red!40}]{\cP} \qw  &  \gate[style={fill=green!50}]{\cN_{m_2-1}^{I_{1}}} \qw& \gate[style={fill=red!40}]{\cP}\qw & \gate[style={fill=green!50}]{\cM_{2}^{I_{1}}}   & \rstick{\hspace{-0.2em}$I_2$} \cw
\end{quantikz} 
\\
$\big\downarrow \vdots$
\\
\begin{quantikz}[thin lines] 
\gategroup[wires=1,steps=10,style={rounded corners,fill=blue!20,inner xsep=33pt},background]{}\lstick{$\rho_{N-1}^{I_{<N-1}}$\hspace{-0.2em}} & \gate[style={fill=red!40}]{\cP} \qw  & \gate[style={fill=green!50}]{\cN_1^{I_{<N-1}}}\qw & \gate[style={fill=red!40}]{\cP} \qw &\qw~~\dots~~ & \gate[style={fill=red!40}]{\cP} \qw  &  \gate[style={fill=green!50}]{\cN_{m_{N-1}-1}^{I_{<N-1}}} \qw& \gate[style={fill=red!40}]{\cP}\qw & \gate[style={fill=green!50}]{\cM_{N-1}^{I_{<N-1}}}   & \rstick{\hspace{-0.2em}$I_{N-1}$} \cw
\end{quantikz}
\\\vspace{0.22em}
$\big\downarrow I_{<N}=(I_1,\dots, I_{N-1})$
\\
\begin{quantikz}[thin lines] 
\gategroup[wires=1,steps=10,style={rounded corners,fill=blue!20,inner xsep=29pt},background]{}\lstick{$\rho_{N}^{I_{<N}}$\hspace{-0.2em}} & \gate[style={fill=red!40}]{\cP} \qw  & \gate[style={fill=green!50}]{\cN_1^{I_{<N}}}\qw & \gate[style={fill=red!40}]{\cP} \qw &\qw~~\dots~~ & \gate[style={fill=red!40}]{\cP} \qw  &  \gate[style={fill=green!50}]{\cN_{m_{N}-1}^{I_{<N}}} \qw& \gate[style={fill=red!40}]{\cP}\qw & \gate[style={fill=green!50}]{\cM_{N}^{I_{<N}}}   & \rstick{\hspace{-0.2em}$I_{N}$} \cw
\end{quantikz}
    \caption{Illustration of an adaptive strategy for learning Pauli channel.}
    \label{fig: adaptive }
\end{figure*}
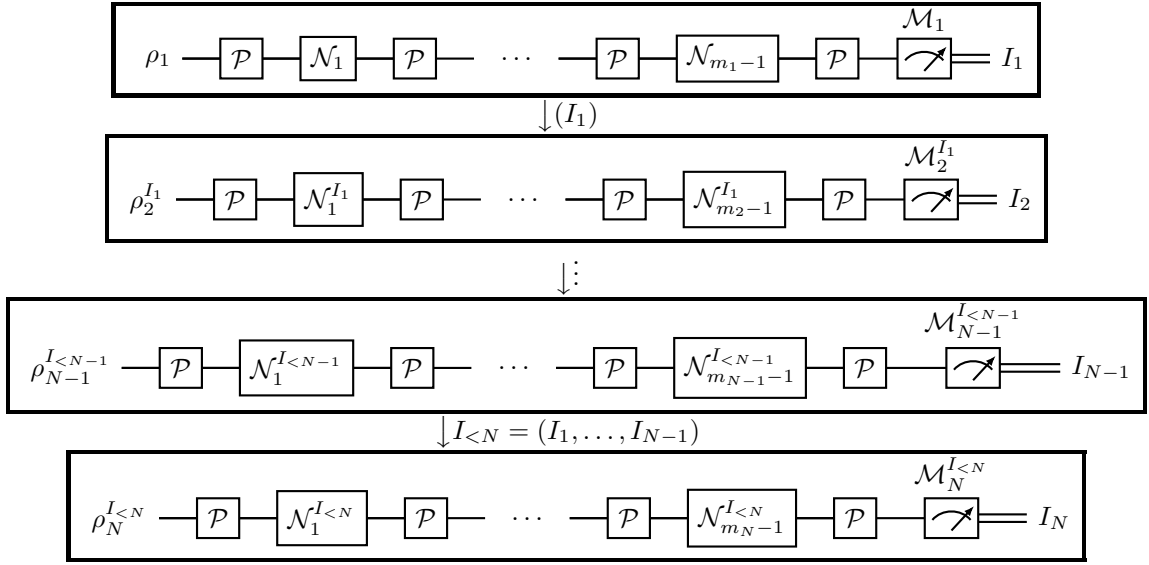

\begin{definition}Let $\cP$ be a Pauli channel and let $N$ be a sufficient number of steps to learn $\cP$ as defined in \eqref{equ:pauli_channel}.  
    At step $t\in [N]$, an adaptive {strategy with individual measurements} has the ability to choose an input quantum state $\rho_t$, the number $m_t\ge 1$ of uses of the quantum channel $\cP$, the unital quantum channels applied in between $\cN_1,\dots,\cN_{m_t-1}$ and the POVM $\cM_t$ for measuring the output quantum state $\rho^{\text{output}}_t$:
\begin{align*}
    \rho^{\text{output}}_t= \underbrace{\cP \circ \cN_{m_t-1}\circ\cP \circ \dots \circ \cP \circ \cN_{1} \circ \cP (\rho_t)}_{\cP \text{ is applied } m_t \text{ times}}.
\end{align*}
All these elements can be chosen adaptively: the choice of $m_t,\rho_t, \cN_1,\dots,\cN_{m_t-1}$ and $\cM_t$ can depend on the previous observations $I_1,\dots, I_{t-1}$ (see Fig.~\ref{fig: adaptive } for an illustration). However, to not overload the expressions we do not add the subscript $I_1,\dots, I_{t-1}$ on $m_t, \rho_t, \cN_1,\dots,\cN_{m_t-1}$ or $\cM_t$. By Born's rule, performing a measurement on the output quantum state $\rho^{\text{output}}_t$ using the POVM $\cM_t=\{M^t_i\}_{i\in \cI}$ is equivalent to sampling from the probability distribution 
\begin{align*}
    x_t \sim  \{\tr(\rho^{\text{output}}_tM^t_i)\}_{i\in \cI}.
\end{align*}
The observations $(x_1, \dots, x_N)$ are used to construct a probability distribution $\hat{p}$ on the set of Pauli operators $\mathds{P}_n$ satisfying with a probability at least  $2/3$:
\begin{align*}
    \TV(p, \hat{p}) \le \eps.
\end{align*}
\end{definition}
Note that unital operations cannot be used to prepare a new state and thus have a free step. 
In fact, applying a unital operation after a noisy Pauli channel cannot prepare a rank-$1$ state for example.
We propose the following lower bound on the number of steps $N$.
\begin{theorem}\label{thm: GLB}
The problem of Pauli channel tomography using  ancilla-free {individual} measurements requires a number of steps satisfying:
\begin{align*}
    N\ge \Omega\left( \frac{d^2}{\eps^2}\right).
\end{align*}
\end{theorem}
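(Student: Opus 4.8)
The plan is to use the information-theoretic reduction sketched in the introduction: build an $\eps$-separated family of Pauli channels indexed by a large message set, argue that a correct learner implicitly decodes that message, invoke Fano to get a lower bound on the mutual information between the message and the transcript, and finally upper-bound the per-step gain of mutual information, even when the channel is reused many times and interspersed with unital maps.

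\textbf{Step 1: Construction of the packing.} I would place the hard instances near the maximally depolarizing channel. Concretely, index instances by a sign vector or subset $s$ and set $p_s(P) = \tfrac{1}{d^2}\bigl(1 + \eps\, c_s(P)\bigr)$ for $P \neq \mathds{I}$ (and adjust $p_s(\mathds{I})$ to normalize), where the $c_s(P) \in \{\pm 1\}$ are chosen so that (a) each $p_s$ is a genuine probability distribution close to uniform, (b) any two distinct $p_s, p_{s'}$ satisfy $\TV(p_s,p_{s'}) > 2\eps$, i.e. the corresponding channels are $4\eps$-separated in diamond norm, and (c) the family has size $e^{\Omega(d^2)}$. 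A random $\{\pm1\}^{d^2}$ choice of the $c_s$, or a linear code, gives all three by a standard Gilbert–Varshamov / Chernoff argument: the number of Pauli labels is $d^2$, so we get $\Omega(d^2)$ nats of message entropy. I would take $s$ uniform, so $I(S; \text{transcript}) \ge H(S) - h(1/3) - (1/3)\log|S| = \Omega(d^2)$ by Fano once the learner is $1/3$-correct (correctness lets us decode $S$ from the output hypothesis $\hat{\cP}$ since the family is $4\eps$-separated).

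\textbf{Step 2: Per-step information gain.} Let $I_{<t} = (I_1,\dots,I_{t-1})$ be the outcomes so far and let $I_t$ be the outcome of step $t$. By the chain rule $I(S; I_{\le N}) = \sum_t I(S; I_t \mid I_{<t})$, so it suffices to show each term is $\cO(\eps^2/d)$; combined with Step 1 this forces $N = \Omega(d^2/\eps^2)$. Fix the conditioning $I_{<t}=i$; then the learner's choices $(\rho_t, m_t, \cN_1,\dots,\cN_{m_t-1}, \cM_t)$ are fixed, and the output distribution over $I_t$ is $\{\tr(M_i^t \, \Phi_s(\rho_t))\}_i$ where $\Phi_s = \cP_s \circ \cN_{m_t-1} \circ \cdots \circ \cN_1 \circ \cP_s$. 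The key structural observation is that each $\cN_j$ is unital and each $\cP_s$ is a small perturbation of the completely depolarizing channel $\cD(\rho) = \tr(\rho)\mathds{I}/d$: writing $\cP_s = \cD + \eps\, \cE_s$ with $\cE_s$ a fixed-norm "error" map that is traceless on the identity component, composing two copies of $\cP_s$ with a unital map in between gives $\cP_s \cN \cP_s = \cD + \eps(\cE_s \cN \cD + \cD \cN \cE_s) + \eps^2 \cE_s \cN \cE_s$; but $\cE_s \cD$ and $\cD \cE_s$ vanish on the relevant subspace because $\cD$ projects onto the maximally mixed state and $\cE_s$ maps it to $0$ (the perturbation conserves the total probability). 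Hence the $\eps^1$ term disappears after the \emph{second} application of $\cP_s$, and $\Phi_s(\rho_t) = \mathds{I}/d + \eps^{m_t}(\text{something bounded})$ when $m_t \ge 1$ — more precisely the deviation of $\Phi_s(\rho_t)$ from $\mathds{I}/d$ is $\cO(\eps^{m_t})$ in trace norm, uniformly in $s$ and in the unital channels. Then a standard $\chi^2$/KL bound for the mutual information of a channel output whose distributions all lie within $\cO(\eps^{m_t})$ of a common point yields $I(S; I_t \mid I_{<t} = i) = \cO(\eps^{2 m_t}) = \cO(\eps^2)$, the naive bound; to get the claimed $\cO(\eps^2)$ per step (which already suffices for the theorem, since $m_t \ge 1$) one only needs $m_t \ge 1$. (The sharper $\cO(\eps^2/d)$ refinement — exploiting randomness of $s$ rather than worst-casing — is what powers the stronger $d^{2.5}$ and $d^3$ bounds elsewhere; for Theorem~\ref{thm: GLB} the plain $\cO(\eps^2)$ bound, summed and compared with $\Omega(d^2)$, already gives $N = \Omega(d^2/\eps^2)$.)

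\textbf{Step 3: Assembling.} Combine: $\Omega(d^2) \le I(S; I_{\le N}) = \sum_{t=1}^N I(S; I_t \mid I_{<t}) \le N \cdot \cO(\eps^2)$, hence $N = \Omega(d^2/\eps^2)$. I expect the main obstacle to be Step 2's claim that reusing the channel and inserting unital maps genuinely cannot boost the signal beyond the single-use level — one must argue carefully that (i) a unital channel applied to a state that is $\eps^k$-close to $\mathds{I}/d$ keeps it $\eps^k$-close (contractivity of trace norm under channels, which is fine), and (ii) applying the perturbed $\cP_s$ again does not "refresh" a first-order term; the bookkeeping requires expressing $\cP_s$ in the Pauli-transfer-matrix basis, where it is diagonal with eigenvalues $\lambda_s(Q) = 1 - \cO(\eps)$ for $Q \neq \mathds{I}$ and $\lambda_s(\mathds{I}) = 1$, and checking that a unital channel is block-diagonal preserving the $\mathds{I}$-block, so that the $(\lambda_s(Q)-\lambda_0(Q))$ deviations only ever multiply components that have already been damped — this is where the $\eps^{m_t}$ accumulation, and in particular the disappearance of the free first-order term, must be made rigorous. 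Controlling the mutual information in terms of the trace-norm deviation (via, e.g., $I(S;I_t\mid I_{<t}) \le \ex{\KL(\cdot\|\text{avg})}$ and a quadratic bound on KL near a common distribution, being careful that POVM outcomes can have tiny probabilities) is the other technical point, handled by a Pinsker-type or $\chi^2$ argument with the standard truncation.
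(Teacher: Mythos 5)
Your overall architecture --- a packing of size $e^{\Omega(d^2)}$ near the completely depolarizing channel, Fano, the chain rule, and the observation that $\cP_s = \cD + \eps\,\cE_s$ with the first-order cross terms vanishing so that the deviation after $m_t$ uses is $\cO(\eps^{m_t})$ --- is exactly the paper's route (their Lemmas on Fano, on the conditional mutual information, and on the $m$-fold composition). However, Step 2 contains a genuine quantitative gap: you control the deviation $\Delta = \Phi_s(\rho_t) - \mathds{I}/d$ only in \emph{trace norm}, and a trace-norm (equivalently TV) bound of order $\eps^{m_t}$ does not imply $I(S;I_t\mid I_{<t}) = \cO(\eps^{2m_t})$. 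The per-step information is controlled by the $\chi^2$-type quantity $\sum_{i}\frac{\lambda_i}{d}u_i^2$ with $u_i = \bra{\phi_i}\left(d\Phi_s(\rho_t)-\mathds{I}\right)\ket{\phi_i} = d\bra{\phi_i}\Delta\ket{\phi_i}$. From $\|\Delta\|_1 \le C\eps^{m_t}$ alone one only gets $|u_i| \le d\|\Delta\|_\infty \le d\|\Delta\|_1$ and $\sum_i\frac{\lambda_i}{d}|u_i|\le\|\Delta\|_1$, hence $\sum_i\frac{\lambda_i}{d}u_i^2 \le d\,C^2\eps^{2m_t}$ --- a factor of $d$ too large, which would only yield $N = \Omega(d/\eps^2)$. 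Your own caveat about ``tiny POVM probabilities'' is precisely this obstruction: reverse-Pinsker/$\chi^2$ bounds from TV-closeness carry a $1/q_{\min}$ blowup, and here the reference probabilities $\lambda_i/d$ can be of order $1/d$, so no truncation of the standard kind closes the gap.

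The missing ingredient is a \emph{pointwise} likelihood-ratio bound $|u_i| \le (4\eps)^{m_t}$, i.e., an operator-norm bound $\|\Delta\|_\infty \le (4\eps)^{m_t}/d$ that is a factor $d$ stronger than what trace-norm bookkeeping gives. This is where the Pauli structure enters irreplaceably: for the \emph{outermost} error map one has $|\bra{\phi}\cE_s(X)\ket{\phi}| \le \frac{1}{d^2}\sum_{P\in\mathds{P}_n}\bra{\phi}P|X|P\ket{\phi} = \frac{1}{d}\tr|X|$ by the twirl identity $\sum_{P\in\mathds{P}_n}PYP = d\tr(Y)\mathds{I}$; the Pauli average flattens the operator, so every matrix element is of order $\tr|X|/d$ rather than $\tr|X|$. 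Your trace-norm accounting is the right tool for the $m_t-1$ inner applications (each $\cE_s$ contracts $\|\cdot\|_1$ by $\cO(\eps)$ and unital channels are $\|\cdot\|_1$-contractive), but the last application must be converted into this operator-norm statement. Two smaller points: (i) normalizing by adjusting $p_s(\mathds{I})$ alone makes $p_s(\mathds{I})$ negative with high probability once $\eps \gg 1/d$, since the signed sum $\sum_{P\neq\mathds{I}}c_s(P)$ is typically of order $d$; the paper instead pairs the Paulis by a matching $\sigma$ and sets $c_s(\sigma(P)) = -c_s(P)$, enforcing $\sum_P c_s(P)=0$ exactly; (ii) that exact cancellation is also what guarantees $\cE_s(\mathds{I})=0$, which your first-order cancellation relies on.
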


This theorem shows that no matter how often the learner reuses the quantum Pauli channel intertwined with other unital quantum channels on each step, the global number of steps should be exponential in the number of qubits. This can be explained by the fact that a Pauli channel adds noise to the input state, so reapplying it makes the input state more noisy and can't help to extract more information. Although, as we remark later, this lower bound is weaker in the dependency on the dimension $d$ compared to the non-adaptive case, it has the particularity of not depending on the number of uses of the Pauli channel. 
\begin{proof}\label{app: proof of thm glb}
We will break down the proof into several steps, which we outline below:
\paragraph{Construction of the family $\cF$}\label{app:construction}
We start by describing a general construction of a big family $\cF=\{\cP_x\}_{x\in[M]}$ constituted of  quantum Pauli channels satisfying for all $x\neq y \in [M]: \TV(p_x,p_y)\ge \eps$, we say that the family $\cF$ is $\eps$-separated. These quantum channels have the form for $x\in [M]$: 
\begin{align}\label{form}
    \cP_x(\rho)&= \sum_{P\in \{\mathds{I},X,Y,Z\}^{\otimes n} } p_x(P) P \rho P\notag
    \\&=\sum_{P\in \{\mathds{I},X,Y,Z\}^{\otimes n} } \left(\frac{1+4\alpha_x(P)\eps}{d^2} \right) P \rho P
\end{align}
where $\alpha_x(P)=\pm 1$ are chosen randomly so that  $\alpha_x(P)=-\alpha_x(\sigma(P))$ for some perfect matching $\sigma$ of $\{\mathds{I},X,Y,Z\}^{\otimes n}$. The latter condition ensures  $\sum_{P\in \mathds{P}_n} \alpha_x(P)=0$. Hence, $\cP_x$  is a valid quantum channel for $\eps\le 1/4$.  

Suppose that we have already constructed an $\eps$-separated  family of Pauli quantum channels $\cF=\{\cP_x\}_{x\in[M]}$ of cardinality $M$. We show that we can add another element to this family as long as $M<e^{cd^2}$ for some sufficiently small constant $c$. For this, we choose $\alpha(P)=-\alpha(\sigma(P))=\pm 1$ with probability 1/2, independently for each edge $\{P, \sigma(P)\}$ in the matching. 
This $\alpha$ leads to a quantum channel $\cP(\rho)=\sum_{P\in \{\mathds{I},X,Y,Z\}^{\otimes n} } \left(\frac{1+4\alpha(P)\eps}{d^2} \right) P \rho P$.  
Then, we control the probability that the corresponding Pauli quantum channel isn't $\eps$-far from the family $\cF$. We denote the set of edges in the matching $\sigma$ by $\mathds{P}_n/\sigma$. 
By the union bound and Chernoff-Hoeffding inequality \cite{hoeff}:

\begin{align*}
    \pr{\exists \cP_x\in \cF :  \TV(p,p_x) <\eps }
    &\le\sum_{x=1}^M  \pr{ \sum_{P\in \mathds{P}_n } |p(P)-p_x(P)|< 2\eps }
      =
     \sum_{x=1}^M  \pr{ \sum_{P\in \mathds{P}_n } 4|\alpha(P)-\alpha_x(P)|< 2d^2}
    \\& =  \sum_{x=1}^M  \pr{ \sum_{P\in \mathds{P}_n} \mathbb{1}_{\alpha(P)\neq \alpha_x(P)}< \frac{d^2}{4}}
    =
    \sum_{x=1}^M  \pr{ \frac{2}{d^2}\sum_{ \{P, \sigma(P)\}\in \mathds{P}_n/\sigma}  \mathbb{1}_{\alpha(P)\neq \alpha_x(P)}  < \frac{1}{4}}
    \\&\le  \sum_{x=1}^M  \exp(-2(d^2/2)(1/4)^2  )
    =M \exp(-d^2/16  )
\end{align*}
which is strictly smaller than $1$ if $M<e^{d^2/16}$.  
So far, we have proven the following lemma:
\begin{lemma}
 There exists an $\eps$-separated family $\cF$ of quantum Pauli channels of the form in \eqref{form} and size at least $e^{d^2/16}$.
\end{lemma}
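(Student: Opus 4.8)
The plan is to construct $\cF$ greedily, adjoining one channel at a time and invoking the probabilistic method to show each stage can be completed. First I would fix once and for all a fixed-point-free involution $\sigma$ on the $d^2$ Pauli strings $\mathds{P}_n$ (such an involution exists since $d^2$ is even: for instance act on the first tensor factor by swapping $\mathds{I}\leftrightarrow X$ and $Y\leftrightarrow Z$). Imposing $\alpha(P)=-\alpha(\sigma(P))$ then automatically yields $\sum_{P\in\mathds{P}_n}\alpha(P)=0$, so that for $\eps\le 1/4$ every sign vector $\alpha\in\{\pm1\}^{\mathds{P}_n}$ satisfying this constraint defines a legitimate Pauli channel of the form~\ref{form}; moreover $\alpha$ is determined by its values on one representative of each $\sigma$-pair, i.e. by $d^2/2$ free $\pm1$ bits.

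Second, I would record the elementary correspondence between $\eps$-separation and Hamming weight. For two channels with sign vectors $\alpha_x,\alpha_y$ we have $|p_x(P)-p_y(P)|=\tfrac{8\eps}{d^2}\,\mathbb{1}_{\alpha_x(P)\ne\alpha_y(P)}$, hence $\TV(p_x,p_y)=\tfrac{4\eps}{d^2}\,\bigl|\{P:\alpha_x(P)\ne\alpha_y(P)\}\bigr|$; and since a disagreement at $P$ forces one at $\sigma(P)$, this disagreement set is a union of $\sigma$-pairs, so its size equals $2\,\bigl|\{P\in\mathds{P}_n/\sigma:\alpha_x(P)\ne\alpha_y(P)\}\bigr|$. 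Thus $\TV(p_x,p_y)\ge\eps$ is equivalent to having at least $d^2/8$ disagreements among the $d^2/2$ free coordinates.

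Third comes the inductive step. Suppose an $\eps$-separated family of size $M$ has already been built. Sample a fresh $\alpha$ by tossing $d^2/2$ independent fair coins on the free coordinates (extended antisymmetrically via $\sigma$). For each fixed family member $\alpha_x$, the free-coordinate disagreements are $d^2/2$ i.i.d. $\mathrm{Bernoulli}(1/2)$ variables, so Chernoff--Hoeffding gives $\pr{\TV(p,p_x)<\eps}\le\exp\!\bigl(-2\cdot(d^2/2)\cdot(1/4)^2\bigr)=e^{-d^2/16}$. A union bound over the $M$ members yields $\pr{\exists x:\TV(p,p_x)<\eps}\le Me^{-d^2/16}<1$ whenever $M<e^{d^2/16}$, so some realization of $\alpha$ is $\eps$-far from the whole family and can be adjoined. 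Iterating until the size reaches $\lceil e^{d^2/16}\rceil$ produces the desired family --- which is exactly what the displayed computation above implements.

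I do not anticipate a genuine obstacle: this is a standard greedy packing argument via the probabilistic method. The only delicate points are the factor-of-two bookkeeping when translating $\TV$-distance into the disagreement count under the antisymmetry constraint, and confirming that the construction really exposes $d^2/2$ independent fair bits so that Hoeffding applies with exponent $d^2/16$; both are routine.
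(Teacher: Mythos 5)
Your proposal is correct and follows essentially the same route as the paper: a greedy probabilistic-method packing argument, with the antisymmetry constraint $\alpha(P)=-\alpha(\sigma(P))$ reducing to $d^2/2$ independent fair bits, the translation of $\TV\ge\eps$ into a Hamming-disagreement threshold, and Chernoff--Hoeffding plus a union bound giving the same $e^{-d^2/16}$ exponent and size bound. The only (welcome) addition is that you exhibit the fixed-point-free involution explicitly, which the paper leaves implicit.
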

Hence, we can use this family to encode a message  $X\sim \unif\{[M]\}$  to the sequence of  outcomes produced by  the learning algorithm
 when provided with the quantum Pauli channel $\cP= \cP_{X}$.  More precisely, the learning algorithm 
chooses its inputs states and performs {individual} measurements possibly after many uses of the channel $\cP_{X}$ intertwined with arbitrary unital quantum channels, and
observes a sequence of outcomes that will be transmitted to the decoder.
Upon receiving this sequence of outcomes, the decoder runs the data-processing part of the learning algorithm to  produce a Pauli quantum channel $\hat{\cP}$ corresponding to a probability distribution $\hat{p}$ satisfying, with a probability at least $2/3$, $\TV(\hat{p},p_X)\le \eps/2$. Since the family of probability distributions $\{p_x\}_{x\in [M]}$ is $\eps$-separated, there is only one $\hat{X}$ such that $\TV(\hat{p},p_{\hat{X}})\le \eps/2$. 
Therefore a $1/3$-correct algorithm can decode with a probability of failure at most $1/3$. By Fano's inequality, the encoder and decoder should share at least $\Omega(\log(M))\ge \Omega(d^2)$ nats of information.
\begin{lemma}[\cite{fano1961transmission}]\label{inequ-info-lb}
The mutual information between the index of the actual channel $X$  and the estimated index $\hat{X}$ is at least
 \begin{align*}
\cI(X: \hat{X}) \ge (2/3) \log(M) -\log(2)\ge \Omega(d^2).
\end{align*}
\end{lemma}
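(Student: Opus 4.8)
The plan is to invoke the standard Fano inequality for the channel $X \to \hat X$ induced by the encoding--decoding scheme just described. Since $X \sim \unif{\llbracket 1,M\rrbracket}$ we have $H(X) = \log M$, so that $\cI(X:\hat X) = H(X) - H(X\mid \hat X) = \log M - H(X\mid \hat X)$, and it remains to upper bound the conditional entropy $H(X\mid \hat X)$ in terms of the error probability of the decoder.

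First I would make the decoder explicit. A $1/3$-correct algorithm outputs, with probability at least $2/3$, a Pauli channel $\hat{\cP}$ whose distribution $\hat p$ satisfies $\TV(\hat p,p_X)\le \eps/2$. Because the family $\{p_x\}_{x\in\llbracket 1,M\rrbracket}$ is $\eps$-separated, the triangle inequality shows that at most one index $y$ can satisfy $\TV(\hat p, p_y)\le \eps/2$; set $\hat X$ to be that index when it exists and an arbitrary fixed value otherwise. This $\hat X$ is a (possibly randomized) function of the algorithm's transcript, hence a legitimate estimator, and the event $\{\TV(\hat p,p_X)\le \eps/2\}$ forces $\hat X = X$. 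Therefore $P_e := \pr{\hat X\ne X}\le 1/3$.

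Then I would apply Fano's inequality in the form $H(X\mid \hat X)\le h(P_e) + P_e\log(M-1)$, where $h$ denotes the binary entropy function. Using $h(P_e)\le \log 2$ and $P_e\log(M-1)\le P_e\log M\le \tfrac13\log M$ gives $H(X\mid\hat X)\le \log 2 + \tfrac13\log M$, and hence
\begin{align*}
\cI(X:\hat X)\ \ge\ \log M - \log 2 - \tfrac13\log M\ =\ \tfrac23\log M - \log 2.
\end{align*}
Combining this with the preceding lemma, which provides a family of size $M\ge e^{d^2/16}$, yields $\cI(X:\hat X)\ge \tfrac23\cdot\tfrac{d^2}{16}-\log 2 = \tfrac{d^2}{24}-\log 2 = \Omega(d^2)$, as claimed.

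There is no genuine obstacle here: this is a textbook application of Fano's inequality, and the only point that deserves a line of care is checking that the $\eps$-separation makes the decoded index $\hat X$ well defined and a valid function of the observations. The substantive part of the overall argument --- bounding how much $\cI(X:\hat X)$ can grow over the $N$ steps --- is carried out in the rest of the proof, not in this lemma.
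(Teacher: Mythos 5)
Your proof is correct and follows the same route as the paper, which simply invokes Fano's inequality after noting that the $\eps$-separation of the family makes the decoded index $\hat X$ well defined with error probability at most $1/3$. You merely spell out the standard computation $\cI(X:\hat X)\ge \log M - h(P_e) - P_e\log(M-1)\ge \tfrac23\log M-\log 2$ that the paper leaves implicit.
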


\paragraph{Upper bound on the mutual information}
We show that no algorithm can extract more than $\cO(\eps^2)$ nats of information at each step. For this, recall that $X$  is the uniform random variable on the set $[M]$ representing the encoder and  denote by $I_1,\dots,I_N$ the {sequence of outcomes produced by the data-acquisition part of the learning algorithm}. The Data-Processing inequality implies:
\begin{align*}
    \cI(X:\hat{X})\le \cI(X:I_1,\dots,I_N).
\end{align*}
Recall the notation $I_{\le k-1} \coloneqq   (I_1,\dots,I_{k-1})$ for all $1\le k\le N$, the chain rule of mutual information gives:
\begin{align*}
    \cI(X:I_1,\dots, I_N) &=\sum_{k=1}^N \cI(X:I_k| I_{\le k-1})
\end{align*}
where $\cI(X:I_k| I_{\le k-1})$ denotes the conditional mutual information between $X$ and $I_k$ giving $I_{\le k-1}$.
We claim that every conditional  mutual information $\cI(X:I_k| I_{\le k-1})$ can be  upper bounded by $\cO(\eps^2)$. To prove this claim, we prove first a general upper bound on the conditional mutual information. 

At step $t\in [N]$ , the $1/3$-correct algorithm used by the decoder chooses the input state $\rho_t$, uses the unknown quantum Pauli channel $\cP$ $m_t\ge 1$ times, eventually intertwines the $\cP$ with unital quantum channels $\cN_1^t, \cN_2^t, \dots, \cN_{m_t-1}^t$ and finally measures the output with a POVM $\cM_t=\{\lambda_i^t \proj{\phi_i^t}\}_{i\in \cI_t}$ where $\spr{\phi_i^t}{\phi_i^t}=1$ and $\sum_i \lambda_i^t \proj{\phi_i^t} = I$. Note that this implies $\sum_i \lambda_i^t = d$. Observe that we can always reduce the measurement with a general POVM $\cM$ to the measurement with such a POVM by taking the projectors on the  eigenvectors of each element of the POVM $\cM$ weighted by the corresponding eigenvalues. 
We denote by $\cP^{m_t}(\rho_t)=\underbrace{\cP \circ \cN^t_{m_t-1}\circ\cP \dots  \cP \circ \cN^t_{1}\circ  \cP (\rho_t)}_{\cP \text{ is applied } m_t \text{ times}}  $ the quantum channel applied to the input quantum state $\rho_t$. 
We denote by $q$ the joint distribution of $(X,I_1,\dots,I_N)$:
\begin{align*}
    q(x,i_1,\dots, i_N)=\frac{1}{M}\prod_{t=1}^N\lambda_{i_t}^t\bra{\phi_{i_t}^t} \cP_x^{m_t}(\rho_t)\ket{\phi_{i_t}^t}.
\end{align*}
We use the usual notation of marginals by omitting the indices on which we marginalize. For instance, for all adaptive algorithms, for all $1\le k\le N$, we have:
\begin{align*}
   q_{\le k}(x,i_1,\dots, i_k)
   &=\sum_{i_{k+1},\dots,i_N}\frac{1}{M}\prod_{t=1}^N\lambda_{i_t}^t\bra{\phi_{i_t}^t} \cP_x^{m_t}(\rho_t)\ket{\phi_{i_t}^t}
   \\&=\!\frac{1}{M}\!\prod_{t=1}^k\!\lambda_{i_t}^t\!\bra{\phi_{i_t}^t} \!\cP_x^{m_t}(\rho_t)\!\ket{\phi_{i_t}^t} \!\prod_{t=k+1}^N\! \sum_{i_t}  \!\lambda_{i_t}^t\!\bra{\phi_{i_t}^t} \!\cP_x^{m_t}(\rho_t)\!\ket{\phi_{i_t}^t}
   \\&=\frac{1}{M}\prod_{t=1}^k\lambda_{i_t}^t\bra{\phi_{i_t}^t} \cP_x^{m_t}(\rho_t)\ket{\phi_{i_t}^t} \prod_{t=k+1}^N \tr(\cP_x^{m_t}(\rho_t))
   \\&= \frac{1}{M}\prod_{t=1}^k\lambda_{i_t}^t\bra{\phi_{i_t}^t} \cP_x^{m_t}(\rho_t)\ket{\phi_{i_t}^t}.
\end{align*}
We sometimes abuse the notation and use $q$ instead of $q_{\le k }$ when it is clear from the context. In order to simplify the expressions, we introduce the notation 
$u_{i_k}^{k,x}= \bra{\phi_{i_k}^k} d\cP_x^{m_k}(\rho_k)-\mathds{I}\ket{\phi_{i_k}^k}$. Note that for adaptive strategies the vectors  $\ket{\phi_{i_k}^k}=\ket{\phi_{i_k}^t(i_{<k})}$ and the states  $\rho_k=\rho_k(i_{<k})$ depend on the previous observations $i_{<k}=(i_1,\dots, i_{k-1})$ for all $k\in [N]$.
Then the general upper bound on the conditional mutual information is:
\begin{lemma}\label{Upper bound on cond mutual info} Let $1\le k \le N$ and $u_{i_k}^{k,x}= \bra{\phi_{i_k}^k(i_{<k})} d\cP_x^{m_k}(\rho_k(i_{<k}))-\mathds{I}\ket{\phi_{i_k}^k(i_{<k})}$. We have for adaptive strategies:
\begin{align*}
    \cI(X:I_k| I_{\le k-1})\le 3\mathds{E}_{x}\mathds{E}_{{i_{\le k-1}}\sim q_{\le k-1}}  \Bigg[\sum_{i_k}\frac{\lambda_{i_k}^k}{d} (u_{i_k}^{k,x})^2\Bigg].
\end{align*}
Moreover, for non-adaptive strategies $u_{i_k}^{k,x}= \bra{\phi_{i_k}^k} d\cP_x^{m_k}(\rho_k)-\mathds{I}\ket{\phi_{i_k}^k}$ and:
\begin{align*}
    \cI(X:I_k| I_{\le k-1})\le 3\mathds{E}_{x} \Bigg[\sum_{i_k}\frac{\lambda_{i_k}^k}{d} (u_{i_k}^{k,x})^2\Bigg].
\end{align*}
\end{lemma}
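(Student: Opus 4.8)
The plan is to bound the conditional mutual information $\cI(X : I_k \mid I_{\le k-1})$ by a $\chi^2$-type quantity and then expand. Recall that mutual information can be written as an expected KL divergence: $\cI(X : I_k \mid I_{\le k-1}) = \mathds{E}_{i_{<k} \sim q_{\le k-1}}\, \mathds{E}_{x}\, \KL\!\big(q(\cdot \mid x, i_{<k}) \,\|\, q(\cdot \mid i_{<k})\big)$, where $q(\cdot \mid x, i_{<k})$ is the conditional law of $I_k$ given $X = x$ and the past, namely $q(i_k \mid x, i_{<k}) = \lambda_{i_k}^k \bra{\phi_{i_k}^k} \cP_x^{m_k}(\rho_k)\ket{\phi_{i_k}^k}$, and $q(\cdot \mid i_{<k})$ is the same averaged over $x$. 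The first step is the standard inequality $\KL(p \| r) \le \chi^2(p \| r) = \sum_i \frac{(p_i - r_i)^2}{r_i}$, which turns the bound into an expected $\chi^2$ between the channel-$x$ outcome distribution and the $x$-averaged one.

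Next I would use the definition $u_{i_k}^{k,x} = \bra{\phi_{i_k}^k} d\cP_x^{m_k}(\rho_k) - \mathds{I}\ket{\phi_{i_k}^k}$, so that $q(i_k \mid x, i_{<k}) = \frac{\lambda_{i_k}^k}{d}\big(1 + u_{i_k}^{k,x}\big)$ (using $\bra{\phi_{i_k}^k}\mathds{I}\ket{\phi_{i_k}^k} = 1$), and the averaged distribution is $\bar q(i_k \mid i_{<k}) = \frac{\lambda_{i_k}^k}{d}\big(1 + \bar u_{i_k}^{k}\big)$ where $\bar u_{i_k}^k = \mathds{E}_x u_{i_k}^{k,x}$. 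Then the numerator of the $\chi^2$ term is $\big(\frac{\lambda_{i_k}^k}{d}\big)^2 (u_{i_k}^{k,x} - \bar u_{i_k}^k)^2$ and dividing by $\bar q$ gives $\frac{\lambda_{i_k}^k}{d}\cdot\frac{(u_{i_k}^{k,x} - \bar u_{i_k}^k)^2}{1 + \bar u_{i_k}^k}$. Taking the expectation over $x$ and dropping the cross term (variance is at most second moment), this is at most $\mathds{E}_x \frac{\lambda_{i_k}^k}{d}\cdot\frac{(u_{i_k}^{k,x})^2}{1 + \bar u_{i_k}^k}$. The one genuine estimate needed is a lower bound $1 + \bar u_{i_k}^k \ge 1/3$ (equivalently $\bar u \ge -2/3$), which follows because the operators in the construction are close to depolarizing: since each $\cP_x$ has Pauli coefficients within $4\eps/d^2$ of uniform and $\eps \le 1/4$, one checks $d\cP_x^{m_k}(\rho_k) \succcurlyeq \mathds{I}/3$ (in fact much closer to $\mathds{I}$), hence $u_{i_k}^{k,x} \ge -2/3$ pointwise, so the same holds for the average; this yields the factor $3$. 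Summing over $i_k$ and taking expectations over the past $i_{<k}$ and over $x$ gives the adaptive bound; the non-adaptive bound is identical except that $\rho_k$ and $\ket{\phi_{i_k}^k}$ do not depend on $i_{<k}$, so the expectation over $i_{<k} \sim q_{\le k-1}$ disappears.

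The main obstacle, and the only place where the specific construction enters, is justifying the uniform lower bound $1 + \bar u_{i_k}^{k} \ge 1/3$, i.e. controlling how far below $\mathds{I}$ the operator $d\,\cP_x^{m_k}(\rho_k)$ can dip after $m_k$ applications of $\cP_x$ interspersed with arbitrary unital channels. The key point is that $\cP_x$ is a small perturbation of the completely depolarizing channel $\rho \mapsto \tr(\rho)\mathds{I}/d$: writing $\cP_x = (1 - \delta)\mathcal{D} + \delta \,\cP_x'$ for appropriate $\delta = \cO(\eps)$ and some channel $\cP_x'$ (possible because all Pauli weights exceed $(1-4\eps)/d^2 > 0$), the output is a convex combination of $\mathds{I}/d$ and other states, and composing with unital maps preserves this structure; hence $d\,\cP_x^{m_k}(\rho_k) = \mathds{I} + (\text{small, bounded correction})$, and a crude bound like $\ge \mathds{I}/3$ is immediate. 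I would present this as a short lemma or an inline remark. Everything else is the routine $\KL \le \chi^2$ expansion and bookkeeping over the chain rule already set up before the statement.
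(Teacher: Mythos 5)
Your proposal is correct and reaches the same constant $3$, but by a genuinely different route from the paper. The paper writes the conditional mutual information as $\mathds{E}\bigl[\log\bigl((1+u_{i_k}^{k,x})/\sum_y q(y|i_{\le k-1})(1+u_{i_k}^{k,y})\bigr)\bigr]$, pushes the logarithm inside the mixture over $y$ by Jensen, and then handles the two resulting terms with the pointwise inequalities $\log(1+t)\le t$ and $-\log(1+t)\le -t+t^2/2$, crucially exploiting the normalization $\sum_{i_k}\frac{\lambda_{i_k}^k}{d}u_{i_k}^{k,x}=0$ to kill the linear terms; the contributions $1+2=3$ give the constant. You instead use $\KL\le\chi^2$ and then lower-bound the reference density, which is cleaner and avoids the Taylor bookkeeping, at the price of needing the uniform bound $1+\bar u_{i_k}^{k}\ge 1/3$. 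Two remarks on that step. First, be careful that the reference distribution $q(\cdot|i_{<k})$ is the \emph{posterior}-weighted mixture over $x$, so $\bar u_{i_k}^k=\sum_y q(y|i_{\le k-1})u_{i_k}^{k,y}$; your variance argument still goes through since $1+\bar u$ does not depend on $x$. Second, your claim that $\eps\le 1/4$ already gives $d\,\cP_x^{m_k}(\rho_k)\succcurlyeq\mathds{I}/3$ is slightly off: the convex decomposition $\cP_x=(1-4\eps)\mathcal D+4\eps\,\cP_x'$ (valid since all Pauli weights are at least $(1-4\eps)/d^2$) yields $d\,\cP_x^{m_k}(\rho_k)\succcurlyeq(1-4\eps)\mathds{I}$, hence $u\ge -4\eps$, so you need $\eps\le 1/6$ to get the factor $3$ — a harmless restriction for a lower bound, and the paper's own proof has the analogous implicit requirement $u>-1/2$ (i.e.\ $\eps\le 1/8$) for its logarithmic inequality. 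Your one-sided positivity argument is a nice lightweight substitute for the two-sided bound $|u|\le(4\eps)^{m}$ of Lemma~\ref{lem}, though the latter is still needed elsewhere in the paper.
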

\begin{proof}[Proof of Lemma~\ref{Upper bound on cond mutual info}]
We can remark that, for all $1\le k\le N$, $q(x,i_{\le k})=\lambda_{i_k}^k\left(\frac{1+u_{i_k}^{k,x}}{d} \right)q(x,i_{\le k-1})$ thus
\begin{align*}
    \frac{q(x,i_k| i_{\le k-1})}{q(x|i_{\le k-1})q(i_k|i_{\le k-1})}
   &=\frac{q(x,i_{\le k})q(i_{\le k-1}) }{q(x,i_{\le k-1})q(i_{\le k})}
    =\frac{\lambda_{i_k}^k\left(\frac{1+u_{i_k}^{k,x}}{d} \right)q(x,i_{\le k-1})q(i_{\le k-1}) }{q(x,i_{\le k-1})\sum_y q(y,i_{\le k})}
    \\&=\frac{\lambda_{i_k}^k\left(\frac{1+u_{i_k}^{k,x}}{d} \right)q(i_{\le k-1}) }{\sum_y q(y,i_{\le k})}
     =\frac{\lambda_{i_k}^k\left(\frac{1+u_{i_k}^{k,x}}{d} \right)q(i_{\le k-1}) }{\sum_y q(y,i_{\le k-1})\lambda_{i_k}^k\left(\frac{1+ u_{i_k}^{k,y}}{d} \right)}
    \\& =\frac{(1+u_{i_k}^{k,x}) q(i_{\le k-1}) }{\sum_y q(y,i_{\le k-1})(1+ u_{i_k}^{k,y})}
     =\frac{(1+u_{i_k}^{k,x}) }{\sum_y q(y|i_{\le k-1})(1+ u_{i_k}^{k,y})}.
\end{align*}
Therefore by Jensen's inequality:
\begin{align*}
    \cI(X:I_k| I_{\le k-1})
    &=\ex{\log\left( \frac{q(x,i_k| i_{\le k-1})}{q(x|i_{\le k-1})q(i_k|i_{\le k-1})}\right) }
    \\&= \ex{\log\left(\frac{(1+u_{i_k}^{k,x}) }{\sum_y q(y|i_{\le k-1})(1+ u_{i_k}^{k,y})}\right) }
    \\&\le \ex{\log(1+u_{i_k}^{k,x}) -\sum_y q(y|i_{\le k-1})\log( 1+ u_{i_k}^{k,y})}
     \\&= \ex{\log(1+u_{i_k}^{k,x})} -\sum_y \ex{q(y|i_{\le k-1})\log( 1+ u_{i_k}^{k,y})}.
\end{align*}
The first term can be upper bounded using the inequality $\log(1+x)\le x$ verified for all $x\in (-1,+\infty)$:
\begin{align*}
    \ex{\log(1+u_{i_k}^{k,x})}&= \mathds{E}_{x,i\sim q} \log(1+u_{i_k}^{k,x})
    \\&\le\mathds{E}_{x,i\sim q} u_{i_k}^{k,x}
    =\mathds{E}_{x,i\sim q_{\le k}} u_{i_k}^{k,x}
   \\& = \mathds{E}_{x,i\sim q_{\le k-1}}  \sum_{i_k}\frac{\lambda_{i_k}^k}{d} (1+u_{i_k}^{k,x})u_{i_k}^{k,x}
   \\&= \mathds{E}_{x,i\sim q_{\le k-1}}  \sum_{i_k}\frac{\lambda_{i_k}^k}{d} (u_{i_k}^{k,x})^2
\end{align*}
because $\sum_{i_k}\frac{\lambda_{i_k}^k}{d}u_{i_k}^{k,x}= \tr(d\cP_x^{m_t}(\rho_t)-\mathds{I})=0$. 
 The second term can be upper bounded using the inequality $-\log(1+x)\le \frac{1}{2}x^2-x$ verified for all $x\in (-1/2,+\infty)$:
\begin{align*}
  \ex{-\sum_y q(y|i_{\le k-1})\log( 1+ u_{i_k}^{k,y}) }
 &= -\sum_y\mathds{E}_{x,i\sim q}   q(y|i_{\le k-1})\log( 1+ u_{i_k}^{k,y})
   \\&=-\sum_y\mathds{E}_{x,i\sim q_{\le k-1}}  q(y|i_{\le k-1})\sum_{i_k}\!\frac{\lambda_{i_k}^k}{d}(1\!+\!u_{i_k}^{k,x})\log( 1\!+\! u_{i_k}^{k,y})
   \\&\le \!\sum_y\!\mathds{E}_{x,i\sim q_{\le k-1}}  \!q(y|i_{\le k-1})\!\sum_{i_k}\!\frac{\lambda_{i_k}^k}{d}(1\!+\!u_{i_k}^{k,x})(  \tfrac{1}{2}(u_{i_k}^{k,y})^2\!-\!u_{i_k}^{k,y})
    \\&\le\sum_y\mathds{E}_{x,i\sim q_{\le k-1}}  q(y|i_{\le k-1})\sum_{i_k}\frac{\lambda_{i_k}^k}{d}( (u_{i_k}^{k,x})^2+ (u_{i_k}^{k,y})^2)
    \\&=2\sum_y\mathds{E}_{x,i\sim q_{\le k-1}}  q(y|i_{\le k-1})\sum_{i_k}\frac{\lambda_{i_k}^k}{d} (u_{i_k}^{k,x})^2
    \\&=2\mathds{E}_{x,i\sim q_{\le k-1}}  \sum_{i_k}\frac{\lambda_{i_k}^k}{d} (u_{i_k}^{k,x})^2.
\end{align*}
Since the conditional mutual is upper bounded by the sum of these two terms, the upper bound on the conditional mutual information follows.
\end{proof}
The following lemma permits to conclude the upper bound on the conditional mutual information and thus the upper bound on the mutual information.
\begin{lemma}\label{lem} Let $m\ge 1$, $\cN_1,\dots, \cN_{m-1}$ be unital quantum channels and $\cP$ be a Pauli quantum channel in the family $\cF$. We have for all quantum states $\rho$ and vectors $\ket{\phi}\in\cS^{d}$:
\begin{align*}
  | \bra{\phi} d \cP \cN_{m-1}\cP \dots  \cP  \cN_{1}  \cP(\rho) \ket{\phi}-1| \le (4\eps)^m.
\end{align*}
\end{lemma}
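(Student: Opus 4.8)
The plan is to work with the Pauli transfer matrix (or Pauli-eigenvalue) representation of the channel $\cP$, since everything in sight is diagonal in the Pauli basis. For a Pauli channel $\cP$ of the form~\eqref{form}, the Pauli operators $\{Q/\sqrt{d}\}_{Q\in\mathds{P}_n}$ are eigen-operators: $\cP(Q) = \lambda_{\cP}(Q)\, Q$ with eigenvalue $\lambda_{\cP}(Q) = \sum_{P} p(P)(-1)^{P\cdot Q}$. Because $p(P) = (1+4\alpha(P)\eps)/d^2$, the ``uniform'' part $1/d^2$ contributes $\sum_P (-1)^{P\cdot Q}/d^2 = \mathbb{1}_{Q=\mathds{I}}$ (this is the key orthogonality of characters on $\mathds{P}_n$), so $\lambda_{\cP}(\mathds{I})=1$ and for $Q\neq \mathds{I}$, $\lambda_{\cP}(Q) = \frac{4\eps}{d^2}\sum_P \alpha(P)(-1)^{P\cdot Q}$, whose absolute value is at most $4\eps$ since $|\alpha(P)|\le 1$ and there are $d^2$ terms each of modulus $1$. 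So $|\lambda_{\cP}(Q)| \le 4\eps$ for all $Q\neq\mathds{I}$.

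\textbf{Reduction to the eigenvalues.} Write any state as $\rho = \frac{1}{d}\sum_{Q\in\mathds{P}_n} c_Q Q$ with $c_{\mathds{I}}=1$ (trace one) and $c_Q = \tr(Q\rho)$, so $|c_Q|\le \|Q\|_\infty \|\rho\|_1 = 1$. Then $d\,\cP(\rho) - \mathds{I} = \sum_{Q\neq\mathds{I}} \lambda_{\cP}(Q)\, c_Q\, Q$, and $\bra{\phi}(d\cP(\rho)-\mathds{I})\ket{\phi} = \sum_{Q\neq\mathds{I}}\lambda_{\cP}(Q) c_Q \bra{\phi}Q\ket{\phi}$. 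For the iterated channel I would argue that each unital channel $\cN_j$ preserves the identity component (so the $Q=\mathds{I}$ coefficient stays $1$) and, being unital and trace-preserving, is a contraction in the relevant norm, while each application of $\cP$ multiplies every non-identity Pauli coefficient by its eigenvalue $\lambda_{\cP}(Q)$, each of modulus $\le 4\eps$. Concretely: track the vector of Pauli coefficients $(c_Q)_{Q\neq\mathds{I}}$ of the running state; applying $\cP$ scales coordinate $Q$ by $\lambda_{\cP}(Q)$, hence shrinks (say) the $\ell_\infty$ or $\ell_2$ norm of that vector by a factor $4\eps$; applying a unital $\cN_j$ keeps the identity coefficient fixed at $1$ and does not increase the appropriate norm of the non-identity part (since $\cN_j$ is a unital quantum channel, it is a contraction on the traceless subspace in trace norm, and the Pauli coefficients of a traceless Hermitian matrix are controlled by its trace norm). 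After $m$ applications of $\cP$ the non-identity part has been multiplied, in total, by a factor of modulus at most $(4\eps)^m$.

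\textbf{Finishing.} Start from $\rho$ with $\|\rho\|_1 = 1$, so its traceless part $\rho - \mathds{I}/d$ has trace norm $\le 2$ (or bound $|c_Q|\le 1$ directly). After the full sequence $\cP\cN_{m-1}\cP\cdots\cP\cN_1\cP$, the output $\tilde\rho$ has $\mathds{I}$-coefficient equal to $1$ (it is a state) and traceless part whose Pauli coefficients are bounded, in the chosen norm, by $(4\eps)^m$ times the initial bound. Then $|\bra{\phi}(d\tilde\rho - \mathds{I})\ket{\phi}| = |\bra{\phi}\,d(\tilde\rho - \mathds{I}/d)\,\ket{\phi}| \le \|d(\tilde\rho-\mathds{I}/d)\|_\infty \le \|d(\tilde\rho - \mathds{I}/d)\|_1$, and this last trace norm is $\le (4\eps)^m$ after one checks the constants (the factor $d$ from $d\cP$ is exactly absorbed by the $1/d$ in $\rho = \frac1d\sum c_Q Q$; the contraction factor is genuinely $4\eps$ per channel use, with the first and last $\cP$ both counting). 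I would present it cleanly by defining $\Phi := \cP\cN_{m-1}\cdots\cN_1\cP$, noting $\Phi$ is unital (composition of unital maps — $\cP$ is unital since $\sum_P p(P)=1$), so $\Phi(\mathds{I}/d) = \mathds{I}/d$, and showing $\|\Phi(\sigma)\|_1 \le (4\eps)^m\|\sigma\|_1$ for every traceless Hermitian $\sigma$ by the eigenvalue bound on each $\cP$ together with the trace-norm contractivity of each $\cN_j$; applied to $\sigma = \rho - \mathds{I}/d$ with $\|\sigma\|_1\le 2$...

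\textbf{The main obstacle} is bookkeeping the constant and deciding the cleanest norm: one wants exactly $(4\eps)^m$, not $2(4\eps)^m$, so I would avoid the ``traceless part has norm $\le 2$'' route and instead track each Pauli coefficient $c_Q = \tr(Q\sigma)$ individually, using $|\tr(Q\,\cN_j(\sigma))| \le \|\sigma\|_1$-type bounds only loosely, and instead observe that since $\cN_j$ is unital and trace-preserving it maps the traceless subspace to itself and is a contraction there in \emph{trace norm} (hence $\max_Q|\tr(Q\,\cN_j(\sigma))| \le \|\cN_j(\sigma)\|_1 \le \|\sigma\|_1$), while $\cP$ acts diagonally with $\max_{Q\neq\mathds{I}}|\lambda_{\cP}(Q)|\le 4\eps$ giving a \emph{clean} multiplicative $4\eps$ at the level of $\|\cdot\|_1$ too (a diagonal-in-Pauli channel with all non-identity eigenvalues bounded by $4\eps$ in modulus contracts the traceless trace norm by $4\eps$). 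Chaining $m$ copies of $\cP$ and $m-1$ unital maps, starting from $\sigma_0 = \rho$ itself viewed as having unit trace norm and noting the very first $\cP$ already kills the identity... the precise constant-chasing is the only place care is needed; everything else is the orthogonality of Pauli characters plus contractivity of quantum channels.
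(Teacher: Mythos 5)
Your overall architecture is the same as the paper's: peel off the identity component, show that each application of $\cP$ contracts the traceless remainder by $4\eps$ while the unital $\cN_j$ do not expand it, and start the chain at $\rho$ itself (so that the first $\cP$ "kills the identity") to avoid the factor $2$ coming from $\|\rho-\mathds{I}/d\|_1\le 2$. However, two steps do not hold as justified. First, you derive the per-step contraction from the eigenvalue bound $|\lambda_\cP(Q)|\le 4\eps$ for $Q\neq\mathds{I}$, asserting that a Pauli-diagonal map whose non-identity eigenvalues have modulus at most $4\eps$ contracts the trace norm on the traceless subspace by $4\eps$. That implication is false in general: the eigenvalue bound controls the Hilbert--Schmidt ($2\to 2$) norm, and passing to $1\to 1$ can cost a factor up to $\sqrt{d}$. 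What actually makes the contraction work is the Kraus structure: with $\cM(X)=\cP(X)-\tr(X)\mathds{I}/d=\sum_{P}\frac{4\alpha(P)\eps}{d^2}PXP$, the triangle inequality gives $\|\cM(X)\|_1\le\sum_P\frac{4\eps}{d^2}\|PXP\|_1=4\eps\|X\|_1$; the relevant quantity is $\sum_P|p(P)-1/d^2|=4\eps$, not $\max_{Q\neq\mathds{I}}|\lambda_\cP(Q)|$.

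Second, and more seriously, your final step loses a factor of $d$. You bound $|\bra{\phi}d(\tilde\rho-\mathds{I}/d)\ket{\phi}|\le\|d(\tilde\rho-\mathds{I}/d)\|_1$ and claim the right-hand side is at most $(4\eps)^m$; but the trace-norm chain only yields $\|\tilde\rho-\mathds{I}/d\|_1\le(4\eps)^m$, hence $d(4\eps)^m$ after multiplying by $d$, and this $d$ is \emph{not} absorbed by the $1/d$ in $\rho=\frac1d\sum_Q c_Q Q$. Already for $m=1$ and $\rho=\proj{0}$ one can have $\|\cM(\rho)\|_1=\Theta(\eps)$ rather than $\Theta(\eps/d)$, so the claimed trace-norm bound is genuinely off. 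The missing ingredient is a mixed-norm bound for the \emph{outermost} application of $\cM$: using $|\bra{\phi}X\ket{\phi}|\le\bra{\phi}|X|\ket{\phi}$ together with the twirl identity $\sum_P P|X|P=d\tr|X|\,\mathds{I}$, one gets $|\bra{\phi}\cM(X)\ket{\phi}|\le\frac{4\eps}{d^2}\sum_P\bra{\phi}P|X|P\ket{\phi}=\frac{4\eps}{d}\tr|X|$, whose $1/d$ exactly cancels the prefactor $d$. The paper applies this operator-norm-type bound to the outermost $\cM$ and the $1\to1$ contraction (plus $\|\cN_j(X)\|_1\le\|X\|_1$) to the inner ones, which is how the clean $(4\eps)^m$ emerges; without it your argument proves only $d(4\eps)^m$.
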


\begin{proof}[Proof of Lemma~\ref{lem}]

For $x \in [M]$, we define the map $\cM_x$ satisfying the following equality:
\begin{align*}
    \cM_x(\rho) \coloneqq  \cP_x(\rho)-\tr(\rho)\frac{\mathds{I}}{d}=\sum_{P\in \{\mathds{I},X,Y,Z\}^{\otimes n}} \frac{4\alpha_x(P)\eps}{d^2}P\rho P ,
\end{align*}
where we have used the fact (see Lemma~\ref{int-Pauli}) that for all $\rho$:
\begin{align*}
    \sum_{P\in \{\mathds{I},X,Y,Z\}^{\otimes n}} P\rho P = d\tr(\rho)\mathds{I}.
\end{align*}
Note that $\tr(\cM_x(\rho))=\tr(\cP_x(\rho))-\tr(\rho)\tr(\frac{\mathds{I}}{d})=\tr(\rho)-\tr(\rho)=0$.
Applying a unital quantum channel $\cN$ between two quantum channels $\cP_x$ can be seen as :
\begin{align*}
    \cP_x \cN \cP_x( \rho)&= \cP_x \cN \left(\tr(\rho)\frac{\mathds{I}}{d}+ \cM_x(\rho) \right)
    \\&=\cP_x  \left(\cN\left(\tr(\rho)\frac{\mathds{I}}{d}\right)+ \cN\cM_x(\rho) \right)
    \\&=  \tr(\rho)\frac{\mathds{I}}{d}+ \cM_x\left(\tr(\rho)\frac{\mathds{I}}{d}+\cN\cM_x(\rho)\right)
    \\&=\tr(\rho)\frac{\mathds{I}}{d}+ \cM_x\cN\cM_x(\rho)
\end{align*}
because $\tr(\cN\cM_x(\rho))=\tr( \cM_x(\rho))=0 $ and 
\begin{align}\label{eq-ind}
    \cM_x(\mathds{I})&= \sum_{P\in \{\mathds{I},X,Y,Z\}^{\otimes n}} \frac{4\alpha_x(P)\eps}{d^2}\mathds{I} \notag
    \\&=\sum_{P\in \{\mathds{I},X,Y,Z\}^{\otimes n}/\sigma} \frac{4\alpha_x(P)\eps}{d^2}\mathds{I}+\frac{4\alpha_x(\sigma(P))\eps}{d^2}\mathds{I} =0.
\end{align}
By induction, we generalize the equality \eqref{eq-ind} to $m$ applications of the Pauli channel $\cP_x$:
\begin{align*}
    &\underbrace{\cP_x  \cN_{m-1}\cP_x \dots  \cP_x  \cN_{1}  \cP_x (\rho)}_{\cP_x \text{ is applied } m \text{ times}}
    =\tr(\rho)\frac{\mathds{I}}{d}+\underbrace{\cM_x  \cN_{m-1}\cM_x \dots  \cM_x \cN_{1}  \cM_x (\rho)}_{\cM_x \text{ is applied } m \text{ times}}.
\end{align*}
Therefore
\begin{align*}
    \bra{\phi} d \cP \cN_{m-1}\cP \dots  \cP  \cN_{1}  \cP(\rho) \ket{\phi} 
  &= \bra{\phi} \mathds{I}+d \cM\cN_{m-1}\cM\dots  \cM \cN_{1}  \cM(\rho) \ket{\phi}
    \\&= 1+d\bra{\phi}  \cM \cN_{m-1}\cM\dots  \cM \cN_{1}  \cM(\rho) \ket{\phi}.
\end{align*}
On the other hand, for all  vectors $\ket{\phi}\in\cS^{d}$ and Hermitian matrices $X=\sum_i \lambda_i \proj{\phi_i}$ we have: $|\bra{\phi}X\ket{\phi}|=|\sum_i \lambda_i |\spr{\phi}{\phi_i}|^2|\le\sum_i |\lambda_i| |\spr{\phi}{\phi_i}|^2=\bra{\phi}|X|\ket{\phi}   $ therefore using Lemma~\ref{int-Pauli}:
\begin{align}\label{eq-ind2}
   | \bra{\phi}  \cM(X) \ket{\phi}|  &= \left|\bra{\phi}\sum_{P\in \mathds{P}_n}  \frac{4\alpha(P)\eps}{d^2}PX P    \ket{\phi}\right|\notag
   \\& \le   \frac{4\eps}{d^2}\sum_{P\in \mathds{P}_n}|\bra{\phi}PX P \ket{\phi}| \notag
    \\&\le   \frac{4\eps}{d^2}\sum_{P\in \mathds{P}_n}\bra{\phi}P|X| P \ket{\phi}\notag
    \\&= \frac{4\eps}{d^2}\bra{\phi}  d\tr |X| \mathds{I} \ket{\phi}  
     = \frac{4\eps}{d} \tr |X|, 
\end{align}
moreover we can also obtain:
\begin{align}\label{eq-ind4}
    \tr|\cM(X)|&=\left\|\sum_{P\in \mathds{P}_n}  \frac{4\alpha(P)\eps}{d^2}PX P\right\|_1\le \sum_{P\in \mathds{P}_n}  \frac{4\eps}{d^2}\|PX P\|_1 \notag
    \\&=\sum_{P\in \mathds{P}_n}  \frac{4\eps}{d^2}\tr|X|= 4\eps \tr|X|,
\end{align}
and for a quantum channel  $\cN_j$:
\begin{align}\label{eq-ind3}
    \tr  |\cN_j(X)|  &= \|\cN_j(X)\|_1=\left\|\sum_i \lambda_i \cN_j(\proj{\phi_i})\right\|_1 \notag
   \\& \le \sum_i  \left\| \lambda_i\cN_j(\proj{\phi_i})\right\|_1 = \sum_i |\lambda_i|=\tr|X|.
\end{align}
Therefore by induction we can prove:
\begin{align}\label{0<u<1}
  |\bra{\phi} d \cP \cN_{m-1}\cP \dots  \cP  \cN_{1}  \cP(\rho) \ket{\phi}-1| \notag
    &= d|\bra{\phi}  \cM\cN_{m-1}\cM\dots  \cM \cN_{1}  \cM(\rho) \ket{\phi}|\notag
    \\&\le d \frac{4\eps}{d} \tr| \cN_{m-1}\cM\dots  \cM \cN_{1}  \cM(\rho) |\notag
    \\&\le  4\eps \tr|\cM  \cN_{m-2}\dots  \cM \cN_{1}  \cM(\rho)|\notag
    \\&\le  (4\eps)^2 \tr|\cN_{m-2} \dots  \cM  \cN_{1}  \cM(\rho)| \notag
    \\&\le \left(4\eps\right)^m
\end{align}
where the first inequality follows from \eqref{eq-ind2},  the second inequality follows from  \eqref{eq-ind3} and the third inequality follows from  \eqref{eq-ind4}. 
\end{proof}
Now we can finally  upper bound the  mutual information between $X$ and $(I_1,\dots,I_N)$:
\begin{lemma}\label{upperbound on I}
Let $\eps\le 1/4$. 
The mutual information can be upper bounded as follows:
\begin{align*}
    \cI(X: I_1,\dots,I_N) =\cO(N\eps^2).
\end{align*}
\end{lemma}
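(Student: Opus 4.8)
The plan is to combine the chain rule for mutual information, the per-step bound of Lemma~\ref{Upper bound on cond mutual info}, and the contraction estimate of Lemma~\ref{lem}, and then observe that the POVM normalization $\sum_{i_k}\lambda_{i_k}^k = d$ makes the resulting sum telescope into something independent of the dimension.

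First I would apply the chain rule and the data-processing inequality exactly as set up just before Lemma~\ref{Upper bound on cond mutual info}, writing $\cI(X:I_1,\dots,I_N) = \sum_{k=1}^N \cI(X:I_k\mid I_{\le k-1})$, so it suffices to bound each conditional term by $\cO(\eps^2)$ uniformly in $k$ (and in the adaptive history $i_{<k}$). For a fixed step $k$, Lemma~\ref{Upper bound on cond mutual info} gives
\begin{align*}
    \cI(X:I_k\mid I_{\le k-1})\le 3\,\mathds{E}_{x}\,\mathds{E}_{i\sim q_{\le k-1}}\Bigg[\sum_{i_k}\frac{\lambda_{i_k}^k}{d}\,(u_{i_k}^{k,x})^2\Bigg],
\end{align*}
and Lemma~\ref{lem} bounds the integrand pointwise: $|u_{i_k}^{k,x}| = |\bra{\phi_{i_k}^k} d\cP_x^{m_k}(\rho_k)-\mathds{I}\ket{\phi_{i_k}^k}| \le (4\eps)^{m_k} \le 4\eps$, since $m_k\ge 1$ and $4\eps\le 1$. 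Hence $(u_{i_k}^{k,x})^2\le 16\eps^2$ for every $i_k$, $x$, and every realization of the history.

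Plugging this in, the inner sum becomes $\sum_{i_k}\frac{\lambda_{i_k}^k}{d}(u_{i_k}^{k,x})^2 \le 16\eps^2\cdot\frac{1}{d}\sum_{i_k}\lambda_{i_k}^k = 16\eps^2$, where I use the normalization $\sum_{i_k}\lambda_{i_k}^k = d$ (the trace of $\mathds{I}=\sum_{i_k}\lambda_{i_k}^k\proj{\phi_{i_k}^k}$). Therefore $\cI(X:I_k\mid I_{\le k-1}) \le 48\eps^2$ for each $k$, and summing over $k\in[N]$ yields $\cI(X:I_1,\dots,I_N)\le 48 N\eps^2 = \cO(N\eps^2)$, as claimed. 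There is essentially no obstacle here: all the work has been done in the preceding lemmas, and this step is just bookkeeping — the only point to be careful about is that the bound $(4\eps)^{m_k}\le 4\eps$ (rather than a smaller power) is the one that survives when the adversary uses the channel only once per step, which is exactly the regime making the bound tight up to constants for this theorem.
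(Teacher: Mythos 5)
Your proposal is correct and follows the paper's own argument step for step: chain rule, the per-step bound from Lemma~\ref{Upper bound on cond mutual info}, the pointwise bound $|u_{i_k}^{k,x}|\le (4\eps)^{m_k}\le 4\eps$ from Lemma~\ref{lem} (valid since $\eps\le 1/4$ by the construction of the family), and the normalization $\sum_{i_k}\lambda_{i_k}^k=d$, yielding $48\eps^2$ per step. No differences worth noting.
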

\begin{proof}[Proof of Lemma~\ref{upperbound on I}]
For all $1\le t\le N$, we remark that \[ u_{i_t}^{t,x}= \bra{\phi_{i_t}^t} d\cP_x^{m_t}(\rho_t)-\mathds{I}\ket{\phi_{i_t}^t}=\bra{\phi_{i_t}^t} d \cP_x \cN_{m_t-1}\cP_x \dots  \cP_x  \cN_{1}  \cP(\rho_t) \ket{\phi_{i_t}^t}-1,\] so by Lemma~\ref{Upper bound on cond mutual info} and Lemma~\ref{lem}:
\begin{align*}
    \cI(X:I_t| I_{\le t-1})&\le 3\mathds{E}_{x,i\sim q_{\le t-1}}  \sum_{i_t}\frac{\lambda_{i_t}^t}{d} (u_{i_t}^{t,x})^2
    \le 3\mathds{E}_{x,i\sim q_{\le k-1}}  \sum_{i_t}\frac{\lambda_{i_t}^t}{d}16\eps^2= 48\eps^2
\end{align*}
because $\sum_{i_t}\lambda_{i_t}^t=d$. Finally:
\begin{align*}
     \cI(X: I_1,\dots,I_N)=\sum_{t=1}^N \cI(X:I_t| I_{\le t-1})= \cO(N\eps^2).
\end{align*}
This concludes the proof of Lemma~\ref{upperbound on I}.
\end{proof}
Using Lemma~\ref{inequ-info-lb} and Lemma~\ref{upperbound on I} we obtain:
\begin{align*}
   \Omega(d^2) \le   \cI(X: I_1,\dots,I_N)  \le \cO(N\eps^2),
\end{align*}
which yields the lower bound $N \ge\Omega(d^2/\eps^2)$.
\end{proof}

To assess a lower bound, we need to compare it with upper bounds. The algorithm of~\cite{flammia2020efficient} implies an upper bound of $\cO\left(\frac{d^3\log(d)}{\eps^2}\right)$, 
so there is a gap between our lower bound and this upper bound. However, note that the algorithm of~\cite{flammia2020efficient} (and in fact most channel learning protocols we are aware of) use non-adaptive strategies. We will now show that indeed~\cite{flammia2020efficient} is optimal if we restrict to non-adaptive protocols.


\section{Optimal Pauli channel tomography with non-adaptive strategies}

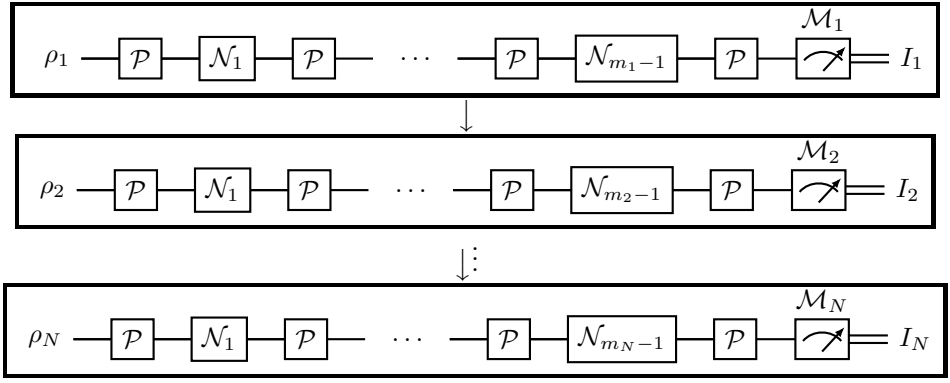
\begin{figure}
    \centering

\begin{quantikz}[thin lines] 
\gategroup[wires=1,steps=10,style={rounded corners,fill=blue!20,inner xsep=13pt},background]{}\lstick{$\rho_{1}$} & \gate[style={fill=red!40}]{\cP} \qw  & \gate[style={fill=green!50}]{\cN_1}\qw & \gate[style={fill=red!40}]{\cP} \qw &\qw~~\dots~~ & \gate[style={fill=red!40}]{\cP} \qw  &  \gate[style={fill=green!50}]{\cN_{m_1-1}} \qw& \gate[style={fill=red!40}]{\cP}\qw & \gate[style={fill=green!50}]{\cM_{1}}   & \rstick{\hspace{-0.2em}$I_1$} \cw
\end{quantikz}
\\\vspace{0.22em}
$\big\downarrow $
\\
\begin{quantikz}[thin lines] \gategroup[wires=1,steps=10,style={rounded corners,fill=blue!20,inner xsep=14pt},background]{}\lstick{$\rho_{2}$} & \gate[style={fill=red!40}]{\cP} \qw  & \gate[style={fill=green!50}]{\cN_1}\qw & \gate[style={fill=red!40}]{\cP} \qw &\qw~~\dots~~ & \gate[style={fill=red!40}]{\cP} \qw  &  \gate[style={fill=green!50}]{\cN_{m_2-1}} \qw& \gate[style={fill=red!40}]{\cP}\qw & \gate[style={fill=green!50}]{\cM_{2}}   & \rstick{\hspace{-0.2em}$I_2$} \cw
\end{quantikz} 
\\
$\big\downarrow \vdots$
\\
\begin{quantikz}[thin lines] 
\gategroup[wires=1,steps=10,style={rounded corners,fill=blue!20,inner xsep=14pt},background]{}\lstick{$\rho_{N}$\hspace{-0.2em}} & \gate[style={fill=red!40}]{\cP} \qw  & \gate[style={fill=green!50}]{\cN_1}\qw & \gate[style={fill=red!40}]{\cP} \qw &\qw~~\dots~~ & \gate[style={fill=red!40}]{\cP} \qw  &  \gate[style={fill=green!50}]{\cN_{m_{N}-1}} \qw& \gate[style={fill=red!40}]{\cP}\qw & \gate[style={fill=green!50}]{\cM_{N}}   & \rstick{\hspace{-0.2em}$I_{N}$} \cw
\end{quantikz}
    \caption{Illustration of a non-adaptive strategy for learning Pauli channel.}
    \label{fig: non-adaptive }
\end{figure}
The main difference between non-adaptive and adaptive strategies is that the former should choose the set of inputs, number of repetition, unital channels applied in between and the measurement devices before starting the learning procedure so that they cannot depend on the actual observations of the algorithm.

\begin{definition} Let $\cP$ be a Pauli channel and let $N$ be a sufficient number of steps to learn $\cP$ as defined in  \eqref{equ:pauli_channel}.  
    At step $t\in [N]$, a non-adaptive  {strategy with individual measurements} has the ability to choose an input quantum state $\rho_t$, the number $m_t\ge 1$ of uses of the quantum channel $\cP$, the unital quantum channels applied in between $\cN_1,\dots,\cN_{m_t-1}$ and the POVM $\cM_t$ for measuring the output quantum state $\rho^{\text{output}}_t$:
\begin{align*}
    \rho^{\text{output}}_t= \underbrace{\cP \circ \cN_{m_t-1}\circ\cP \circ \dots \circ \cP \circ \cN_{1} \circ \cP (\rho_t)}_{\cP \text{ is applied } m_t \text{ times}}.
\end{align*}
All these elements are chosen before starting the learning procedure (see Fig.~\ref{fig: non-adaptive } for an illustration).  By Born's rule, performing a measurement on the output quantum state $\rho^{\text{output}}_t$ using the POVM $\cM_t=\{M^t_i\}_{i\in \cI}$ is equivalent to sampling from the probability distribution 
\begin{align*}
    x_t \sim  \{\tr(\rho^{\text{output}}_tM^t_i)\}_{i\in \cI}.
\end{align*}
The observations $(x_1, \dots, x_N)$ are used to construct a probability distribution $\hat{p}$ on the set of Pauli operators $\mathds{P}_n$ satisfying with a probability at least  $2/3$:
\begin{align*}
    \TV(p, \hat{p}) \le \eps.
\end{align*}
\end{definition}
Without loss of generality, we can choose the measurement devices of   the form $\cM_t=\{\lambda_i^t \proj{\phi_i^t}\}_{i\in \cI_t}$ where $\spr{\phi_i^t}{\phi_i^t}=1  $ and $\sum_{i\in \cI_t} \lambda_i^t =d$. 
We prove the following lower bound on the total number of measurements and steps:
\begin{theorem}\label{thm:LBNA}
The problem of Pauli channel tomography using non-adaptive  ancilla-free {individual} measurements requires a total number of channel uses satisfying:
\begin{align*}
    \sum_{t=1}^N m_t \ge \Omega\left( \frac{d^4}{\eps^6}\right)
\end{align*}
or a total number of steps satisfying:
\begin{align*}
    N \ge \Omega\left( \frac{d^3}{\eps^2}\right).
\end{align*}
\end{theorem}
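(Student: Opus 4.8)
The plan is to reuse the same $\eps$-separated family $\cF = \{\cP_x\}_{x \in \llbracket 1, M\rrbracket}$ with $M \ge e^{d^2/16}$ and the same encoding/decoding reduction as in Theorem~\ref{thm: GLB}: a uniform message $X$ is encoded into $\cP_X$, a $1/3$-correct non-adaptive learner decodes it, and Fano's inequality (Lemma~\ref{inequ-info-lb}) forces $\cI(X : I_1, \dots, I_N) = \Omega(d^2)$. The whole game is now to prove a \emph{stronger} per-step upper bound on the conditional mutual information than the $\cO(\eps^2)$ bound used before, exploiting that in the non-adaptive case the measurement vectors $\ket{\phi_{i_t}^t}$ and input states $\rho_t$ are fixed in advance (independent of $X$), so that one may take the expectation over the random signs $\alpha_x(P)$ \emph{inside} the bound of Lemma~\ref{Upper bound on cond mutual info}. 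Concretely, I would bound $\cI(X : I_t \mid I_{\le t-1}) \le 3\, \mathds{E}_x\big[\sum_{i_t} \frac{\lambda_{i_t}^t}{d} (u_{i_t}^{t,x})^2\big]$ and then estimate $\mathds{E}_x (u_{i_t}^{t,x})^2$, where $u_{i_t}^{t,x} = d\,\bra{\phi_{i_t}^t}\cM_x \cN_{m_t-1}\cM_x\cdots\cM_x(\rho_t)\ket{\phi_{i_t}^t}$ with $\cM_x(\rho) = \sum_P \frac{4\alpha_x(P)\eps}{d^2} P\rho P$ as in Lemma~\ref{lem}.

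The key point is that $\mathds{E}_x[\alpha_x(P)] = 0$ and the $\alpha_x(P)$ are (pairwise, up to the matching $\sigma$) independent with $\alpha_x(P)^2 = 1$, so $\mathds{E}_x (u_{i_t}^{t,x})^2$ picks up only the ``diagonal'' terms. For $m_t = 1$ one gets $u_{i_t}^{t,x} = \sum_P \frac{4\alpha_x(P)\eps}{d} \bra{\phi_{i_t}^t} P \rho_t P \ket{\phi_{i_t}^t}$, so $\mathds{E}_x (u_{i_t}^{t,x})^2 = \frac{16\eps^2}{d^2}\sum_P |\bra{\phi_{i_t}^t} P\rho_t P\ket{\phi_{i_t}^t}|^2 + (\text{matching cross terms})$; summing over $i_t$ against $\lambda_{i_t}^t/d$ and using $\sum_P P\rho P = d\,\mathds{I}$, $\sum_i \lambda_i^t \proj{\phi_i^t} = \mathds{I}$, plus $\bra{\phi} P\rho P \ket{\phi} \le \|\rho\|_\infty \le 1$ type bounds, I expect the sum to collapse to $\cO(\eps^2/d)$ per step — the promised factor-$d$ gain over the naive $\cO(\eps^2)$. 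Combined with Fano this gives $\Omega(d^2) \le \cO(N\eps^2/d)$, i.e. $N = \Omega(d^3/\eps^2)$. For the case where many channel uses $m_t$ are allowed per step, I would instead use the $\|\cM_x(X)\|_1 \le 4\eps \tr|X|$ and $\tr|\cN_j(X)| \le \tr|X|$ contractions from Lemma~\ref{lem} to peel off $m_t - 1$ of the $\cM_x$ factors at cost $(4\eps)^{m_t-1}$, reducing to a single-$\cM_x$ second-moment computation but now with an extra $(4\eps)^{2(m_t-1)}$ factor; balancing against the crude $(4\eps)^{2m_t}$ bound and the fact that $\sum_t m_t$ channel uses are spread over the steps, one deduces that either $N = \Omega(d^3/\eps^2)$ or $\sum_t m_t = \Omega(d^4/\eps^6)$. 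The dichotomy arises because a step with large $m_t$ is cheap in information ($\cO(\eps^{2m_t})$, even smaller) but expensive in channel uses, whereas a step with small $m_t$ leaks $\cO(\eps^2/d)$ and so $\Omega(d^3/\eps^2)$ such steps are needed.

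The main obstacle I anticipate is controlling $\mathds{E}_x (u_{i_t}^{t,x})^2$ \emph{after averaging over $i_t$ with the POVM weights}, while keeping the Pauli structure of the variance computation tight: one must verify that the cross terms forced by the constraint $\alpha_x(P) = -\alpha_x(\sigma(P))$ do not destroy the cancellation, and that quantities like $\sum_{i_t} \frac{\lambda_{i_t}^t}{d} \sum_P |\bra{\phi_{i_t}^t} P\rho_t P\ket{\phi_{i_t}^t}|^2$ are genuinely $\cO(d)$ rather than $\cO(d^2)$ — this is where the ``unentangled measurement on a single copy'' restriction and the operator inequalities $\bra{\phi}P|X|P\ket{\phi} \le \bra{\phi}\,\cdot\,$ get used, and is the heart of the factor-$d$ improvement. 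A secondary bookkeeping obstacle is handling the intertwined unital channels $\cN_j$ when $m_t \ge 2$: one cannot diagonalize them simultaneously with the $\cM_x$'s, so the $\tr|\cdot|_1$-contraction route is essential there, and I would carry that out first for $m_t \ge 2$ and treat $m_t = 1$ by the sharper second-moment argument.
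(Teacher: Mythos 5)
Your overall strategy is the paper's: the same $\eps$-separated family, the same Fano reduction giving $\cI(X:I_1,\dots,I_N)=\Omega(d^2)$, and the same idea of pushing the expectation over the random signs $\alpha$ inside the per-step bound of Lemma~\ref{Upper bound on cond mutual info}, which is legitimate precisely because non-adaptive $\ket{\phi_{i_t}^t}$ and $\rho_t$ do not depend on $X$. Your $m_t=1$ computation ($\mathds{E}_\alpha (u_{i_t}^{t,\alpha})^2 = \cO(\eps^2/d)$ via the diagonal terms, Cauchy--Schwarz and $\sum_P P\rho_t^2 P = d\tr(\rho_t^2)\mathds{I}$) is exactly the paper's, and it does deliver $N=\Omega(d^3/\eps^2)$ for single-use steps. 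One point you wave at but should make explicit: the learner faces a \emph{fixed} family, so the bound actually involves $\frac{1}{M}\sum_x(u^{t,x}_{i_t})^2$ rather than $\mathds{E}_\alpha(u^{t,\alpha}_{i_t})^2$; the paper adds a Hoeffding concentration requirement to the construction of $\cF$ so that the empirical average can be replaced by the expectation up to an error absorbable in the construction.

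The genuine gap is in the multi-use case. Your plan is to peel off $m_t-1$ of the $\cM_\alpha$ factors via the contraction $\tr|\cM_\alpha(Y)|\le 4\eps\tr|Y|$ and then run a second-moment computation on the single remaining $\cM_\alpha$. This fails for two reasons. First, all $m_t$ factors carry the \emph{same} randomness: writing $u = d\bra{\phi}\cM_\alpha(Y_\alpha)\ket{\phi}$ with $Y_\alpha=\cN_{m_t-1}\cM_\alpha\cdots\cM_\alpha(\rho_t)$, the cross terms $\mathds{E}_\alpha\big[\alpha(P)\alpha(Q)\bra{\phi}PY_\alpha P\ket{\phi}\bra{\phi}QY_\alpha Q\ket{\phi}\big]$ do not collapse to the diagonal because $Y_\alpha$ is correlated with the signs, so the variance cancellation you rely on is not available after peeling. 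Second, even granting the reduction, it gains only one factor of $d$, i.e.\ a per-step bound of order $(4\eps)^{2m_t}/d$; running your dichotomy with that bound yields only $\sum_t m_t=\Omega(d^3/\eps^4)$, short of the claimed $\Omega(d^4/\eps^6)$. The paper instead expands \emph{all} $m$ copies of $\cM_\alpha$ into sums over Pauli labels $P_1,\dots,P_m,Q_1,\dots,Q_m$, uses that $\mathds{E}_\alpha[\alpha(P_1)\cdots\alpha(Q_m)]$ vanishes unless the labels pair up (under the matching $\sigma$), and then, for $m\ge 3$, sums freely over the two unconstrained outermost labels $P_m$ and $Q_m$ using $\sum_P P\rho P=d\tr(\rho)\mathds{I}$. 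This yields $\mathds{E}_\alpha(u)^2\le 4m(4\eps)^{2m}/d^2$ for $m\ge 3$ (only $\cO((4\eps)^4/d)$ for $m=2$, where $Q_1$ may pair with $Q_2$), and it is precisely this $d^2$ gain that produces $\sum_t m_t=\Omega(d^4/\eps^6)$ in the final dichotomy. You need this full moment expansion; the trace-norm peeling cannot substitute for it.
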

At a first sight we can think that this theorem is not comparable to Theorem~\ref{thm: GLB} since we give lower bounds on different parameters. However, if we ask the algorithm to only apply the channel once per step, we obtain an improved lower bound on the number of steps required for Pauli channel tomography using non-adaptive strategies. Moreover, it shows that the upper bound of \cite{flammia2020efficient}  is almost optimal especially if we know that the additional uses of channels at each step are only required to make the algorithm resilient to errors in SPAM.  
Finally, the optimal complexity $\Theta\left( \frac{d^3}{\eps^2}\right)$ for Pauli channel tomography is  surprising: We are ultimately interested in learning a classical distribution on $\mathds{P}_n \simeq[d^2]$ in $\TV$-distance which requires a complexity of $\Theta\left( \frac{d^2}{\eps^2}\right)$ in the usual sampling access model,  so our model is strictly weaker than the usual sampling access model. Furthermore, the quantum process tomography problem has an optimal copy complexity of $\tilde{\Theta}\left( \frac{d^6}{\eps^2}\right)$~\cite{oufkir-sample-optimal}: this shows that adding an additional structure to the channel can make the optimal complexity of channel tomography smaller.

\begin{proof}[Proof of Theorem~\ref{thm:LBNA}]
\label{app: proof of thm LBNA}
The construction on the family $\cF$ is similar to the construction in the proof of Theorem~\ref{thm: GLB}. We only need to add a constraint about the concentration of the mean $\frac{1}{M}\sum_{x=1}^M g(\alpha_x)$ around its expectation for a function $g$ (defined in \eqref{set G}). Let us simplify the mutual information between $X$ and $I_1,\dots,I_N$ in the non-adaptive setting. Recall from Lemma~\ref{Upper bound on cond mutual info} that the mutual information can be upper bounded as follows:
\begin{align*}
    \cI(X:I_1,\dots,I_N)&=\sum_{t=1}^N \cI(X:I_t| I_{\le t-1})
    \le 3\sum_{t=1}^N\mathds{E}_{x,i\sim q_{\le t-1}}  \sum_{i_t}\frac{\lambda_{i_t}^t}{d} (u_{i_t}^{t,x})^2.
\end{align*}
Since now we consider non-adaptive algorithms, this upper bound can be simplified:
\begin{align*}
     3\mathds{E}_{x,i\sim q_{\le t-1}}  \sum_{i_t}\frac{\lambda_{i_t}^t}{d} (u_{i_t}^{t,x})^2=3\frac{1}{M}\sum_{x=1}^M\sum_{i_t\in \cI_t}\frac{\lambda_{i_t}^t}{d} (u_{i_t}^{t,x})^2. 
\end{align*}
We remark that, in order to upper bound the mutual information $\cI(X:I_1,\dots,I_N)$, it is sufficient to  approximate 
\begin{align*}
  \frac{1}{M}\sum_{x=1}^M \sum_{t=1}^N    \sum_{i\in \cI_t}\frac{\lambda_i^t}{d}\left( \bra{\phi_{i}^t} d\cP_x^{m_t}(\rho_t)\ket{\phi_{i}^t}-1\right)^2.
\end{align*}
So the function $g$ is defined as follows
\begin{align}\label{set G}
    g(\alpha_x)= \sum_{t=1}^N    \sum_{i\in \cI_t}\frac{\lambda_i^t}{d}\left( \bra{\phi_{i}^t} d\cP_x^{m_t}(\rho_t)\ket{\phi_{i}^t}-1\right)^2
\end{align}
and we want to relate $\frac{1}{M}\sum_x g(\alpha_x)$ to $\ex{g(\alpha_x)}$.
Note that $\left( \bra{\phi} d\cP^{m_t}(\rho_t)\ket{\phi}-1\right)^2\in [0,(4\eps)^2]$ for every $\ket{\phi} \in  \cS^{d}$ and $\eps\le 1/4$ (see \eqref{0<u<1}). Also, we have for all $t\in [N]$, $\sum_{i\in \cI_t}\frac{\lambda_i^t}{d}=1$  so
\begin{align*}
    \sum_{t=1}^N    \sum_{i\in \cI_t}\frac{\lambda_i^t}{d}\left( \bra{\phi_{i}^t} d\cP_x^{m_t}(\rho_t)\ket{\phi_{i}^t}-1\right)^2\in [0, 16N\eps^2].
\end{align*}
Therefore by Hoeffding's inequality \cite{hoeff} for $s= \sqrt{\frac{(16N\eps^2)^2\log(10)}{2M}}$
\begin{align*}
    &\mathds{P} \Bigg(\bigg|\frac{1}{M}\sum_{x=1}^M\sum_{t=1}^N    \sum_{i\in \cI_t}\frac{\lambda_i^t}{d} \left( \bra{\phi^t_i} d\cP_x^{m_t}(\rho_t)\ket{\phi^t_i}-1\right)^2
 - \mathds{E}_{\alpha} \sum_{t=1}^N    \sum_{i\in \cI_t}\frac{\lambda_i^t}{d} \left( \bra{\phi_i^t} d\cP_\alpha^{m_t}(\rho_t)\ket{\phi_i^t}-1\right)^2\bigg|>s\Bigg)
\\&\le \exp\left( -\frac{2Ms^2}{(16N\eps^2)^2}\right)
    = \frac{1}{10}. 
\end{align*}
By a union bound, this error probability $1/10$ can be absorbed in the error probability of the construction by choosing a small enough constant $c$ in the cardinality of the family $M=\exp(cd^2)$.
 To recapitulate, we have proven so far that we can construct the family of quantum Pauli channels $\cF$ so that the mutual information satisfies: 
\begin{align}\label{inequ-info}
    \Omega(d^2) &\le \cI(X:I_1,\dots,I_N) \notag 
    \le 3\sum_{t}\sum_{i_t\in \cI_t} \frac{\lambda_{i_t}^t}{d}\mathds{E}_\alpha\left( \bra{\phi_{i_t}^t} d\cP_\alpha^{m_t}(\rho_t)\ket{\phi_{i_t}^t}-1\right)^2 \notag  +52N\eps^2{\exp(-cd^2)}.
\end{align}
We claim that the RHS can be upper bounded for $m_t=1$ as follows:
\begin{lemma}\label{lem:4.2}
 For all $t\in [N]$, for all  unit vectors $\ket{\phi}\in\cS^{d}$:
\begin{align*}
    \mathds{E}_\alpha\left( \bra{\phi} d\cP_\alpha(\rho_t)\ket{\phi}-1\right)^2\le  \frac{16\eps^2}{d}.
\end{align*}
\end{lemma}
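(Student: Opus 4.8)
The plan is to expand the quantity explicitly in the Pauli basis and then exploit the pairwise‑independence structure of the random signs $\alpha(P)$. First I would use the identity $\sum_{P\in\mathds{P}_n}P\rho P=d\tr(\rho)\mathds{I}$ from Lemma~\ref{int-Pauli} together with $\tr(\rho_t)=1$ to rewrite
\[
\bra{\phi}d\cP_\alpha(\rho_t)\ket{\phi}-1=\frac{4\eps}{d}\sum_{P\in\mathds{P}_n}\alpha(P)\,c_P,\qquad c_P:=\bra{\phi}P\rho_tP\ket{\phi}=\bra{P\phi}\rho_t\ket{P\phi}.
\]
Each $c_P$ is nonnegative since $\rho_t\succcurlyeq 0$, satisfies $c_P\le\lambda_{\max}(\rho_t)\le 1$ because $P\ket{\phi}$ is a unit vector, and $\sum_Pc_P=\bra{\phi}\big(\sum_PP\rho_tP\big)\ket{\phi}=d$.

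Second, I would square and take the expectation over $\alpha$, giving
\[
\mathds{E}_\alpha\big(\bra{\phi}d\cP_\alpha(\rho_t)\ket{\phi}-1\big)^2=\frac{16\eps^2}{d^2}\sum_{P,Q\in\mathds{P}_n}c_Pc_Q\,\mathds{E}_\alpha[\alpha(P)\alpha(Q)].
\]
The key computation is the covariance $\mathds{E}_\alpha[\alpha(P)\alpha(Q)]$: since the signs on a set of representatives of the perfect matching $\sigma$ are i.i.d.\ uniform on $\{\pm1\}$ and $\alpha(\sigma(P))=-\alpha(P)$, this equals $1$ when $P=Q$, equals $-1$ when $Q=\sigma(P)$, and vanishes otherwise. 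Hence the double sum collapses to $\sum_Pc_P^2-\sum_Pc_Pc_{\sigma(P)}\le\sum_Pc_P^2$, the inequality just discarding the nonnegative cross term.

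Finally I would bound $\sum_Pc_P^2\le(\max_Pc_P)\big(\sum_Pc_P\big)\le 1\cdot d=d$, which yields $\mathds{E}_\alpha\big(\bra{\phi}d\cP_\alpha(\rho_t)\ket{\phi}-1\big)^2\le 16\eps^2/d\le 32\eps^2/d$, in fact a factor of two better than stated. There is no serious obstacle; the only points requiring care are the covariance evaluation under the matching constraint and the observation that the cross term $-\sum_Pc_Pc_{\sigma(P)}$ only helps.
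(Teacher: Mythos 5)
Your proof is correct and follows essentially the same route as the paper's: expand $\bra{\phi}d\cP_\alpha(\rho_t)\ket{\phi}-1$ as $\frac{4\eps}{d}\sum_P\alpha(P)c_P$, evaluate the covariance $\mathds{E}_\alpha[\alpha(P)\alpha(Q)]$ using the matching structure, and bound the resulting diagonal sum $\sum_P c_P^2$ by $d$. The only (harmless) differences are that you discard the nonpositive cross term outright while the paper bounds it via Cauchy--Schwarz (costing a factor of $2$), and you control $\sum_P c_P^2$ by $(\max_P c_P)\sum_P c_P\le d$ whereas the paper uses $c_P^2\le\bra{\phi}P\rho_t^2P\ket{\phi}$ and $\tr(\rho_t^2)\le1$; both give the same bound.
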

If the claim is true, the inequalities~\eqref{inequ-info} imply using the fact that for all $t\le N$, $\sum_{i_t\in \cI_t} \lambda_{i_t}^t=d$:
\begin{align*}
    \Omega(d^2)&\le \cI(X:I_1,\dots,I_N)
    \le 3\sum_{t=1}^N \sum_{i_t\in \cI_t} \frac{\lambda_{i_t}^t}{d} \frac{16\eps^2}{d}+ 52N\eps^2{\exp(-cd^2)}
    \le \cO\left(N \frac{\eps^2}{d}\right)
\end{align*}
 which yields the lower bound of $N\ge \Omega(d^3/\eps^2)$  for strategies using only one channel per step.
\begin{proof}[Proof of Lemma~\ref{lem:4.2}]
Let $t\in [N]$ and $\ket{\phi}\in \cS^{d}$. We have:
\begin{align*}
    &\mathds{E}_\alpha (\bra{\phi} d\cP_\alpha(\rho_t)\ket{\phi}-1)^2
   \\& = \mathds{E}_\alpha \left( \sum_{P\in \mathds{P}_n} \frac{4\alpha(P)\eps}{d} \bra{\phi}P\rho_t P^\dagger\ket{\phi}\right)^2
    \\&= \mathds{E}_\alpha  \sum_{P,Q\in \mathds{P}_n} \frac{16\alpha(P)\alpha(Q)\eps^2}{d^2} \bra{\phi}P\rho_t P^\dagger \ket{\phi}\bra{\phi}Q\rho_t Q^\dagger \ket{\phi} 
    \\&= \sum_{P\in \mathds{P}_n} \frac{16\eps^2}{d^2} \bra{\phi}P\rho_t P^\dagger \ket{\phi}\bra{\phi}P\rho_t P^\dagger \ket{\phi} 
     -\sum_{P\in \mathds{P}_n} \frac{16\eps^2}{d^2} \bra{\phi}P\rho_t P^\dagger \ket{\phi}\bra{\phi}\sigma(P)\rho_t \sigma(P)^\dagger \ket{\phi}
    \\&\le \sum_{P\in \mathds{P}_n} \frac{16\eps^2}{d^2} \bra{\phi}P\rho_t P^\dagger \ket{\phi}^2 \le \sum_{P\in \mathds{P}_n} \frac{16\eps^2}{d^2} \bra{\phi}P\rho_t^2 P^\dagger \ket{\phi}
    =   \frac{16\eps^2}{d^2} \bra{\phi}d\tr(\rho_t^2)\mathds{I}\ket{\phi}
    \le \frac{16\eps^2}{d},
\end{align*}
where we used $\mathds{E}_\alpha \alpha(P)\alpha(Q)= 0$ if $Q\notin  \{P, \sigma(P)\}$, $\alpha(P)^2=1$, $\alpha(P)\alpha(\sigma(P))=-1$  and
the Cauchy-Schwarz inequality.
\end{proof}
Now, if we allow multiple uses of the channel at each step,
we obtain the following upper bound depending on the number $m\ge 2$ of channel uses:
\begin{lemma}\label{lem:4.3} 
For all $t\in [N]$, $m\ge 2$ and  unit vectors $\ket{\phi}\in\cS^{d}$:
\begin{align*}
    \mathds{E}_\alpha\left( \bra{\phi} d\cP_\alpha^{m}(\rho_t)\ket{\phi}-1\right)^2\le 4m\frac{(4\eps)^{2m}}{d^{\min\{2,m-1\}}}.
\end{align*}
\end{lemma}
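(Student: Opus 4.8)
The plan is to reduce $u_\alpha:=\bra{\phi}\big(d\,\cP_\alpha^{m}(\rho_t)-\mathds{I}\big)\ket{\phi}$ (and more generally $u^{(k)}_\alpha(\sigma;\psi):=\bra{\psi}\big(d\,\cP_\alpha^{k}(\sigma)-\mathds{I}\big)\ket{\psi}$, for $k$ channel uses) to a Pauli expansion and estimate its second moment. By \eqref{0<u<1}, $d\,\cP_\alpha^{m}(\rho_t)-\mathds{I}=d\,\cM_\alpha\cN_{m-1}\cM_\alpha\cdots\cN_1\cM_\alpha(\rho_t)$ with $\cM_\alpha(X)=\sum_{P\in\mathds{P}_n}\tfrac{4\alpha(P)\eps}{d^2}PXP$. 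Peeling off the innermost copy of $\cM_\alpha$ and using the twirl $\sum_P PXP=d\tr(X)\mathds{I}$, unitality of the $\cN_j$ and of Pauli channels, and $\sum_P\alpha(P)=0$, one gets the recursion
\begin{align*}
  u^{(m)}_\alpha(\rho_t;\phi)=\frac{4\eps}{d^2}\sum_{P\in\mathds{P}_n}\alpha(P)\,u^{(m-1)}_\alpha\big(\cN_1(P\rho_t P);\phi\big),
\end{align*}
where $\cN_1(P\rho_t P)$ is again a state. For $m=2$ this already closes the argument: Cauchy--Schwarz gives $(u^{(2)}_\alpha)^2\le\tfrac{16\eps^2}{d^2}\sum_P\big(u^{(1)}_\alpha(\cN_1(P\rho_tP))\big)^2$, so taking $\mathds{E}_\alpha$ and applying the previous lemma ($\mathds{E}_\alpha[(u^{(1)}_\alpha(\sigma))^2]\le 32\eps^2/d$ for every state $\sigma$) gives $\mathds{E}_\alpha[(u^{(2)}_\alpha)^2]\le 16\eps^2\cdot\tfrac{32\eps^2}{d}=\tfrac{2(4\eps)^4}{d}\le\tfrac{8(4\eps)^4}{d}$.

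For $m\ge3$ a single Cauchy--Schwarz loses a factor $d$, so I would unroll the recursion completely to $u^{(m)}_\alpha=\tfrac{(4\eps)^m}{d^{2m-1}}\sum_{P_1,\dots,P_m\in\mathds{P}_n}\alpha(P_1)\cdots\alpha(P_m)\,f(P_1,\dots,P_m)$, with $f(P_1,\dots,P_m)=\bra{P_m\phi}\cN_{m-1}\big(P_{m-1}\cdots\cN_1(P_1\rho_tP_1)\cdots P_{m-1}\big)\ket{P_m\phi}\ge0$ (it is $\bra{P_m\phi}(\cdot)\ket{P_m\phi}$ applied to a positive semidefinite operator), and estimate the second moment directly. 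The crucial preliminary facts are the \emph{marginal estimates}: for every $j\in[m]$ and any fixed $(P_i)_{i\ne j}$,
\begin{align*}
  \sum_{P_j\in\mathds{P}_n}f(P_1,\dots,P_m)=d\qquad\text{and}\qquad\sum_{P_j\in\mathds{P}_n}f(P_1,\dots,P_m)^2\le d.
\end{align*}
The equality is $\sum_{P_j}P_jCP_j=d\tr(C)\mathds{I}$ (with $C$ a state) together with the fact that $\mathds{I}$ is transported unchanged through the remaining Pauli conjugations and unital channels. The inequality follows, for $j=m$, from $\bra{P_m\phi}B\ket{P_m\phi}^2\le\bra{P_m\phi}B^2\ket{P_m\phi}$, $\sum_{P_m}P_mB^2P_m=d\tr(B^2)\mathds{I}$ and $\tr(B^2)\le\tr(B)^2=1$; and, for $j<m$, by writing $f=\tr\big((P_j\Lambda_0P_j)C\big)$ with $\Lambda_0\succeq0$, $\tr\Lambda_0=1$ (unitality again), squaring it as $\tr\big((P_j\Lambda_0P_j)^{\otimes2}C^{\otimes2}\big)$, using the twirl $\sum_P(P\otimes P)(\Lambda_0\otimes\Lambda_0)(P\otimes P)=\sum_Q\tr(Q\Lambda_0)^2\,Q\otimes Q$, and finishing with the Pauli--Parseval identity $\sum_Q\tr(QC)^2=d\tr(C^2)\le d$ and $|\tr(Q\Lambda_0)|\le1$.

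Then one expands $\mathds{E}_\alpha[(u^{(m)}_\alpha)^2]=\tfrac{(4\eps)^{2m}}{d^{4m-2}}\sum_{\vec P,\vec Q}\mathds{E}_\alpha\big[\alpha(P_1)\cdots\alpha(P_m)\alpha(Q_1)\cdots\alpha(Q_m)\big]f(\vec P)f(\vec Q)$. As $\alpha$ is $\pm1$-valued with $\alpha(\sigma(P))=-\alpha(P)$ and i.i.d.\ uniform on a transversal of $\mathds{P}_n/\sigma$, the expectation lies in $\{-1,0,1\}$ and is zero unless every $\sigma$-orbit occurs an even number of times among $P_1,\dots,P_m,Q_1,\dots,Q_m$; bounding $|\mathds{E}_\alpha[\cdots]|\le1$, it remains to bound $\sum f(\vec P)f(\vec Q)$ over all such admissible $(\vec P,\vec Q)$. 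Every admissible $(\vec P,\vec Q)$ certifies at least one perfect pairing of the $2m$ Pauli-slots whose two members always lie in one $\sigma$-orbit, so the sum is at most $\sum_M S_M$ over the $(2m-1)!!$ pairings $M$, where $S_M$ sums $f(\vec P)f(\vec Q)$ over the $(\vec P,\vec Q)$ respecting $M$. For a pairing with $w$ pairs internal to $\vec P$ (hence $w$ internal to $\vec Q$ and $m-2w$ cross pairs), Cauchy--Schwarz reduces $S_M$ to a sum of $f(\vec P)^2$; summing out the free cross-coordinates of $\vec P$ via $\sum_{P_j}f=d$, bounding the $2w$ internal ones by $f\le1$, and counting the admissible $\vec Q$ then gives $S_M\le2^m d^{2m-1}$, or $S_M\le2^m d^{2m}$ when $w=m/2$. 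Since $2m\le4m-2-\min\{2,m-1\}$ for all $m\ge2$, summing over the $(2m-1)!!$ pairings yields $\mathds{E}_\alpha[(u^{(m)}_\alpha)^2]\le C_m\,(4\eps)^{2m}/d^{\min\{2,m-1\}}$. The part I expect to be the real obstacle is bringing the constant down to the stated $4m$: the crude pairing count only gives $C_m=O\big(2^m(2m-1)!!\big)$, and reaching $4m$ requires exploiting the sign cancellations carried by $\prod_j\alpha(P_j)\alpha(Q_j)$ within each orbit — which replace the internal-coordinate sums by differences $\sum_P\big(f(\dots,P,\dots)-f(\dots,\sigma(P),\dots)\big)$ — and grouping the pairings so that only $O(m)$ essentially distinct contributions survive.
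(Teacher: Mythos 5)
Your $m=2$ argument is correct and even slightly sharper than the paper's (constant $2$ versus $6$), and your setup for $m\ge 3$ — the expansion of $u^{(m)}_\alpha$ into a Pauli sum with nonnegative coefficients $f(P_1,\dots,P_m)$, the marginal identity $\sum_{P_j}f=d$, and the observation that $\mathds{E}_\alpha[\alpha(P_1)\cdots\alpha(Q_m)]$ vanishes unless each $\sigma$-orbit occurs an even number of times — is exactly the paper's. But the final counting step is a genuine gap, and you have correctly identified it yourself: summing over all $(2m-1)!!$ perfect pairings gives a prefactor $C_m=O\big(2^m(2m-1)!!\big)$ rather than the stated $4m$, and this is not a cosmetic loss. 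The $4m$ is what the lemma is used for downstream: in Theorem~\ref{thm:LBNA} one bounds $4m(4\eps)^{2m}\le 4m(4\eps)^6$ uniformly in $m\ge 3$ to conclude $\sum_t m_t=\Omega(d^4/\eps^6)$, whereas $2^m(2m-1)!!(4\eps)^{2m}=(32\eps^2)^m(2m-1)!!$ grows without bound in $m$ for any fixed $\eps$ (it even exceeds the trivial bound $(4\eps)^{2m}$ once $m\gtrsim\log d$), so your version of the lemma cannot feed into that argument.

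Your proposed fix (exploiting sign cancellations within orbits and grouping pairings) is also not how the constant is actually obtained, and I doubt it can be pushed through cleanly. The paper's count is much cruder and needs no cancellation: a nonvanishing expectation forces the single slot $Q_1$ to satisfy $Q_1\in\{P_1,\dots,P_m,Q_2,\dots,Q_m\}$ or $\sigma(Q_1)$ in that set — one anchor, not a full pairing — which covers the nonzero terms by $2(2m-1)$ slices. On each slice you bound the expectation by $1$, use $f\ge 0$, sum out one untouched slot of $f(\vec P)$ and one of $f(\vec Q)$ via $\sum_{P_j}f=d$ (two such slots exist whenever $m\ge 3$, which is exactly where $m=2$ fails and only one factor of $d$ is gained), and sum the remaining $2m-3$ indices freely over $\mathds{P}_n$. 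Each slice then contributes at most $\frac{(4\eps)^{2m}}{d^{4m-2}}(d^2)^{2m-3}d^2=(4\eps)^{2m}/d^2$, and $2(2m-1)\le 4m$ finishes the proof. In short: replace your exhaustive pairing enumeration by this single-anchor covering argument and your proof closes with the stated constant; as written, it proves a strictly weaker statement.
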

\begin{proof}[Proof of Lemma~\ref{lem:4.3}]
Recall that for a Pauli channel $\cP_\alpha$, we can define $\cM_\alpha \coloneqq  \cP_\alpha-\tr(\cdot)\frac{\mathds{I}}{d} $ so that after $m$ applications of the Pauli channel $\cP_\alpha$ intertwined by the unital quantum channels $\cN_1,\dots,\cN_{m-1}$, we have the following identity:
\begin{align*}
    &\underbrace{\cP_\alpha  \cN_{m-1}\cP_\alpha \dots  \cP_\alpha  \cN_{1}  \cP_\alpha (\rho)}_{\cP_\alpha \text{ is applied }m \text{ times}}
  =\tr(\rho)\frac{\mathds{I}}{d}+\underbrace{\cM_\alpha  \cN_{m-1}\cM_\alpha \dots  \cM_\alpha \cN_{1}  \cM_\alpha (\rho)}_{ \cM_\alpha \text{ is applied } m \text{ times}}.
\end{align*}
The definition of $\cP_\alpha$ implies:
\begin{align*}
    \cM_\alpha(\rho)&=\cP_\alpha(\rho)-\tr(\rho)\frac{\mathds{I}}{d} 
 	= \sum_{P\in \mathds{P}_n} \frac{4\alpha(P)\eps}{d^2}P\rho P=\sum_{P\in \mathds{P}_n} \frac{4\alpha(P)\eps}{d^2}\cN_P(\rho)
\end{align*}
where we use the notation for the unital quantum channel $\cN_P(\rho)=P\rho P$ for all $P\in \mathds{P}_n$. So, using the notation $\cN_{P_m, m-1,\dots, 1,P_1}=\cN_{P_m}  \cN_{m-1}\cN_{P_{m-1}} \dots  \cN_{P_2}\cN_{1}  \cN_{P_1}$, we can develop the quantity we want to upper bound as follows:

\begin{align*}
   & \mathds{E}_\alpha\left( \bra{\phi} d\cP_\alpha^{m}(\rho)\ket{\phi}-1\right)^2
   \\&= d^2 \mathds{E}_\alpha\left( \bra{\phi}\cM_\alpha  \cN_{m-1}\cM_\alpha \dots  \cM_\alpha \cN_{1}  \cM_\alpha (\rho) \ket{\phi}\right)^2
   \\&= d^2 \mathds{E}_\alpha\Bigg(\sum_{P_1,\dots,P_m} \frac{4\alpha(P_1)\eps}{d^2}\cdots \frac{4\alpha(P_m)\eps}{d^2} 
    \cdot\bra{\phi}\cN_{P_m}  \cN_{m-1}\cN_{P_{m-1}} \dots  \cN_{P_2}\cN_{1}  \cN_{P_1} (\rho) \ket{\phi}\Bigg)^2
   \\&= \frac{(4\eps)^{2m}}{d^{4m-2}} \sum_{P,Q\in \mathds{P}_n^m} \mathds{E}_\alpha\left(\alpha(P_1)\cdots\alpha(P_m) \alpha(Q_1)\cdots\alpha(Q_m)\right) 
    \cdot\bra{\phi}\cN_{P_m, m-1,\dots, 1,P_1} (\rho) \ket{\phi}\bra{\phi}\cN_{Q_m, m-1,\dots, 1,Q_1} (\rho) \ket{\phi}.
\end{align*}
If $Q_1 \notin \left(P_1,\sigma(P_1),\dots,P_m, \sigma(P_m)\right)$ and $Q_1 \notin \left( Q_2,\sigma(Q_2),\dots, Q_m, \sigma(Q_m)\right)$ then  the  expected value 
\begin{align}
\mathds{E}_\alpha\left(\alpha(P_1)\cdots\alpha(P_m) \alpha(Q_1)\cdots\alpha(Q_m)\right)=0,
\end{align} 
as otherwise we can upper bound each term inside the sum by $1$ and we count the number of these terms. Moreover we can gain a  factor of $d^2$ by using the properties of Pauli group for $m\ge 3$. For example, suppose that $Q_1=P_1$, we have $\sum_{P\in \mathds{P}_n}\cN_P(\rho)= \sum_{P\in \mathds{P}_n} P\rho P= d\tr(\rho)\mathds{I} $ hence for $m\ge 3$, if we denote $(P,Q)_{<m}=(P_1, \dots, P_{m-1}, Q_1, \dots, Q_{m-1})$, we have:
\begin{align*}
    &\frac{(4\eps)^{2m}}{d^{4m-2}} \sum_{P,Q: Q_1=P_1} \mathds{E}_\alpha\left(\alpha(P_1)\cdots\alpha(P_m) \alpha(Q_1)\cdots\alpha(Q_m)\right)
   \cdot \bra{\phi}\cN_{P_m, m-1,\dots, 1,P_1} (\rho) \ket{\phi}\bra{\phi}\cN_{Q_m, m-1,\dots, 1,Q_1} (\rho) \ket{\phi}
    \\&\le \frac{(4\eps)^{2m}}{d^{4m-2}} \sum_{P,Q: Q_1=P_1}\bra{\phi}\cN_{P_m, m-1,\dots, 1,P_1} (\rho) \ket{\phi}
      \cdot\bra{\phi}\cN_{Q_m, m-1,\dots, 1,Q_1} (\rho) \ket{\phi}
     \\&= \frac{(4\eps)^{2m}}{d^{4m-2}} \sum_{(P,Q)_{<m}: Q_1=P_1}\bra{\phi}\sum_{P_m}\cN_{P_m, m-1,\dots, 1,P_1} (\rho) \ket{\phi}	 
     \cdot\bra{\phi}\sum_{Q_m}\cN_{Q_m, m-1,\dots, 1,Q_1} (\rho) \ket{\phi}
       \\&= \frac{(4\eps)^{2m}}{d^{4m-2}} \sum_{(P,Q)_{<m}: Q_1=P_1}\bra{\phi}d\tr(\cN_{ m-1,\dots, 1,P_1} (\rho))\mathds{I} \ket{\phi} 
      \cdot\bra{\phi}d\tr(\cN_{ m-1,\dots, 1,Q_1} (\rho))\mathds{I} \ket{\phi}
         \\& \le \frac{(4\eps)^{2m}}{d^{4m-2}} \sum_{(P,Q)_{<m}: Q_1=P_1}d^2
        = \frac{(4\eps)^{2m}}{d^{4m-2}}(d^{2})^{2m-3}d^2=\frac{(4\eps)^{2m}}{d^{2}}.
\end{align*}
Since  we have $2(2m-1)$ possibilities for $Q_1 \in \left(P_1,\sigma(P_1),\dots,P_m, \sigma(P_m), Q_2,\sigma(Q_2),\dots, Q_m, \sigma(Q_m)\right)$, we conclude that:
\begin{align*}
   & \mathds{E}_\alpha\left( \bra{\phi} d\cP_\alpha^{m}(\rho)\ket{\phi}-1\right)^2\le 2(2m-1)\frac{(4\eps)^{2m}}{d^{2}}\le 4m\frac{(4\eps)^{2m}}{d^{2}}.
\end{align*}
Now, if $m=2$, we can have $Q_1=Q_2$ and therefore we can't gain a factor $d$ when summing over $Q_2$. In this case,  
we obtain instead the upper bound:
\begin{align*}
   & \mathds{E}_\alpha\left( \bra{\phi} d\cP_\alpha^{2}(\rho)\ket{\phi}-1\right)^2\le 6\frac{(4\eps)^{4}}{d}.
\end{align*}
\end{proof}
Using the inequalities~\eqref{inequ-info} and the fact that for all $t\in [N]$, $\sum_{i_t\in \cI_t}\lambda^t_{i_t}=d$, we deduce:
\begin{align*}
    3\cdot 2^{11}\sum_{t: m_t\le 2} \frac{\eps^2}{d}+ 12\sum_{t: m_t\ge 3} m_t\frac{(4\eps)^{2m_t}}{d^2}\ge \Omega(d^2).
\end{align*}
Therefore we have either $ \sum_{t: m_t\le 2} \frac{\eps^2}{d}\ge \Omega(d^2)$ or $4\sum_{t: m_t\ge 3} m_t\frac{(4\eps)^{2m_t}}{d^2}\ge \Omega(d^2)$. Finally, we have either $N\ge \Omega\left(\frac{d^3}{\eps^2}\right)$ or $\sum_{t=1}^N m_t\ge \Omega\left(\frac{d^4}{\eps^6}\right)$.

\end{proof}

This proof relies crucially on the non-adaptiveness of the strategy. This can be seen clearly when simplifying the upper bound of the conditional mutual information in Lemma~\ref{Upper bound on cond mutual info}. For an adaptive strategy, this upper bound contains large products for which the expectation (under $\alpha$) can only upper bounded  
by $\cO(\eps^2)$ which implies a lower bound on $N$ similar to Theorem~\ref{thm: GLB}. In the next section, we explore how to overcome this difficulty in some regime of $\eps$ and improve the general lower bound $N\ge \Omega(d^2/\eps^2)$.


\section{A lower bound for Pauli channel tomography with adaptive strategies in the high precision regime}
In this section, we improve the general lower bound of quantum Pauli channel tomography in Theorem~\ref{thm: GLB} for adaptive strategies with one use of the channel each step. In the adaptive setting, a learner could adapt its choices depending on the previous observations. It can prepare a large set of inputs and measurements and thus it potentially has more power to extract information much earlier than its non-adaptive counterpart. With this intuition, we expect that lower bounds for adaptive strategies should be harder to establish. 
Since we only consider one use of the channel for each step, i.e., $m_t = 1$, a learning algorithm has the following form:

\begin{figure}
    \centering
 \begin{quantikz}[thin lines] 
					\gategroup[wires=1,steps=4,style={rounded corners,fill=blue!20,inner xsep=9pt},background]{}\fcolorbox{black}{green!50}{$\rho_{1}$} & \gate[style={fill=red!40}]{\cP} \qw& \gate[1, style={fill=green!50}]{\cM_1} \qw   &\cw\!\rstick{\hspace{-0.2em}$I_1$} 
				\end{quantikz} 
\\\vspace{0.4em}
$\big\downarrow I_1$
\\
 \begin{quantikz}[thin lines] 
					\gategroup[wires=1,steps=4,style={rounded corners,fill=blue!20,inner xsep=9pt},background]{}\fcolorbox{black}{green!50}{$\rho_{2}^{I_{1}}$} & \gate[style={fill=red!40}]{\cP} \qw& \gate[1, style={fill=green!50}]{\cM_2^{I_{1}}} \qw   &\cw\!\rstick{\hspace{-0.2em}$I_2$}
				\end{quantikz} 
\\
$\vdots \big\downarrow I_{<N}$
\\
 \begin{quantikz}[thin lines] 
					\gategroup[wires=1,steps=4,style={rounded corners,fill=blue!20,inner xsep=10pt},background]{}\fcolorbox{black}{green!50}{$\rho_{N}^{I_{<N}}$} & \gate[style={fill=red!40}]{\cP} \qw& \gate[1, style={fill=green!50}]{\cM_N^{I_{<N}}} \qw   &\cw\rstick{\hspace{-0.3em}$I_N$} 
				\end{quantikz}  
				  \caption{Illustration of an adaptive strategy for learning Pauli channel using  one channel per step.}
\label{Fig: Adap-m=1}
\end{figure}
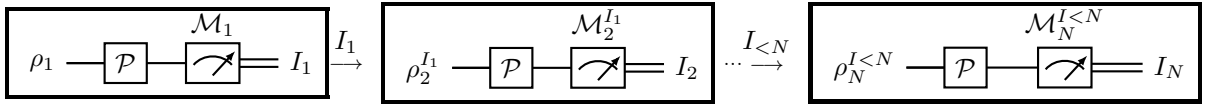

After observing $I_1,\dots,I_{t}$ at steps $1$ to $t$, the learner would choose an input $\rho_{t+1}^{\le t} \coloneqq  \rho_{t+1}^{I_1,\dots,I_t}$ and a measurement device represented by a POVM $\cM_{t+1}^{\le t}  \coloneqq \cM_{t+1}^{I_1, \dots, I_t}\coloneqq  \Big\{\lambda_{i_{t+1}}^{I_1,\dots,I_t}\proj{\phi_{i_{t+1}}^{I_1,\dots,I_t}}\Big\}_{i_{t+1}\in \cI_{t+1}^{I_1,\dots,I_t}} $ where the rank one matrices are projectors and the coefficients sum to $d$. So, the adaptive algorithm extracts classical information at step $t+1$ from the unknown Pauli quantum channel $\cP$ by first applying $\cP$ to the input $\rho_{t+1}^{I_1,\dots,I_t}$ and then performing a measurement using the POVM $\cM_{t+1}^{I_1,\dots,I_t}$ (see Fig.\ref{Fig: Adap-m=1} for an illustration). In this case, it observes $i_{t+1}\in \cI_{t+1}^{I_1,\dots,I_t}$ with a probability given by Born's rule:
\begin{align*}
   & \tr\left(\rho_{t+1}^{I_1,\dots,I_t} \lambda_{i_{t+1}}^{I_1,\dots,I_t}\proj{\phi_{i_{t+1}}^{I_1,\dots,I_t}}\right)
   = \lambda_{i_{t+1}}^{I_1,\dots,I_t}\bra{\phi_{i_{t+1}}^{I_1,\dots,I_t}}\rho_{t+1}^{I_1,\dots,I_t}\ket{\phi_{i_{t+1}}^{I_1,\dots,I_t}}.
\end{align*}
An adaptive strategy with limited {adaptivity} $N_{\text{ad}}$ can only adapt on the last previous $N_{\text{ad}}$ observations, that is for all $t\le N$: 
\begin{align*}
\rho_{t+1}^{I_1, \dots, I_t}&= \rho_{t+1}^{I_{t-N_{\text{ad}}+1}, \dots, I_t}
    \\\cM_{t+1}^{I_1, \dots, I_t}&= \cM_{t+1}^{I_{t-N_{\text{ad}}+1}, \dots, I_t}.
\end{align*}
We prove the following lower bound on the number of steps. Note that because of the assumption $m_t = 1$ for all steps $t$, the number of steps is the same as the number of channel uses.
\begin{theorem}\label{thm: LB-adaptive}
Let $\eps\le 1/(20d)$  and $d\ge 80$. Adaptive strategies for the problem of Pauli channel tomography using  one copy of the channel at each step and  ancilla-free {individual} measurements  require a number of steps $N$ satisfying:
\begin{align*}
    N\ge \Omega\left( \frac{d^{5/2}}{\eps^2}\right).
\end{align*}
Furthermore, any adaptive strategy {with limited adaptivity} $\mathcal{O}(d^2/\eps^2)$  requires a number of steps $N$ satisfying 
\begin{align}\label{equ:memory_bound}
N\ge \Omega\left( \frac{d^3}{\eps^2}\right).
\end{align}

\end{theorem}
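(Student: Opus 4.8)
The plan is to run the same ``reduction to communication'' used in the proofs of Theorems~\ref{thm: GLB} and~\ref{thm:LBNA}, but to replace the random sign pattern $\alpha_x(P)\in\{\pm1\}$ by (mildly truncated) Gaussian coefficients. For a standard Gaussian vector $g=(g(P))_P$ subject to the antisymmetry $g(\sigma(P))=-g(P)$, I would consider
\begin{align*}
\cP_g(\rho)=\sum_{P\in\{\mathds{I},X,Y,Z\}^{\otimes n}}\frac{1+c\,\eps\,g(P)}{d^2}\,P\rho P ,
\end{align*}
for an absolute constant $c$, conditioned on the high-probability event $\max_P|g(P)|\le C\sqrt{\log d}$, which together with $\eps\le 1/(20d)$ guarantees $\cP_g$ is a bona fide channel. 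Exactly as in the proof of Theorem~\ref{thm:LBNA}, a union bound over independent draws produces, with high probability, an $\eps$-separated subfamily $\{\cP_x\}_{x\in\llbracket1,M\rrbracket}$ of size $M=e^{\Omega(d^2)}$. Encoding $X\sim\unif{\llbracket1,M\rrbracket}$ into $\cP_X$ and applying Fano (Lemma~\ref{inequ-info-lb}) gives $\cI(X:I_1,\dots,I_N)\ge\Omega(d^2)$, and the chain rule rewrites this as $\sum_{k=1}^N\cI(X:I_k\mid I_{\le k-1})$.

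\textbf{Per-step bound via Gaussian integration by parts.} I would start from Lemma~\ref{Upper bound on cond mutual info}, namely $\cI(X:I_k\mid I_{\le k-1})\le 3\,\mathds{E}_x\mathds{E}_{i\sim q_{\le k-1}}\big[\sum_{i_k}\tfrac{\lambda_{i_k}^k}{d}(u_{i_k}^{k,x})^2\big]$, and use that, for $m_k=1$ and $\cP_x=\cP_g$,
\begin{align*}
u_{i_k}^{k,g}=\bra{\phi_{i_k}^k} d\cP_g(\rho_k)-\mathds{I}\ket{\phi_{i_k}^k}=\frac{c\,\eps}{d}\sum_{P}g(P)\,a_P,\qquad a_P:=\bra{\phi_{i_k}^k}P\rho_k P\ket{\phi_{i_k}^k}\ge0,\ \sum_P a_P=d ,
\end{align*}
by the identity $\sum_P P\rho P=d\,\tr(\rho)\mathds{I}$ used in Lemma~\ref{lem}. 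Squaring yields $\sum_{P,Q}g(P)g(Q)a_Pa_Q$, and the key point is that, unlike the non-adaptive case, the $a_P$, the $\lambda_{i_k}^k$ and the posterior law of $g$ all depend on $I_{\le k-1}$, hence on $g$. I would evaluate $\mathds{E}_g[g(P)g(Q)\cdot(\text{likelihood of }I_{\le k-1})]$ by Gaussian integration by parts (Stein's identity): each derivative hitting one of the $\le k-1$ history factors $\tfrac1d(1+u_{i_t}^{t,g})$ outputs a factor of order $c\eps/d$ times a bounded matrix element. This produces a ``diagonal'' contribution of the non-adaptive size $\cO(\eps^2/d)$ --- bounded as in the proof of Theorem~\ref{thm:LBNA} via $\bra{\phi}A\ket{\phi}^2\le\bra{\phi}A^2\ket{\phi}$, $\sum_P\bra{\phi}P\rho^2P\ket{\phi}\le d$ and a Plancherel identity for the Pauli/Walsh transform --- plus an ``excess'' term coming from the history, governed by the accumulated Fisher information of the previous $k-1$ steps and of order $\cO(k\,\eps^4/d^3)$; the hypothesis $\eps\le1/(20d)$ is what makes the truncated higher-order terms of these expansions negligible. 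Summing, $\Omega(d^2)\le\cI(X:I_{1:N})\le\cO\!\big(N\eps^2/d+N^2\eps^4/d^3\big)$, whence $N=\Omega(d^{5/2}/\eps^2)$ (the $N^2/d^3$-term being the binding one).

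\textbf{The memory-bounded refinement.} If the learner keeps only an $\cO(d^2/\eps^2)$-bit register $Z_{k-1}$ between steps and chooses $(\rho_k,\cM_k)$ as a function of $Z_{k-1}$ alone, then the history-dependence that fuels the $\cO(k\eps^4/d^3)$ excess must pass through $Z_{k-1}$, so the factor $k$ gets replaced by the amount of information the register can carry about $g$, i.e. $\cO(d^2/\eps^2)$. The per-step bound then collapses to $\cO(\eps^2/d)$, the $N^2$-term disappears from the sum, and $\Omega(d^2)\le\cO(N\eps^2/d)$ yields $N=\Omega(d^3/\eps^2)$, which is \eqref{equ:memory_bound}.

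\textbf{Main obstacle.} The crux is the Gaussian integration-by-parts estimate: one must expand $\mathds{E}_g[g(P)g(Q)\cdot(\text{likelihood of the whole transcript up to step }k)]$, keep track of which of the $\le k$ factors each of the two derivatives lands on, and bound the resulting products of Pauli matrix elements using $a_P\ge0$, $\sum_P a_P=d$ and Parseval, showing that everything beyond the $\cO(\eps^2/d)$ diagonal term is at most $\cO(k\eps^4/d^3)$. In particular one must argue that the $k$-fold accumulation of posterior bias does not blow up --- exactly where the martingale/total-covariance structure of the Bayesian posterior for $g$, rather than a crude step-by-step bound, has to be used --- and one must keep all truncation errors (from conditioning on $\max_P|g(P)|\le C\sqrt{\log d}$ and from Taylor-expanding $\log(1+u)$ and $1/(1+u)$) below the target, which is where $\eps\le1/(20d)$ enters.
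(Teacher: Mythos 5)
Your roadmap coincides with the paper's actual strategy: a Gaussian ensemble of perturbations of the depolarizing channel, Fano plus the chain rule to reduce to per-step conditional mutual informations, Lemma~\ref{Upper bound on cond mutual info} as the starting point, Gaussian integration by parts to decouple the step-$k$ term from the history, a per-step bound of the shape $\cO(\eps^2/d)+\cO(k\eps^4/d^3)$, and the same final arithmetic $\Omega(d^2)\le\cO(N\eps^2/d+N^2\eps^4/d^3)$ yielding $N=\Omega(d^{5/2}/\eps^2)$, with $k\mapsto\cO(d^2/\eps^2)$ for the memory-bounded refinement. So the architecture is right. But the central estimate --- that after integration by parts everything beyond the diagonal $\cO(\eps^2/d)$ term is at most $\cO(k\eps^4/d^3)$ --- is exactly what you label the ``main obstacle'' and leave unproven, and it is the entire content of the theorem beyond Theorem~\ref{thm: GLB}. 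The paper establishes it by splitting $\mathds{E}_\alpha[\alpha(P)F(\alpha)]$ into three explicit contributions \eqref{equ1}--\eqref{equ3}, controlling them with: (i) the pointwise bound $\frac1d\sum_{i_t}\lambda^t_{i_t}(u^{t,\alpha}_{i_t})^2\le16\eps^2$ (Lemma~\ref{lem: useful}, itself a Pauli/Plancherel computation); (ii) the observation that cross terms vanish because $\mathds{E}_{s}\,\tr(\tilde\rho_s Q)\bra{\phi^s_{i_s}}Q\ket{\phi^s_{i_s}}=\tr(\tilde\rho_sQ)\tr(Q)=0$; (iii) the lower bound $1+u^{t,\alpha}_{i_t}\ge1/16$ from $\eps\le1/(4d)$; and (iv) a \emph{self-referential} recursion $\Psi_k\le\frac{4\eps^2}{d}+256\sqrt k\frac{\eps^3}{d^2}+\frac{32\eps^2}{d\sqrt d}\sqrt k\sqrt{\Psi_k}$ resolved by a contradiction argument, not a closed-form per-step bound. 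Your appeal to an unspecified ``martingale/total-covariance structure of the Bayesian posterior'' does not substitute for these computations.

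A second, more minor but genuine issue is your construction. You condition the Gaussian vector on $\max_P|g(P)|\le C\sqrt{\log d}$ to make $\cP_g$ a channel; but a truncated Gaussian is no longer Gaussian, so Stein's identity (Theorem~\ref{thm:GIBP}) does not apply exactly and you would have to control boundary terms everywhere it is invoked --- i.e., in the crux of the argument. The paper instead normalizes by $\|\alpha\|_2$, setting $p_\alpha(P)=\frac{1+2\tilde\alpha(P)\eps d/\|\alpha\|_2}{d^2}$: this gives a deterministic bound $|\tilde\alpha(P)|/\|\alpha\|_2\le2$ (hence a valid channel for $\eps\le1/(4d)$ with no conditioning), preserves exact Gaussianity so that integration by parts applies, and exploits the independence of $\|\alpha\|_2$ from $\alpha/\|\alpha\|_2$ throughout. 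The price is the extra term \eqref{equ3} generated when the derivative hits $1/\|\alpha\|_2$, which the paper bounds by $\cO(\sqrt k\,\eps^3/d^2)$. Your separation argument would also need redoing for the normalized ensemble (the paper uses Lipschitz concentration of Gaussian functionals, Lemmas~\ref{lem:p_alpha} and~\ref{lem:exp p_alpha}, rather than Hoeffding over signs). As it stands, the proposal is a correct outline of the paper's proof with its decisive step missing.
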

In this theorem, we show that we can improve on the general lower bound of Theorem~\ref{thm: GLB} by an exponential factor of number of qubits if the precision  parameter $\eps$ is small enough. However, this lower bound could be as well not optimal so it remains either to improve it to match the non-adaptive upper bound of \cite{flammia2020efficient} or to propose an adaptive algorithm with a number of steps matching this lower bound. With the same proof, we can generalize this lower bound to adaptive algorithms with limited {adaptivity}. Any strategy that adapts on at most $\ceil{\frac{H}{\eps^2}}$ previous observations for the problem of Pauli channel tomography using {individual} measurements requires a number of steps $N\ge \Omega\left(\min\left\{ \frac{d^4}{\sqrt{H}\eps^2}, \frac{d^5}{H\eps^2}, \frac{d^3}{\eps^2}\right\}\right).$ For instance, if the algorithm can only adapt its input state and measurement device on the previous $\ceil{\frac{d^2}{\eps^2}}$ observations then it requires  $ N\ge \Omega\left( \frac{d^{3}}{\eps^2}\right)$ steps to correctly approximate the unknown Pauli channel. 
The remaining of this section is reserved to the proof of this theorem.


\paragraph{Construction of the family $\cF$}
We start by constructing a family of Pauli quantum channels that is $\Omega(\eps)$-separated. The elements of this family have the following form, for all $x\in \cF=[M]$:
\begin{align}\label{construction-gaussian}
    \cP_x(\rho)&= \sum_{P\in \mathds{P}_n} \frac{1+2\tilde{\alpha}_x(P)\eps d/\|\alpha_x\|_2}{d^2} P\rho P = \sum_{P\in \mathds{P}_n} p_x(P)P\rho P 
\end{align}
where $\Tilde{\alpha}_x(P)=\alpha_x(P)-\frac{1}{d^2}\sum_{Q\in \mathds{P}_n} \alpha_x(Q)$,  $\alpha_x=\left(\alpha_x(P)\right)_P$ and  $p_x(P) = \frac{1+2\tilde{\alpha}_x(P)\eps d/\|\alpha_x\|_2}{d^2}$. For $x\in \cF$, $\left(\alpha_x(P)\right)_P$ are $d^2$ random variables i.i.d. as $\cN(0,1)$. It is not difficult to check that $\{p_x\}_x$ are valid probabilities for $\eps \le 1/4d$. Indeed, for all $P\in \mathds{P}_n$ we have $|\tilde{\alpha_x}(P)|\le 2\|\alpha_x\|_2$ so for  $\eps \le 1/4d$ we have  $1+2\tilde{\alpha}_x(P)\eps d/\|\alpha_x\|_2\in [0,2]$ thus $p_x(P) \in [0, 2/d^2]\subset [0,1)$ for $d\ge 2$. 

We prove the existence of the family $\cF$ by showing that two randomly chosen Pauli channels are $\Omega(\eps)$-far with high probability.
\begin{lemma}\label{lem:p_alpha}
Let $\beta$ be a random variable independent and identically distributed as $\alpha$. We have:
\begin{align*}
    \pr{\TV(p_{\alpha},p_{\beta}) < \eps/5}\le {\exp(-cd^2)}.
\end{align*}
{for a universal constant $c>0$.}
\end{lemma}
If this claim is true, then a union bound permits to show the existence of the family with the property $M=\exp(\Omega(d^2))$. Let us  prove first a lower bound on the expected $\TV$ distance between $p_{\alpha}$ and $p_{\beta}$. 
\begin{lemma}\label{lem:exp p_alpha}
    Let $\beta$ be {a random variable independent and identically distributed as}  $\alpha$. We have:
\begin{align*}
   \ex{\TV(p_\alpha,p_\beta)}\ge  \frac{7\eps}{20}. 
\end{align*}
\end{lemma}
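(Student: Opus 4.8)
The plan is to compute the expectation of $\TV(p_\alpha,p_\beta) = \frac{1}{2}\sum_{P\in\mathds{P}_n}|p_\alpha(P)-p_\beta(P)|$ directly from the definition. Writing out the coefficients, $p_\alpha(P)-p_\beta(P) = \frac{2\eps d}{d^2}\left(\frac{\tilde\alpha(P)}{\|\alpha\|_2} - \frac{\tilde\beta(P)}{\|\beta\|_2}\right)$, so
\begin{align*}
\ex{\TV(p_\alpha,p_\beta)} = \frac{\eps d}{d^2}\sum_{P\in\mathds{P}_n}\ex{\left|\frac{\tilde\alpha(P)}{\|\alpha\|_2} - \frac{\tilde\beta(P)}{\|\beta\|_2}\right|} = \eps d\,\ex{\left|\frac{\tilde\alpha(P)}{\|\alpha\|_2} - \frac{\tilde\beta(P)}{\|\beta\|_2}\right|}
\end{align*}
by symmetry over $P$ (every coordinate has the same law). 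So the whole problem reduces to lower bounding a single scalar expectation $\ex{|Z_\alpha - Z_\beta|}$ where $Z_\alpha = \tilde\alpha(P)/\|\alpha\|_2$.

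First I would control the normalization. Since $\alpha$ has $d^2$ i.i.d.\ $\cN(0,1)$ entries, $\|\alpha\|_2^2$ concentrates sharply around $d^2$; and $\tilde\alpha(P) = \alpha(P) - \frac{1}{d^2}\sum_Q\alpha(Q)$ differs from $\alpha(P)$ by a term of size $O(1/d)$ with high probability. So morally $Z_\alpha \approx \alpha(P)/d$, i.e. $d\, Z_\alpha$ is approximately a standard Gaussian $g_\alpha$, and likewise $d\, Z_\beta \approx g_\beta$ an independent standard Gaussian. Then $d\,\ex{|Z_\alpha-Z_\beta|} \approx \ex{|g_\alpha - g_\beta|} = \ex{|\cN(0,2)|} = \sqrt{2}\cdot\sqrt{2/\pi} = 2/\sqrt{\pi} \approx 1.128$, and since $\eps d \cdot \ex{|Z_\alpha - Z_\beta|} \approx \eps \cdot \frac{2}{\sqrt\pi} \approx 1.128\,\eps > \frac{7\eps}{20} = 0.35\,\eps$, there is a comfortable constant-factor slack. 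The task is to turn this heuristic into a rigorous lower bound with the explicit constant $7/20$.

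To make it rigorous I would introduce a good event $E$ (say $\|\alpha\|_2^2 \le (1+\delta)d^2$, $\|\beta\|_2^2 \le (1+\delta)d^2$, and the empirical means $|\frac{1}{d^2}\sum_Q\alpha(Q)|$, $|\frac{1}{d^2}\sum_Q\beta(Q)|$ both $\le \delta$, for a small fixed $\delta$), which by standard Gaussian/$\chi^2$ concentration has probability $\ge 1 - e^{-\Omega(d^2)}$. On $E$ one has $\|\alpha\|_2 \le \sqrt{1+\delta}\,d$ and $|\tilde\alpha(P) - \alpha(P)| \le \delta$, hence
\begin{align*}
\left|\frac{\tilde\alpha(P)}{\|\alpha\|_2} - \frac{\tilde\beta(P)}{\|\beta\|_2}\right| \ge \frac{1}{\sqrt{1+\delta}\,d}\bigl(|\alpha(P) - \beta(P)| - 2\delta\bigr)_+ .
\end{align*}
Taking expectations, restricting to $E$, and lower bounding $\ex{(\,|\alpha(P)-\beta(P)| - 2\delta\,)_+ \mathbf{1}_E} \ge \ex{(\,|\cN(0,2)| - 2\delta\,)_+} - (\text{tiny correction from }E^c)$, I get $\eps d \cdot \ex{|Z_\alpha - Z_\beta|} \ge \frac{\eps}{\sqrt{1+\delta}}\bigl(\frac{2}{\sqrt\pi} - 2\delta - o(1)\bigr)$, and choosing $\delta$ a small enough constant makes this $\ge \frac{7\eps}{20}$.

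The main obstacle is purely the bookkeeping: keeping the constants honest through the normalization-by-$\|\alpha\|_2$ step and the centering $\alpha \mapsto \tilde\alpha$, and verifying that the contributions from the low-probability bad event $E^c$ (where $\|\alpha\|_2$ could be small, making $Z_\alpha$ large) do not spoil a lower bound — for a lower bound this direction is actually the easy one, since discarding $E^c$ only loses mass, but one must still check the retained expectation on $E$ is not thereby pushed below $7/20$. I expect no conceptual difficulty beyond Gaussian concentration and the elementary computation $\ex{|\cN(0,2)|} = 2/\sqrt\pi$.
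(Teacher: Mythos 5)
Your overall route is sound and genuinely different from the paper's. The paper also reduces to $\ex{\TV(p_\alpha,p_\beta)}\ge \eps d\,\ex{|Z_\alpha-Z_\beta|}-\frac{\eps}{d}\ex{|\cdots|}$ with $Z_\alpha=\alpha(P)/\|\alpha\|_2$, but then it never introduces a good event: it exploits the exact independence of $\|\alpha\|_2$ and $\alpha/\|\alpha\|_2$ to compute the second and fourth moments of $Z_\alpha-Z_\beta$ in closed form, and lower bounds $\ex{|X|}\ge \ex{X^2}^{3/2}/\ex{X^4}^{1/2}$ by H\"older. That buys a short, fully explicit computation (yielding the weaker constant $2/5$ for the main term, which is still enough), at the price of not seeing the ``true'' value $2/\sqrt{\pi}$. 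Your approach (concentration plus a direct Gaussian computation) is more transparent about where the constant comes from and leaves more slack, but requires the bookkeeping you flag.

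There is, however, one step that fails as written. Your good event $E$ only imposes an \emph{upper} bound $\|\alpha\|_2^2\le(1+\delta)d^2$, and the claimed pointwise inequality
\begin{align*}
\left|\frac{\tilde\alpha(P)}{\|\alpha\|_2}-\frac{\tilde\beta(P)}{\|\beta\|_2}\right|\ \ge\ \frac{1}{\sqrt{1+\delta}\,d}\bigl(|\alpha(P)-\beta(P)|-2\delta\bigr)_+
\end{align*}
is false: the two terms carry \emph{different} denominators, so a mismatch between $\|\alpha\|_2$ and $\|\beta\|_2$ can cancel a genuine gap between $\alpha(P)$ and $\beta(P)$ (e.g.\ $\alpha(P)/\|\alpha\|_2=\beta(P)/\|\beta\|_2$ with $\alpha(P)\neq\beta(P)$). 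You need a two-sided bound $\big|\|\alpha\|_2/d-1\big|\le\delta'$ in the event (still $e^{-\Omega(d^2)}$-likely), and even then the correct pointwise estimate picks up an extra error term of order $\frac{\delta'}{d}\bigl(|\alpha(P)|+|\beta(P)|\bigr)$, not a uniform $O(\delta/d)$. This is harmless in expectation, since $\ex{|\alpha(P)|}=\sqrt{2/\pi}$ makes that correction $O(\delta'/d)$ against a main term of $\frac{2}{\sqrt{\pi}d}$, so your final bound survives with room to spare; but the inequality must be stated in expectation (or with the extra term), not pointwise. With that repair, and a Cauchy--Schwarz bound on the (unbounded) integrand over $E^c$, the argument goes through.
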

\begin{proof}[Proof of Lemma~\ref{lem:exp p_alpha}]
We start by writing:
\begin{align}
    \TV(p_\alpha,p_\beta)&= \frac{\eps}{d}\sum_{P\in \mathds{P}_n}\left|\frac{\tilde{\alpha}(P)}{\|\alpha\|_2}- \frac{\tilde{\beta}(P)}{\|\beta\|_2} \right| \notag
    \\&
    \ge \frac{\eps}{d}\sum_{P\in \mathds{P}_n}\left|\frac{\alpha(P)}{\|\alpha\|_2}- \frac{\beta(P)}{\|\beta\|_2} \right| \label{difference of tv1}
    \\&\quad -\frac{\eps}{d}\left|\frac{\sum_{Q\in \mathds{P}_n} \alpha(Q)}{\|\alpha\|_2}- \frac{\sum_{Q\in \mathds{P}_n} \beta(Q)}{\|\beta\|_2} \right| \label{difference of tv2}
\end{align}
where we use the triangle inequality. 

To bound the expectation  of \eqref{difference of tv2}, we use first the fact that $\|\alpha\|_2$ is independent of  $\left(\frac{\alpha(P)}{\|\alpha\|_2}\right)_P$ to show that:
\begin{align*}
    \ex{\left|\sum_{P\in \mathds{P}_n}\frac{ \alpha(P)}{\|\alpha\|_2}\right|} \ex{\|\alpha\|_2}&=\ex{\left|\sum_{P\in \mathds{P}_n}\frac{ \alpha(P)}{\|\alpha\|_2}\right|\cdot\|\alpha\|_2 }
    =\ex{ \left|\sum_{P\in \mathds{P}_n} \alpha(P)\right|}.
\end{align*}
Then, by the Cauchy-Schwarz inequality:
\begin{align*}
    \ex{\left|\sum_{P\in \mathds{P}_n} \alpha(P)\right|}\le \sqrt{\ex{\left(\sum_{P\in \mathds{P}_n} \alpha(P)\right)^2}}= d
\end{align*}
and by the Hölder's inequality:
\begin{align*}
    \ex{\|\alpha\|_2} \ge \sqrt{\frac{{\left(\ex{\|\alpha\|_2^2}\right)^3}}{\ex{\|\alpha\|_2^4}}}=\sqrt{\frac{(d^2)^3}{d^2(d^2-1)+3d^2}}\ge \frac{d}{2}.
\end{align*}
Finally, we can upper bound the expectation of \eqref{difference of tv2} as follows:
\begin{align}\label{expected tv dist1}
   & \ex{\frac{\eps}{d}\left|\frac{\sum_{Q\in \mathds{P}_n} \alpha(Q)}{\|\alpha\|_2}- \frac{\sum_{Q\in \mathds{P}_n} \beta(Q)}{\|\beta\|_2} \right| } 
   \le 2\ex{\frac{\eps}{d}\left|\frac{\sum_{P\in \mathds{P}_n} \alpha(P)}{\|\alpha\|_2}\right|} 
     =\frac{2\eps}{d} \frac{\ex{ \left|\sum_{P\in \mathds{P}_n} \alpha(P)\right|} }{\ex{\|\alpha\|_2}} \le  \frac{4\eps}{d}.
\end{align}

We move to lower bound the expectation  of \eqref{difference of tv1}.  First, using the fact that $\|\alpha\|_2$ is independent of  $\left(\frac{\alpha(P)}{\|\alpha\|_2}\right)_P$ we have: 
\begin{align*}
    \ex{\frac{\alpha(P)^2}{\|\alpha\|_2^2}}\ex{\|\alpha\|_2^2}= \ex{ \alpha(P)^2}
\end{align*}
which implies:
\begin{align}\label{first-diff-1}
    \ex{\left|\frac{\alpha(P)}{\|\alpha\|_2}- \frac{\beta(P)}{\|\beta\|_2} \right|^2 } &= 2\ex{\frac{\alpha(P)^2}{\|\alpha\|_2^2} }-2\ex{\frac{\alpha(P)\beta(P)}{\|\alpha\|_2\|\beta\|_2}}
   =2\frac{\ex{\alpha(P)^2}}{\ex{\|\alpha\|_2^2 }}=\frac{2}{d^2}.
\end{align}
Similarly, using the fact that $\|\alpha\|_2$ is independent of  $\left(\frac{\alpha(P)}{\|\alpha\|_2}\right)_P$ we have: 
\begin{align*}
    \ex{\frac{\alpha(P)^4}{\|\alpha\|_2^4}}\ex{\|\alpha\|_2^4}= \ex{ \alpha(P)^4}.
\end{align*}
This equality together with the  Hölder's inequality {(or successive Cauchy-Schwarz inequality)} imply:
\begin{align}\label{first-diff-2}
   & \ex{\left|\frac{\alpha(P)}{\|\alpha\|_2}- \frac{\beta(P)}{\|\beta\|_2} \right|^4 }\le 16\ex{\frac{\alpha(P)^4}{\|\alpha\|_2^4} }
   = 16\frac{\ex{\alpha(P)^4} }{\ex{\|\alpha\|_2^4}}= \frac{48}{d^2(d^2-1)+3d^2}\le \frac{48}{d^4}.
\end{align}
Using the inequalities \eqref{first-diff-1} and \eqref{first-diff-2} and the Hölder's inequality we obtain the following lower bound on the  expectation of \eqref{difference of tv1}:
\begin{align}\label{expected tv dist2}
    &\ex{\frac{\eps}{d}\sum_{P\in \mathds{P}_n}\left|\frac{\alpha(P)}{\|\alpha\|_2}- \frac{\beta(P)}{\|\beta\|_2} \right|}= \eps d \ex{\left|\frac{\alpha(P)}{\|\alpha\|_2}- \frac{\beta(P)}{\|\beta\|_2} \right| }\notag
    \\& \ge \eps d  \frac{\left(\ex{\left|\frac{\alpha(P)}{\|\alpha\|_2}- \frac{\beta(P)}{\|\beta\|_2} \right|^2 }\right)^{3/2} }{\left(\ex{\left|\frac{\alpha(P)}{\|\alpha\|_2}- \frac{\beta(P)}{\|\beta\|_2} \right|^4 }\right)^{1/2} }      \ge \eps d\sqrt{\frac{8/d^6}{48/d^4}}\ge \frac{2\eps}{5}. 
\end{align}
Therefore, using the inequalities \eqref{difference of tv1}, \eqref{expected tv dist1} and \eqref{expected tv dist2}, the expected value of the $\TV$-distance satisfies:
\begin{align*}
   \ex{\TV(p_\alpha,p_\beta)}\ge \frac{2\eps}{5}-\frac{4\eps}{d}\ge \frac{7\eps}{20} \quad \text{ for } d\ge 80.
\end{align*}
\end{proof}
Once we have a lower bound on the expected value of $\TV(p_\alpha,p_\beta)$, we can proceed to prove Lemma~\ref{lem:p_alpha}. 

\begin{proof}[Proof of Lemma~\ref{lem:p_alpha}]
We want to show that the function $\TV(p_\alpha,p_\beta)$ concentrates around its mean. Let ${(\gamma, \delta)}\in \left(\mathds{R}^{d^2}\right)^2$ and ${(\gamma', \delta')}\in \left(\mathds{R}^{d^2}\right)^2$ be two couples of  vectors. 
By the reverse triangle inequality we have:
\begin{align*}
    |\TV(p_\gamma,p_\delta)-\TV(p_{\gamma'},p_{\delta'})|
    &\le |\TV(p_{\gamma},p_{\delta})-\TV(p_{\gamma'},p_{\delta})|+|\TV(p_{\gamma'},p_{\delta})-\TV(p_{\gamma'},p_{\delta'})|
    \\&\le \TV(p_{\gamma},p_{\gamma'})+\TV(p_{\delta},p_{\delta'}).
\end{align*}
 Define  $E\coloneqq \{\gamma\in \mathds{R}^{d^2} : \|\gamma\|_2 > d/4\}$  
and  consider the case where  $(\gamma, \delta)\in E^2$. Recall the definition for all $P\in \mathds{P}_n$:
\begin{align*}
   p_{\gamma}(P)=  \frac{1+2\tilde{\gamma}(P)\eps d/\|\gamma\|_2}{d^2},
\end{align*}
where $\Tilde{\gamma}(P)=\gamma(P)-\frac{1}{d^2}\sum_{Q\in \mathds{P}_n} \gamma(Q).$
We have by the triangle inequality:
\begin{align*}
  \TV(p_\gamma, p_\delta) 
   &= \frac{1}{2}\sum_{P\in \mathds{P}_n} \left|\frac{1+2\tilde{\gamma}(P)\eps d/\|\gamma\|_2}{d^2}-\frac{1+2\tilde{\delta}(P)\eps d/\|\delta\|_2}{d^2}\right|
   \\&\le \frac{ \eps}{d}\sum_{P\in \mathds{P}_n} \left| \frac{\gamma(P)}{\|\gamma\|_2}-\frac{\delta(P)}{\|\delta\|_2} \right| +\frac{\eps}{d}\left|\sum_{P\in \mathds{P}_n} \frac{\gamma(P)}{\|\gamma\|_2}-\frac{\delta(P)}{\|\delta\|_2}\right|
   \\&\le \frac{ \eps}{d}\sum_{P\in \mathds{P}_n} \left| \frac{\gamma(P)}{\|\gamma\|_2}-\frac{\gamma(P)}{\|\delta\|_2} \right| +\frac{ \eps}{d}\sum_{P\in \mathds{P}_n} \left| \frac{\gamma(P)}{\|\delta\|_2}-\frac{\delta(P)}{\|\delta\|_2} \right| 
  \\ &\le\frac{ 2\eps}{d}\|\gamma\|_{1}\frac{\left| \|\gamma\|_2-\|\delta\|_2\right| }{\|\gamma\|_2\|\delta\|_2} +\frac{ 2\eps}{d}\frac{\|\gamma-\delta\|_{1}}{\|\delta\|_2} \\
   &\leq 2\eps\|\gamma\|_{2} \frac{\|\gamma-\delta\|_2}{\|\gamma\|_2\|\delta\|_2} + 2\eps \frac{\|\gamma-\delta\|_{2}}{\|\delta\|_2} 
   \\&\le 2\eps\frac{4}{d} \|\gamma-\delta\|_2 +2\eps \|\gamma-\delta\|_2 \frac{4}{d}
   = \frac{16\eps}{d} \|\gamma-\delta\|_2.
\end{align*}
Here we used that $\|\gamma\|_1\leq d\|\gamma\|_2$, as $\gamma$ is a vector with $d^2$ entries, and our assumption on the norms in the last inequality.
Hence, on the set $E^2$, by using the Cauchy-Schwarz inequality: 
\begin{align*}
    |\TV(p_\gamma,p_\delta)-\TV(p_{\gamma'},p_{\delta'})|
  &\le \TV(p_{\gamma},p_{\gamma'})+\TV(p_{\delta},p_{\delta'})
     \\&\le \frac{16\eps}{d} \|\gamma-\gamma'\|_2+\frac{16\eps}{d} \|\delta-\delta'\|_2
    \\&\le \frac{16\sqrt{2}\eps}{d} \sqrt{\|\gamma-\gamma'\|_2^2+ \|\delta-\delta'\|_2^2}
     \\&=: L \|(\gamma,\delta)-(\gamma',\delta')\|_2.
\end{align*}
Moreover, the function $(\gamma,\delta)\mapsto\TV(p_\gamma,p_\delta)$ can be extended to an $L $-Lipschitz function with respect to the $2$-norm on the whole set  $ \mathds{R}^{d^2}\times\mathds{R}^{d^2} $ using the following definition for every $(\gamma,\delta)\in \mathds{R}^{d^2}\times\mathds{R}^{d^2}$ {(Kirszbraun theorem, see App.~\ref{Kirszbraun theorem}):}
\begin{align*}
   f(\gamma,\delta)= \inf_{ (\gamma',\delta')\in E^2} \left\{ \TV(p_{\gamma'},p_{\delta'}) +L \|(\gamma,\delta)-(\gamma',\delta')\|_2\right\}.
\end{align*}
{Now consider $(\alpha, \beta)$ as a  couple of standard Gaussian vectors.}
We can control the expected value of $f(\alpha, \beta)$ using the lower bound on the expected value of  $\TV(p_\alpha,p_\beta)$ (Lemma~\ref{lem:exp p_alpha}) as follows:
\begin{align*}
    \ex{f(\alpha, \beta)}&= \ex{f{\mathbb{1}_{E^2}}(\alpha, \beta)}+\ex{f {\mathbb{1}_{(E^2)^c}}(\alpha, \beta)}
    \\&\ge \ex{f{\mathbb{1}_{E^2}}(\alpha, \beta)}
    \ge \frac{7\eps}{20}-8\eps{\exp\left(-\frac{d^2}{32} \right)}\ge \frac{3\eps}{10}
\end{align*}
because {when $(\alpha, \beta)\in \mathbb{1}_{E^2}$, we have $f(\alpha, \beta)= \TV(p_\alpha,p_\beta)$ thus}
\begin{align*}
   | \ex{f{\mathbb{1}_{E^2}}(\alpha,\beta)}-\ex{\TV(p_\alpha,p_\beta)} |
  &= {| \ex{\TV(p_\alpha,p_\beta){\mathbb{1}_{E^2}(\alpha,\beta)}}-\ex{\TV(p_\alpha,p_\beta)} |}
   \\&= \ex{\TV(p_\alpha,p_\beta){\mathbb{1}_{(E^2)^c}}(\alpha,\beta)}\le 8\eps \pr{E^c}\le 8\eps{\exp\left(-\frac{d^2}{32} \right)}
\end{align*}
where we have used the fact that $\TV(p_\alpha,p_\beta)\le 4\eps$ and  $\pr{E^c}=\pr{\|\alpha\|_2\le d/4}\le {\exp\left(-\frac{d^2}{32} \right)}$. Indeed, we can apply the concentration of Lipschitz functions of Gaussian random variables {(see App.~\ref{concentration})} for the function $\alpha\rightarrow \|\alpha\|_2$ which  is $1$-Lipschitz by the triangle inequality:
\begin{align*}
    |\|\alpha\|_2-\|\beta\|_2|\le \|\alpha-\beta\|_2
\end{align*}
and its expectation satisfies 
 $\ex{\|\alpha\|_2}\ge d/2$, thus:
\begin{align*}
    \pr{E^c}&=\pr{\|\alpha\|_2\le d/4}
    = \pr{ \|\alpha\|_2-\ex{\|\alpha\|_2}\le -d/4}
    \le {\exp\left(-\frac{d^2}{32} \right)}.
\end{align*}

We proceed with the same strategy for the function $f$ which is $L $-Lipschitz where $L=\frac{16\sqrt{2}\eps}{d}$. By the concentration of Lipschitz functions of Gaussian random variables {(see App.~\ref{concentration})}, we obtain for all $s\ge 0:$ 
\begin{align*}
    \pr{|f(\alpha,\beta)-\ex{f(\alpha,\beta)}|>s}\le {2\exp\left(-\frac{d^2s^2}{2^{10}\eps^2}\right)}
\end{align*}
Then, we can deduce the upper bound on the probability:
\begin{align*}
    \pr{\TV(p_\alpha,p_\beta) < \eps/5}
   &=\pr{\TV(p_\alpha,p_\beta) < \eps/5, (\alpha,\beta)\in E^2} +\pr{\TV(p_\alpha,p_\beta) < \eps/5, (\alpha,\beta)\notin E^2}
    \\&\le \pr{f(\alpha,\beta) < \eps/5, (\alpha,\beta)\in E^2} +\pr{(\alpha,\beta)\notin E^2}
    \\&\le \pr{f(\alpha,\beta)-\ex{f} < -\eps/10} +2\pr{\alpha\notin E}
    \\&\le{6\exp\left(-\frac{d^2}{2^{10}\cdot 10^2}\right)}\le \exp(-{cd^2})
\end{align*}
{for a universal constant $c>0$.}
\end{proof}

Hence we construct an $\eps/5$-separated family $\cF$ of {cardinality} $\exp(\Omega(d^2))$. By changing $\eps\leftrightarrow 5\eps $ in the definition of $\{\cP_x\}_{x\in \cF}$, the family becomes $\eps$-separated for $\eps\le 1/(20d)$ {and $d\ge 80$}. 

Once the family $\cF$ is constructed, we can use it to encode a message in $[M]$  to the sequence of  outcomes produced by  the learning algorithm
 when provided with the quantum Pauli channel $\cP= \cP_{x}$ in the family $\cF$. 
 More precisely, the learning algorithm  
chooses its inputs states,  performs adaptive individual measurements, and
observes a sequence of outcomes that will be transmitted to the decoder. 
Upon receiving this sequence of outcomes, the decoder runs the data-processing part of the learning algorithm to learn the Pauli channel $\cP= \cP_{x}$.  
 Therefore a $1/3$-correct algorithm can decode with a probability of failure at most $1/3$ by finding the closest quantum Pauli channel in the family $\cF$ to the channel approximated by the algorithm. By Fano's inequality, the encoder and decoder should share at least $\Omega(\log(M))\ge \Omega(d^2)$ nats of information. {More precisely, if we}   denote by $X$ the uniform random variable on the set $[M]$ representing the {message being encoded} and $I_1,\dots,I_N$ the sequence of outcomes produced by the data-acquisition part of the learning algorithm, we have: 
 \begin{lemma}\label{fano-adaptive}
The mutual information between the encoder and the {outcomes produced by the learning algorithm} is at least
 \begin{align*}
{\cI(X: I_1, \dots ,I_N)} \ge {(2/3)} \log(M) -\log(2)\ge \Omega(d^2).
\end{align*}
\end{lemma}

This lemma is similar to Lemma \ref{inequ-info-lb}. However, the constructions we use to prove these lemmas as well as the type of algorithms (or the distributions of the outcomes) are quite different.


\paragraph{Upper bound on the mutual information}
Since we have a lower bound on the mutual information, it remains to prove an upper bound depending on the number of steps $N$ and the precision $\eps$. 
By upper bounding the mutual information between $X$ and $I_1,\dots,I_N$ and using a contradiction argument, we prove Theorem~\ref{thm: LB-adaptive} which we recall:
\begin{theorem}[Restatement of Theorem~\ref{thm: LB-adaptive}]
Let  $\eps\le 1/(20d)$  and $d\ge 80$. Adaptive strategies for the problem of Pauli channel tomography using  one copy of the channel at each step and  ancilla-free {individual} measurements  require a number of steps $N$ satisfying:
\begin{align*}
    N \ge  \Omega\left( \frac{d^{5/2}}{\eps^2}\right).
\end{align*}
Furthermore, any adaptive strategy {with limited adaptivity} $\mathcal{O}(d^2/\eps^2)$  requires a number of steps $N$ satisfying 
\begin{align*}
N \ge  \Omega\left( \frac{d^3}{\eps^2}\right).
\end{align*}
\end{theorem}
\begin{proof}[Proof of Theorem~\ref{thm: LB-adaptive}] For a random vector $Y \coloneqq (Y_1, \dots, Y_M)$, we use the notation $\mathds{E}_{x}(Y)=\frac{1}{M}\sum_{x=1}^M Y_x $. For $k\in [N]$ and  a random vector $Y$ indexed by $i_1, i_2, \dots, i_k$, we use the notation $\mathds{E}_{i\sim q_{\le k-1}}(Y)= \sum_{i_1, \dots, i_{k-1}} \left(\prod_{t=1}^{k-1} \lambda_{i_t}^t\right)\cdot Y_{i_1, \dots, i_{k-1}}$.
Recall that we can write the mutual information as: $\cI(X: I_1,\dots,I_N)=\sum_{k=1}^N\cI(X:I_k| I_{\le k-1})$. 
Fix $k\in [N]$, by Lemma~\ref{Upper bound on cond mutual info}, we can upper bound  the conditional mutual information:
\begin{align}\label{eq:upper-bound-cond-mut-inf}
    \cI(X:I_k| I_{\le k-1})\le 3\mathds{E}_{x}\mathds{E}_{i\sim q_{\le k-1}} \Bigg[ \sum_{i_k}\frac{\lambda_{i_k}^k}{d} (u_{i_k}^{k,x})^2\Bigg],
\end{align}
where we use the notation
\begin{align*}
    u_{i_k}^{k,x}&=\bra{\phi_{i_k}^k} \left(d\cP_x(\rho_k)-\mathds{I}\right)\ket{\phi_{i_k}^k}
    \\&=\bra{\phi_{i_k}^k} \left(\sum_{P\in \mathds{P}_n} \frac{2\tilde{\alpha}_x(P)\eps}{\|\alpha_x\|_2} P\rho_k P\right)\ket{\phi_{i_k}^k}
    \\&=\sum_{P\in \mathds{P}_n} \frac{2\alpha_x(P)\eps}{\|\alpha_x\|_2}\bra{\phi_{i_k}^k} P\rho_kP\ket{\phi_{i_k}^k}
    -\sum_{P,Q\in \mathds{P}_n} \frac{2\alpha_x(Q)\eps}{d^2\|\alpha_x\|_2}\bra{\phi_{i_k}^k} P\rho_kP\ket{\phi_{i_k}^k}
     \\&=\sum_{P\in \mathds{P}_n} \frac{2\alpha_x(P)\eps}{\|\alpha_x\|_2}\bra{\phi_{i_k}^k} P\rho_kP\ket{\phi_{i_k}^k}-\sum_{P\in \mathds{P}_n} \frac{2\alpha_x(P)\eps}{d\|\alpha_x\|_2}.
\end{align*}
Note that for adaptive strategies the vectors  $\ket{\phi_{i_k}^k}=\ket{\phi_{i_k}^k(i_1,\dots, i_{k-1})}$ and the states  $\rho_k=\rho_k(i_1,\dots, i_{k-1})$ depend on the previous observations $(i_1,\dots, i_{k-1})$ for all $k\in [N]$. Similarly, {for a vector $\alpha= \left(\alpha(P)\right)_{P\in \mathds{P}_n}$} we denote:
\begin{align*}
    u_{i_k}^{k,\alpha}&=\sum_{P\in \mathds{P}_n} \frac{2\alpha(P)\eps}{\|\alpha\|_2}\bra{\phi_{i_k}^k} P\rho_kP\ket{\phi_{i_k}^k}-\sum_{P\in \mathds{P}_n} \frac{2\alpha(P)\eps}{d\|\alpha\|_2}
    \\&=\frac{2}{\|\alpha\|_2}\sum_{P\in \mathds{P}_n} \alpha(P)\eps\bra{\phi_{i_k}^k} P(\rho_k-\mathds{I}/d)P\ket{\phi_{i_k}^k}.
\end{align*}
 We have $\sum_{i_k} \lambda_{i_k}^k u_{i_k}^{k,x}=\tr\left(d\cP_x(\rho_k)-\mathds{I}\right)=0$ as the Pauli channel $ \cP_x$ is trace preserving.

Our goal is to bound the expectation in \eqref{eq:upper-bound-cond-mut-inf}. Since $x\sim \unif[M]$ and $(\alpha_x)_x$  are random variables  i.i.d.\ as $\alpha$, we can see the RHS of \eqref{eq:upper-bound-cond-mut-inf} as an empirical mean.
 Note that the cardinality of the constructed family $M=|\cF|$ is of order $\exp(\Omega(d^2))$, so  every {empirical mean $\mathds{E}_x g(\alpha_x)$ of a bounded function $g$ can be approximated by the expected value $\mathds{E} g(\alpha)$} with $\alpha$ following the distribution explained in the construction, the difference will be, by Hoeffding's inequality,  of order $\exp(-\Omega(d^2))$ so negligible.
 More formally, in Proposition~\ref{prop:approx}, it is shown that there is a universal constant $C>0$ such that with probability at least $9/10$ we have:
     \begin{align}\label{inequ-approx}
        & \sum_{k=1}^N\frac{1}{M}\sum_{x=1}^M\sum_{i_1,\dots,i_{k-1}} \left(\prod_{t=1}^{k-1}\lambda_{i_t }^t\left(\frac{1+u_{i_t}^{t,x}}{d}\right)\right)\sum_{i_k}\frac{\lambda_{i_k}^k}{d}(u_{i_k}^{k,x})^2 \notag
        \\&\le \sum_{k=1}^N\mathds{E}_\alpha\Bigg[\sum_{i_1,\dots,i_{k-1}} \left(\prod_{t=1}^{k-1}\lambda_{i_t }^t\left(\frac{1+u_{i_t}^{t,\alpha}}{d}\right)\right)\sum_{i_k}\frac{\lambda_{i_k}^k}{d}(u_{i_k}^{k,\alpha})^2\Bigg] 
         +N\eps^2\exp(-Cd^2).
    \end{align} 
The proof relies on Hoeffding's inequality applied on the random variable 
\begin{align*}
    \sum_{k=1}^N\sum_{i_1,\dots,i_{k-1}} \left(\prod_{t=1}^{k-1}\lambda_{i_t }^t\left(\frac{1+u_{i_t}^{t,x}}{d}\right)\right)\sum_{i_k}\frac{\lambda_{i_k}^k}{d}(u_{i_k}^{k,x})^2
\end{align*}
that is bounded by $16N\eps^2$ (as $\frac{1}{d}\sum_{i_t} \lambda_{i_{t}}^{t}(u_{i_t}^{t,\alpha})^2\le 16\eps^2$, see Lemma \ref{lem: useful}). So the random variable $$\frac{1}{M}\sum_{x=1}^M \sum_{k=1}^N\sum_{i_1,\dots,i_{k-1}} \left(\prod_{t=1}^{k-1}\lambda_{i_t }^t\left(\frac{1+u_{i_t}^{t,x}}{d}\right)\right)\sum_{i_k}\frac{\lambda_{i_k}^k}{d}(u_{i_k}^{k,x})^2$$ is essentially an empirical mean of i.i.d. bounded random variables.

Now since the inequality \eqref{inequ-approx} holds with probability at least $9/10$, we can ask in our construction \eqref{construction-gaussian} that the random vectors  $(\alpha_x)_{x\in \mathcal{F}} $ satisfy also  the inequality \eqref{inequ-approx}. The existence of such family is  guaranteed  by the union bound as the total error probability is at most $1/10+ {\exp(-cd^2)}<1$ (Lemma \ref{lem:p_alpha}).

Therefore, using the inequality \eqref{inequ-approx}, we obtain the upper bound on the mutual information:
 \begin{align}\label{equ:approximation_mutual_mcdiarmid}
   \sum_{k=1}^N \cI(X:I_k| I_{\le k-1})\notag 
   &\le 3\sum_{k=1}^N\mathds{E}_{x,i\sim q_{\le k-1}}  \sum_{i_k}\frac{\lambda_{i_k}^k}{d} (u_{i_k}^{k,x})^2\nonumber
    \\&= 3\sum_{k=1}^N \frac{1}{M}\sum_{x=1}^M\!\sum_{i_1,\dots,i_{k-1}}\! \left(\!\prod_{t=1}^{k-1}\lambda_{i_t }^t\left(\frac{1+u_{i_t}^{t,x}}{d}\right)\!\right)\!\sum_{i_k}\!\frac{\lambda_{i_k}^k}{d}(u_{i_k}^{k,x})^2\nonumber
    \\&\le 3\sum_{k=1}^N\mathds{E}_\alpha\Bigg[\!\sum_{i_1,\dots,i_{k-1}}\! \left(\prod_{t=1}^{k-1}\lambda_{i_t }^t\left(\!\frac{1+u_{i_t}^{t,\alpha}}{d}\right)\!\right)\sum_{i_k}\frac{\lambda_{i_k}^k}{d}(u_{i_k}^{k,\alpha})^2\Bigg] +3N\eps^2\exp(-Cd^2)
    \\&= 3\sum_{k=1}^N\mathds{E}_{\le k}\mathds{E}_\alpha\Bigg[ \left(\prod_{t=1}^{k-1}\left(1+u_{i_t}^{t,\alpha}\right)\right)(u_{i_k}^{k,\alpha})^2\Bigg]+ 3N\eps^2\exp(-Cd^2)\notag
\end{align}
where we use the notation ${\mathds{E}_{\le k}[Y(i_1,\dots, i_k)]  =  \frac{1}{d^k}\sum_{i_1,\dots,i_k} {\left(\prod_{t=1}^k \lambda_{i_{t}}^{t}\right)} Y(i_1,\dots, i_k)}$. 
Observe that for non-adaptive strategies, we can simplify these large products using the fact $u_{i_t}^{t,\alpha}$ does not depend on $(i_1, i_2,\dots, i_{t-1})$.
We obtain in this case an upper bound on the mutual information:
\begin{align*}
    \cI(X:I_1,\dots,I_N)\!&\le\! 3\!\sum_{k=1}^N\mathds{E}_{ k, \alpha}\!\Big[ (u_{i_k}^{k,\alpha})^2\Big]\!+\!3N\eps^2\exp(-Cd^2),
\end{align*}
{where we use the notation ${\mathds{E}_{k}[Y( i_k)]  =  \frac{1}{d}\sum_{i_k}  \lambda_{i_{k}}^{k} Y(i_k)}$.}
For this expression, using methods similar to the proof of Theorem~\ref{thm:LBNA}, one can obtain a bound of the form $\mathds{E}_{ k}\mathds{E}_\alpha\Big[ (u_{i_k}^{k,\alpha})^2\Big] = \cO(\frac{\eps^2}{d})$ which would lead to a lower bound of $\Omega(\frac{d^3}{\eps^2})$ as in Theorem~\ref{thm:LBNA}. However, for adaptive strategies, we can not simplify the terms $(1+u_{i_t}^{t,\alpha})$ for $t<k$ because $(u_{i_k}^{k,\alpha})^2$ depends on the previous observations $(i_1,\dots, i_{k-1})$. For this reason, we use Gaussian integration by parts (see Theorem~\ref{thm:GIBP}) to break the dependency between the variables in the last expectation.   Recall that for all $t,i_t$, $\Tilde{\rho_t}=\rho_t-\mathds{I}/d$ and:
\begin{align*}
    u_{i_t}^{t,\alpha}&=\frac{2}{\|\alpha\|_2}\sum_{P\in \mathds{P}_n} \alpha(P)\eps\bra{\phi_{i_t}^t} P\tilde{\rho}_tP\ket{\phi_{i_t}^t}.
\end{align*}
 Using the fact that $\|\alpha\|_2$ is independent of  $\{\alpha(P)/\|\alpha\|_2\}_P$,  {we have, for fixed $i_1, \dots, i_k$ (the expectation is on $\alpha$)}:
\begin{align*}
   &\mathds{E}_{\alpha}\left(\|\alpha\|_2^2\right) \mathds{E}_{\alpha}\left(\left(\prod_{t=1}^{k-1} \left(1+u_{i_t}^{t,\alpha}\right)\right) (u_{i_k}^{k,\alpha})^2\right)
   \\&= 2\eps \sum_{P\in \mathds{P}_n}\bra{\phi_{i_k}^k} P\tilde{\rho}_kP\ket{\phi_{i_k}^k} \mathds{E}_{\alpha}\left(\|\alpha\|_2^2\right)
    \mathds{E}_{\alpha}\left( \frac{\alpha(P)}{\|\alpha\|_2} \left(u_{i_k}^{k,\alpha}\right) \prod_{t=1}^{k-1} \left(1+u_{i_t}^{t,\alpha}\right)\right)
  \\&= 2\eps \sum_{P\in \mathds{P}_n}
  \bra{\phi_{i_k}^k} P\tilde{\rho}_kP\ket{\phi_{i_k}^k} 
    \mathds{E}_{\alpha}\left( \alpha(P) \left(\|\alpha\|_2u_{i_k}^{k,\alpha}\right) \prod_{t=1}^{k-1} \left(1+u_{i_t}^{t,\alpha}\right)\right)
     \\&=2\eps \sum_{P\in \mathds{P}_n}\bra{\phi_{i_k}^k} P\tilde{\rho}_kP\ket{\phi_{i_k}^k}\mathds{E}_{\alpha}\left( \alpha(P)F(\alpha)\right),
\end{align*}
where {$\Tilde{\rho_k}=\rho_k-\mathds{I}/d$ and }$F(\alpha)=\left(\|\alpha\|_2 u_{i_k}^{k,\alpha}\right) \prod_{t=1}^{k-1} \left(1+u_{i_t}^{t,\alpha}\right)$. {Note that the term $\bra{\phi_{i_k}^k} P\tilde{\rho}_kP\ket{\phi_{i_k}^k} $ can be factored out of the expectation because it does not depend on $\alpha$ but only on the previous observations $(i_1, \dots, i_{k-1})$ which are fixed here.}

Gaussian integration by parts (see Theorem~\ref{thm:GIBP}) implies:
 \begin{align*}
     \mathds{E}_{\alpha}\left( \alpha(P)F(\alpha)\right)&=\mathds{E}_{\alpha}\left( \partial_P F(\alpha)\right)
     \\&= 2\eps\bra{\phi_{i_k}^k} P\tilde{\rho}_kP\ket{\phi_{i_k}^k}\mathds{E}_{\alpha}\left( \prod_{t=1}^{k-1} \left(1+u_{i_t}^{t,\alpha}\right)\right)  +\sum_{s=1}^{k-1}\mathds{E}_{\alpha}\left(\|\alpha\|_2 u_{i_k}^{k,\alpha} \cdot {\left(\partial_P u_{i_s}^{s,\alpha}\right)} \cdot   \prod_{t\in  [k-1] \setminus{s}} \left(1+u_{i_t}^{t,\alpha}\right)\right).
 \end{align*}
Moreover, we have 
\begin{align*}
    \|\alpha\|_2\partial_P  u_{i_s}^{s,\alpha}
    &=2\frac{\bra{\phi_{i_s}^s} P\tilde{\rho}_sP\ket{\phi_{i_s}^s}\eps\|\alpha\|_2 }{\|\alpha\|_2}
    - 2\frac{ \partial_P\|\alpha\|_2\sum_{P\in \mathds{P}_n} \alpha(P)\bra{\phi_{i_s}^s} P\tilde{\rho}_sP\ket{\phi_{i_s}^s}\eps  }{\|\alpha\|_2}
    \\&=2\bra{\phi_{i_s}^s} P\tilde{\rho}_sP\ket{\phi_{i_s}^s}\eps-\frac{1}{\|\alpha\|_2}\alpha(P) u_{i_s}^{s,\alpha}
\end{align*}
and recall the notation $${\mathds{E}_{\le k}[Y(i_1,\dots, i_k)]  \coloneqq  \frac{1}{d^k}\sum_{i_1,\dots,i_k}{\left(\prod_{t=1}^k \lambda_{i_{t}}^{t}\right)}Y(i_1,\dots, i_k)},$$ hence
\begin{align*}
&\mathds{E}_{\le k}\mathds{E}_{\alpha}\left(\left(\prod_{t=1}^{k-1} \left(1+u_{i_t}^{t,\alpha}\right)\right)(u_{i_k}^{k,\alpha})^2\right)
  \\& =  \mathds{E}_{\le k} \frac{2\eps}{d^2}\sum_{P\in \mathds{P}_n} \bra{\phi_{i_k}^k} P\tilde{\rho}_kP\ket{\phi_{i_k}^k}\mathds{E}_{\alpha}\left( \alpha(P)F(\alpha)\right)
     \\&=  \mathds{E}_{\le k} \frac{4\eps^2}{d^2}\sum_{P\in \mathds{P}_n}\bra{\phi_{i_k}^k} P\tilde{\rho}_kP\ket{\phi_{i_k}^k}^2\mathds{E}_{\alpha}\left( \prod_{t=1}^{k-1} \left(1+u_{i_t}^{t,\alpha}\right)\right) \tag{L1}\label{equ1}
     \\&\quad +  \mathds{E}_{\le k}\frac{4\eps^2}{d^2}\sum_{P\in \mathds{P}_n}\sum_{s=1}^{k-1} \bra{\phi_{i_k}^k} P\tilde{\rho}_kP\ket{\phi_{i_k}^k}\bra{\phi_{i_s}^s} P\tilde{\rho}_sP\ket{\phi_{i_s}^s}
     \mathds{E}_{\alpha}\left(  u_{i_k}^{k,\alpha}\prod_{t\in  [k-1] \setminus{s}} \left(1+u_{i_t}^{t,\alpha}\right)\right) \tag{L2}\label{equ2}
     \\&\quad -  \mathds{E}_{\le k}\frac{2\eps}{d^2}\sum_{P\in \mathds{P}_n}\bra{\phi_{i_k}^k} P\tilde{\rho}_kP\ket{\phi_{i_k}^k}
      \sum_{s=1}^{k-1} \mathds{E}_{\alpha}\left(\frac{\alpha(P)}{\|\alpha\|_2} u_{i_k}^{k,\alpha}u_{i_s}^{s,\alpha}\prod_{t\in  [k-1] \setminus{s}} \left(1+u_{i_t}^{t,\alpha}\right)\right) \tag{L3}\label{equ3}.
 \end{align*}
 
We analyze the latter expressions line by line. Our goal is to upper bound these terms better with some expression improving the naive upper bound $\cO(\eps^2)$ on the conditional mutual information. Let us start by line~\eqref{equ1}, we have 
\begin{align*}
    \sum_{P\in \mathds{P}_n}\frac{1}{d}\sum_{i_k} \lambda_{i_{k}}^{k}\bra{\phi_{i_k}^k} P\tilde{\rho}_kP\ket{\phi_{i_k}^k}^2&\le  \sum_{P\in \mathds{P}_n}\frac{1}{d}\cdot\tr(P\tilde{\rho}_k^2 P) \sum_{P\in \mathds{P}_n}\frac{1}{d}\cdot\tr(\tilde{\rho}_k^2)\le d,
\end{align*}
so using  $\frac{1}{d}\sum_{i_t}\lambda_{i_{k}}^{k} (1+u_{i_t}^{t,\alpha})=1$ we can upper bound the line \eqref{equ1} as follows:
\begin{align*}
    \eqref{equ1}\le  \mathds{E}_{\le k-1}\frac{4\eps^2}{d} \mathds{E}_{\alpha}\left( \prod_{t=1}^{k-1} \left(1+u_{i_t}^{t,\alpha}\right)\right) = \frac{4\eps^2}{d}.
\end{align*}
This upper bound has the same order as for non-adaptive strategies. So we expect that the contribution of line \eqref{equ1} will not affect much the overall upper bound on the conditional mutual information. Next we move to line~\eqref{equ3}, first we show a useful inequality:
    \begin{lemma}\label{lem: useful}
    Let $t\in [N]$. Recall that $u_{i_t}^{t,\alpha}
    =\frac{2}{\|\alpha\|_2}\sum_{P\in \mathds{P}_n} \alpha(P)\eps\bra{\phi_{i_t}^t} P\tilde{\rho_t}P\ket{\phi_{i_t}^t}$. We have:
    \begin{align*}
         \frac{1}{d}\sum_{i_t} \lambda_{i_{t}}^{t}(u_{i_t}^{t,\alpha})^2\le 16\eps^2.
    \end{align*}
    \end{lemma}
    Observe that if we apply this upper bound directly on the expression of the  conditional mutual information (Lemma~\ref{Upper bound on cond mutual info}) we obtain an upper bound ${\cI(X:I_1, \dots, I_N)}=\cO(N\eps^2)$ which leads to a lower bound $N\ge \Omega(d^2/\eps^2)$ similar to Theorem~\ref{thm: GLB}. Still this lemma will be useful for controlling intermediate expressions appearing for the upper bound of line~\eqref{equ3}. 
    \begin{proof} We use the fact that every {matrix $A$  can be written as $A= \sum_{R\in\mathds{P}_n}\frac{\tr(AR)}{d} R$} 
        \begin{align*}
    \sum_{i_t} \lambda_{i_{t}}^{t}(u_{i_t}^{t,\alpha})^2
   &= \frac{4\eps^2}{\|\alpha\|_2^2}\sum_{i_t} \lambda_{i_{t}}^{t}\bra{\phi_{i_t}^t}\left(\sum_{P\in \mathds{P}_n} \alpha(P) P\tilde{\rho_t}P\right)\ket{\phi_{i_t}^t}^2
   \\& \le \frac{4\eps^2}{\|\alpha\|_2^2}\tr\left( \sum_{P\in \mathds{P}_n} \alpha(P) P\tilde{\rho_t}P\right)^2
    \\&=\frac{4\eps^2}{\|\alpha\|_2^2} \tr\left( \sum_{P\in \mathds{P}_n} \alpha(P) \frac{1}{d}\sum_{R\in \mathds{P}_n} \tr(R\tilde{\rho_t}) PRP\right)^2
    \\& =\frac{4\eps^2}{\|\alpha\|_2^2} \tr\left( \sum_{P\in \mathds{P}_n} \alpha(P) \frac{1}{d}\sum_{R\in \mathds{P}_n} \tr(R\tilde{\rho_t}) (-1)^{R{\circ} P}R\right)^2
\end{align*}
where we used that $PRP = (-1)^{R{\circ} P}R$. Now we expand the square,  use $\tr(R_{1}R_{2}) = d\cdot \mathbb{1}_{R_{1}= R_{2}}$ and   Lemma~\ref{sum-Pauli} to obtain 

        \begin{align*}
    \sum_{i_t} \lambda_{i_{t}}^{t}(u_{i_t}^{t,\alpha})^2
     &=\frac{4\eps^2}{\|\alpha\|_2^2} \!\sum_{P,P',R\in \mathds{P}_n}\!\alpha(P)\alpha(P') \frac{1}{d}\cdot\tr(R\tilde{\rho_t})^2 (-1)^{R{\circ} P}(-1)^{R{\circ} P'}
     \\&=\frac{4\eps^2}{\|\alpha\|_2^2} \sum_{R\in \mathds{P}_n} \left( \sum_{P\in \mathds{P}_n} \alpha(P)(-1)^{R{\circ} P}\right)^2 \frac{1}{d}\cdot\tr(R\tilde{\rho_t})^2
     \\&\le \frac{16\eps^2}{\|\alpha\|_2^2} \sum_{P,P',R\in \mathds{P}_n} \alpha(P)\alpha(P') \frac{1}{d} (-1)^{R\circ(PP')}
     \\&=\frac{16\eps^2}{\|\alpha\|_2^2} \sum_{P,P'\in \mathds{P}_n} \alpha(P)\alpha(P') \cdot d\cdot\mathbb{1}_{PP'=\mathds{I}} 
    \\&=\frac{16d\eps^2}{\|\alpha\|_2^2} \sum_{P=P'\in \mathds{P}_n} \alpha(P)^2 =16d\eps^2.
\end{align*}
In the previous inequality, we used that for all $R\in \mathds{P}_n$ we have $\|R\|_\infty=1$ so using Hölder's inequality we deduce $|\tr(R\tilde{\rho})|\le \|R\|_\infty \|\tilde{\rho}\|_1\le 2$.
    \end{proof}
Observe that the condition $\eps\le 1/(4d)$ implies that for all $t\in[N]$ and $i_t\in \cI_t$ we have $1+u_{i_t}^{t,\alpha}\ge 1/16$ and recall that $\mathds{E}_{k}(1+u_{i_t}^{t,\alpha})=\frac{1}{d} \sum_{i_t} \lambda^t_{i_t}(1+u_{i_t}^{t,\alpha})=1 $.  
Therefore, \eqref{equ3} can be controlled  as follows:
\begin{align*}
    \eqref{equ3}
    &=- \mathds{E}_{\le k}\frac{1}{d^2}\sum_{s=1}^{k-1} \mathds{E}_{\alpha}\left[ \left(u_{i_k}^{k,\alpha}\right)^2u_{i_s}^{s,\alpha}\prod_{t\in  [k-1] \setminus{s}} \left(1+u_{i_t}^{t,\alpha}\right)\right]
      \\&= \frac{1}{d^2}  \mathds{E}_{\le k}\!\left[\!- \mathds{E}_{\alpha}\left( \!\left(\!\sum_{s<k}\! {\frac{u_{i_s}^{s,\alpha}}{1+u_{i_s}^{s,\alpha}}(u_{i_k}^{k,\alpha})^2}\right)\!\prod_{{t\in  [k-1]}}\!\left(1+u_{i_t}^{t,\alpha}\right)\!\right)\!\right]
       \\&\le \frac{1}{d^2}  \mathds{E}_{\le k}\!\left[ \mathds{E}_{\alpha}\!\left(\!   \left(\left|\sum_{s<k} {\frac{u_{i_s}^{s,\alpha}}{1+u_{i_s}^{s,\alpha}}}\right|(u_{i_k}^{k,\alpha})^2\!\right)\!\prod_{{t\in  [k-1]}}\!\left(1+u_{i_t}^{t,\alpha}\!\right)\!\right)\!\right]
        \\&\overset{(a)}{\le}  \frac{16\eps^2}{d^2}  \mathds{E}_{< k}\left[ \mathds{E}_{\alpha}\left(  \left(\left|\sum_{s<k} \frac{u_{i_s}^{s,\alpha}}{(1+u_{i_s}^{s,\alpha}) }\right|\right)\prod_{t<k} \left(1+u_{i_t}^{t,\alpha}\right)\right) \right] 
        \\& \overset{(b)}{\le}  \frac{16\eps^2}{d^2} \sqrt{ \mathds{E}_{\alpha}  \mathds{E}_{< k}   \left|\sum_{s<k} \frac{u_{i_s}^{s,\alpha}}{(1+u_{i_s}^{s,\alpha}) }\right|^2 \prod_{t<k}\left(1+u_{i_t}^{t,\alpha} \right)}
          \cdot \sqrt{\mathds{E}_{\alpha}  \mathds{E}_{< k} \prod_{t<k}\left(1+u_{i_t}^{t,\alpha} \right)} 
        \\&\overset{(c)}{=}  \frac{16\eps^2}{d^2} \sqrt{ \mathds{E}_{\alpha}  \mathds{E}_{< k}  \!\sum_{s,r<k}\! \frac{u_{i_s}^{s,\alpha}}{(1+u_{i_s}^{s,\alpha} )}\cdot \frac{u_{i_r}^{r,\alpha}}{(1+u_{i_r}^{r,\alpha}) }\!\prod_{t<k}\!\left(1+u_{i_t}^{t,\alpha} \right)} 
        \\&\overset{(d)}{=}  \frac{16\eps^2}{d^2} \sqrt{ \mathds{E}_{\alpha}  \mathds{E}_{< k}   \sum_{s<k} \frac{(u_{i_s}^{s,\alpha})^2}{(1+u_{i_s}^{s,\alpha} )^2}\prod_{t<k}\left(1+u_{i_t}^{t,\alpha} \right)} 
         \\&\overset{(e)}{\le}  \frac{64\eps^2}{d^2} \sqrt{ \mathds{E}_{\alpha}  \mathds{E}_{< k}   \sum_{s<k} (u_{i_s}^{s,\alpha} )^2 \prod_{t\in  [k-1] \setminus{s}}\left(1+u_{i_t}^{t,\alpha} \right)} 
         \\&\overset{(f)}{\le}  \sqrt{k}\frac{256\eps^3}{d^2}, 
\end{align*}
where in $(a)$ we used Lemma~\ref{lem: useful}; in $(b)$ we used Cauchy-Schwarz inequality; in $(c)$ we used $ \mathds{E}_{< k} \prod_{t<k}\left(1+u_{i_t}^{t,\alpha} \right)=1$; in $(d)$ we used $\mathds{E}_{\le\max\{s, r\}}\left(u_{i_s}^{s,\alpha}u_{i_r}^{r,\alpha}\right)=0 $ when   $s\neq r$; in $(e)$ we used  $1+u_{i_s}^{s,\alpha} \ge 1/16$ since $\eps \le 1/4d$; in $(f)$ 
we used also Lemma~\ref{lem: useful}. Indeed, we can simplify the expectation as follows 
\begin{align*}
    \mathds{E}_{< k}   \sum_{s<k} (u_{i_s}^{s,\alpha} )^2 \prod_{t\in  [k-1] \setminus{s}}\left(1+u_{i_t}^{t,\alpha} \right)
  &= \sum_{s<k} \mathds{E}_{<k}   (u_{i_s}^{s,\alpha} )^2 \prod_{t\in  [k-1] \setminus{s}}\left(1+u_{i_t}^{t,\alpha} \right)
    \\&= \sum_{s<k} \mathds{E}_{\le s}   (u_{i_s}^{s,\alpha} )^2 \prod_{t\in  [s-1] }\left(1+u_{i_t}^{t,\alpha} \right)
    \\&\le \sum_{s<k} \mathds{E}_{\le s-1}   16\eps^2 \prod_{t\in  [s-1] }\left(1+u_{i_t}^{t,\alpha} \right) && \text{(Lemma~\ref{lem: useful})}
    \\&= \sum_{s<k}    16\eps^2 \le 16\eps^2k.
\end{align*}
Finally, we control the line~\eqref{equ2} which is more involved. Let us adopt the notation for $s,k \in [N]$:
\begin{align*}
    M_{s,k}&=\sum_{P\in \mathds{P}_n}  P\tilde{\rho}_kP\bra{\phi_{i_s}^s} P\tilde{\rho}_sP\ket{\phi_{i_s}^s}
    \\&=\frac{1}{d^2}\sum_{P,Q,R\in \mathds{P}_n} \tr(\tilde{\rho}_kQ)\tr(\tilde{\rho}_s R) PQP\bra{\phi_{i_s}^s} PRP\ket{\phi_{i_s}^s} 
      \\&=\frac{1}{d^2}\!\sum_{P,Q,R\in \mathds{P}_n}\! \tr(\tilde{\rho}_kQ)\tr(\tilde{\rho}_s R)(-1)^{P{\circ}(QR)} Q\bra{\phi_{i_s}^s}\!R\!\ket{\phi_{i_s}^s} 
      \\&= \sum_{Q\in \mathds{P}_n} \tr(\tilde{\rho}_kQ)\tr(\tilde{\rho}_sQ) \bra{\phi_{i_s}^s} Q\ket{\phi_{i_s}^s}  Q,
\end{align*}
where we used Lemma~\ref{sum-Pauli} in the last equality. 
So  we can write $\sum_{P\in \mathds{P}_n} \bra{\phi_{i_k}^k} P\tilde{\rho}_kP\ket{\phi_{i_k}^k}\bra{\phi_{i_s}^s} P\tilde{\rho}_sP\ket{\phi_{i_s}^s} =\bra{\phi_{i_k}^k}M_{s,k} \ket{\phi_{i_k}^k}$.
Also we use the notation  $\Psi_k=\mathds{E}_{\le k}{\mathds{E}_{\alpha}} \left(
     \left(u_{i_k}^{k,\alpha}\right)^2 \prod_{t<k} \left(1+u_{i_t}^{t,\alpha}\right) \right)$ and $\Phi_{k}^{\alpha} =\prod_{t<k} \left(1+u_{i_t}^{t,\alpha}\right) $
so that we have the (in)equalities using Cauchy-Schwarz inequality:
\begin{align*}
    \eqref{equ2}
  &=  \frac{4\eps^2}{d^2} \mathds{E}_{\le k}\mathds{E}_{\alpha}\!\left[\sum_{s=1}^{k-1}  \bra{\phi_{i_k}^k}\! M_{s,k}\!  \ket{\phi_{i_k}^k} u_{i_k}^{k,\alpha}\!\prod_{t\in  [k-1] \setminus{s}}\! \left(1\! +\! u_{i_t}^{t,\alpha}\right)\right]
    \\&=  \frac{4\eps^2}{d^2} \mathds{E}_{\le k}\mathds{E}_{\alpha}\!\left[ \sum_{s=1}^{k-1}\frac{\bra{\phi_{i_k}^k}\! M_{s,k} \! \ket{\phi_{i_k}^k}}{(1+u_{i_s}^{s,\alpha})}  \cdot u_{i_k}^{k,\alpha}\!\prod_{t\le k-1}\!\left(1+u_{i_t}^{t,\alpha}\right)\right]
    \\&\le  \frac{4\eps^2}{d^2} \sqrt{\mathds{E}_{\le k}\mathds{E}_{\alpha}\left[ \left(\sum_{s=1}^{k-1}\frac{\bra{\phi_{i_k}^k}M_{s,k} \ket{\phi_{i_k}^k}}{(1+u_{i_s}^{s,\alpha})} \right)^2 \cdot \Phi_{k}^{\alpha}\right]\Psi_k} 
        \\&\le  \frac{4\eps^2}{d^2} \sqrt{\! \mathds{E}_{\le k}\mathds{E}_{\alpha}\left[ \bra{\phi_{i_k}^k}\! \left(\sum_{s=1}^{k-1}\frac{M_{s,k}}{(1+u_{i_s}^{s,\alpha})}\!  \right)^2\! \ket{\phi_{i_k}^k}\! \cdot \Phi_{k}^{\alpha}\right]\Psi_k}
            \\&= \frac{4\eps^2}{d^2} \sqrt{\frac{1}{d}\mathds{E}_{\le k-1}\mathds{E}_{\alpha}\! \left[ \tr\left(\! \left(\sum_{s=1}^{k-1}\frac{M_{s,k}}{(1+u_{i_s}^{s,\alpha})} \right)^2\right)\!  \cdot \Phi_{k}^{\alpha}\right]\! \Psi_k}.
\end{align*}
From the definition of $M_{s,k}$ we can write 
that for $s,t < k$
\begin{align*}
  &\tr(M_{s,k}M_{t,k}) 
  = d \sum_{Q\in \mathds{P}_n}  \tr(\tilde{\rho}_kQ)^2\tr(\tilde{\rho}_sQ)\tr(\tilde{\rho}_tQ) \bra{\phi_{i_s}^s}Q\ket{\phi_{i_s}^s} \bra{\phi_{i_t}^t} Q\ket{\phi_{i_t}^t}. 
\end{align*}
Hence
\begin{align*}
  \tr\left(\sum_{s=1}^{k-1}\frac{M_{s,k}}{(1+u_{i_s}^{s,\alpha})} \right)^2 
   & =\!\sum_{\substack{s,t < k\\ Q\in \mathds{P}_n}} \!d\tr(\tilde{\rho}_kQ)^2 \frac{\tr(\tilde{\rho}_sQ)\tr(\tilde{\rho}_tQ) \bra{\phi_{i_s}^s} \!Q\!\ket{\phi_{i_s}^s} \bra{\phi_{i_t}^t}\!Q\!\ket{\phi_{i_t}^t} }{(1+u_{i_s}^{s,\alpha}) (1+u_{i_t}^{t,\alpha})}
    \\&=d\sum_{Q\in \mathds{P}_n} \tr(\tilde{\rho}_kQ)^2 \left( \sum_{s< k} \frac{\tr(\tilde{\rho}_sQ) \bra{\phi_{i_s}^s} Q\ket{\phi_{i_s}^s}  }{(1+u_{i_s}^{s,\alpha})} \right)^2
    \\&\le 4d\sum_{Q\in \mathds{P}_n}  \left( \sum_{s< k} \frac{\tr(\tilde{\rho}_sQ) \bra{\phi_{i_s}^s} Q\ket{\phi_{i_s}^s}  }{(1+u_{i_s}^{s,\alpha})} \right)^2.
\end{align*}
Note that this step is crucial because $\rho_k$ depends on $(i_1,\dots,i_{k-1})$ so we need to avoid it in order to simplify with the expectations $\mathds{E}_{t}$ for $t < k$. When we want to simplify the expectation 
\begin{align*}
    &\mathds{E}_{\le k-1}{\mathds{E}_{\alpha}}\left( \tr\left(\left(\sum_{s=1}^{k-1}\frac{M_{s,k}}{(1+u_{i_s}^{s,\alpha})} \right)^2\right) \prod_{t\le k-1} \left(1+u_{i_t}^{t,\alpha}\right)\right) 
\\&\le  4d\sum_{Q\in \mathds{P}_n}   \mathds{E}_{\le k-1}\mathds{E}_{\alpha}\Bigg(\sum_{s_1=1}^{k-1}\sum_{s_2=1}^{k-1}\bigg(\prod_{t\le k-1} \left(1+u_{i_t}^{t,\alpha}\right) \bigg)
 \cdot \frac{\tr(\tilde{\rho}_{s_1}Q) \bra{\phi_{i_{s_1}}^{s_1}} Q\ket{\phi_{i_{s_1}}^{s_1}}  }{(1+u_{i_{s_1}}^{s_1,\alpha})}
 \cdot \frac{\tr(\tilde{\rho}_{s_2}Q) \bra{\phi_{i_{s_2}}^{s_2}} Q\ket{\phi_{i_{s_2}}^{s_2}}  }{(1+u_{i_{s_2}}^{s_2,\alpha})}  \Bigg),
\end{align*}
 we can see that if $s_1< s_2$ (or $s_1>s_2$), we will get $0$  because we can simplify the terms $ (1+u_{i_t}^{t,\alpha})$ in the product for $t>s_2$, the term  $(1+u_{i_{s_2}}^{s_2,\alpha})$ is simplified with the denominator so we can take safely the expectation under $\mathds{E}_{s_2}$:
\begin{align*}
    \mathds{E}_{s_2} \tr(\tilde{\rho}_{s_2}Q) \bra{\phi_{i_{s_2}}^{s_2}} Q\ket{\phi_{i_{s_2}}^{s_2}}
    =\frac{1}{d}\sum_{i_{s_2}}\tr(\tilde{\rho}_{s_2}Q) \lambda_{i_{s_2}}^{s_2}\bra{\phi_{i_{s_2}}^{s_2}} Q\ket{\phi_{i_{s_2}}^{s_2}}= \tr(\tilde{\rho}_{s_2}Q) \tr(Q)=0
\end{align*}
because $\tr(Q)=0$ unless $Q=\mathds{I}$ for which $\tr(\tilde{\rho}_{s_2}Q)=\tr(\tilde{\rho}_{s_2})=\tr(\rho_{s_2}-\mathds{I}/d)=0$.
Therefore
\begin{align*}
    \eqref{equ2}
   & \le  \frac{4\eps^2}{d^2} \sqrt{\frac{1}{d}\mathds{E}_{\le k-1}\mathds{E}_{\alpha}\left[ \tr\left(\sum_{s=1}^{k-1}\frac{M_{s,k}}{(1+u_{i_s}^{s,\alpha})} \right)^2 \cdot \Phi_{k}^{\alpha}\right]\Psi_k}
    \\&\le\! \frac{8\eps^2}{d^2} \!\sqrt{\!\mathds{E}_{< k}\mathds{E}_{\alpha}\!\left[\! \sum_{Q\in \mathds{P}_n} \! \left( \!\sum_{s< k} \frac{\tr(\tilde{\rho}_sQ)\! \bra{\phi_{i_s}^s} \!Q\!\ket{\phi_{i_s}^s}  }{(1+u_{i_s}^{s,\alpha})} \!\right)^2  \!\Phi_{k}^{\alpha}\!\right]\!\Psi_k}
    \\& \overset{(a)}{\le} \!\frac{32\eps^2}{d^2}\!\sqrt{ \sum_{s< k} \sum_{Q\in \mathds{P}_n}   \mathds{E}_{\le s}\tr(\tilde{\rho}_sQ)^2 \bra{\phi_{i_s}^s} \!Q\!\ket{\phi_{i_s}^s}^2  \mathds{E}_{\alpha}\left( \Phi_{s}^{\alpha}\right)\!\Psi_k} 
       \\& \overset{(b)}{\le}\!\frac{{64}\eps^2}{d^2} \sqrt{\!\sum_{s< k}\!   \mathds{E}_{\le s}\! \sum_{Q\in \mathds{P}_n} \!\bra{\phi_{i_s}^s} \!Q\!\proj{\phi_{i_s}^s} \! Q\!\ket{\phi_{i_s}^s} \!\mathds{E}_{\alpha}\!\left(\Phi_{s}^{\alpha}\right)\!\Psi_k}
        \\& \overset{(c)}{=}\!\frac{{64}\eps^2}{d^2} \sqrt{ \sum_{s< k}   \mathds{E}_{\le s}  \bra{\phi_{i_s}^s} d\tr(\proj{\phi_{i_s}^s}  ) \mathds{I} \ket{\phi_{i_s}^s} \mathds{E}_{\alpha}\left(  \Phi_{s}^{\alpha}\right)\!\Psi_k} 
         \\&\le \frac{{64}\eps^2}{d\sqrt{d}} \sqrt{ k}\sqrt{\Psi_k},
\end{align*}
where in $(a)$ we used $(1+u_{i_s}^{s,\alpha})\ge \frac{1}{16}$; in $(b)$ we used $|\tr(\tilde{\rho}_sQ)|\le 2$; in $(c)$ we used Lemma~\ref{int-Pauli}.

We have proven so far, for all $k\le N:$
\begin{align}\label{proppsi}
     \Psi_k \le \frac{4\eps^2}{d}+ 256\sqrt{k}\frac{\eps^3}{d^2}+\frac{{64}\eps^2}{d\sqrt{d}} \sqrt{ k}\sqrt{\Psi_k}. 
\end{align}
The first term of the upper bound can be seen as a non-adaptive contribution. The second one can be thought as a geometric mean of the first and third terms. The last term represents essentially the contribution of the adaptivity. Our final stage of the proof is to use these recurrence inequalities to prove the lower bound by a contradiction argument.

Recall that $\Psi_k=\mathds{E}_{\le k}\mathds{E}_{\alpha}\left[
     \left(u_{i_k}^{k,\alpha}\right)^2 \prod_{t<k} \left(1+u_{i_t}^{t,\alpha}\right)\right] $ and $\sum_{k=1}^N\cI(X:I_k|I_{<k})\le 3\sum_{k=1}^N\Psi_k+3N\eps^2\exp(-Cd^2)$.
We suppose that $N\le c\frac{d^{5/2}}{\eps^2}$ for sufficiently small $c>0$. We know that from Lemma~\ref{fano-adaptive} and Lemma~\ref{Upper bound on cond mutual info}
\begin{align*}
    c_0d^2\le \cI(X:Y)\le 3\sum_{k\le N} \Psi_k +3N\eps^2\exp(-Cd^2). 
    \end{align*}
So $\sum_{k\le N} \Psi_k\ge c' d^2$ (for example $c'=c_0/4$), on the other hand the inequality (\ref{proppsi}) implies:
\begin{align*}
    \sum_k \Psi_k &\le {\sum_k \left( \frac{4\eps^2}{d} +256\frac{\eps^3}{d^2}\sqrt{k} +
    {64}\frac{\eps^2\sqrt{k}}{d\sqrt{d}}\sqrt{\Psi_k} \right)}
    \\&\le 4\frac{N\eps^2}{d} +256\frac{N\eps^3}{d^2}\sqrt{N} +{64}\sum_k\frac{\eps^2\sqrt{k}}{d\sqrt{d}}\sqrt{\Psi_k}
    \\&\le 4\frac{N\eps^2}{\sqrt{c'}d^2} \sqrt{\sum_{k\le N} \Psi_k} +256\frac{N\eps^2}{d^2}\sqrt{cd^{5/2}}  +{64}\frac{\eps^2}{d\sqrt{d}}\sqrt{\sum_k k}\sqrt{\sum_k \Psi_k} 
    \\&\le \left( \frac{8}{\sqrt{c'd}}+\frac{512}{\sqrt{c'}d^{1/4}} +{64}\right)\frac{\eps^2}{d\sqrt{d}}\sqrt{\sum_k k}\sqrt{\sum_k \Psi_k}
    \\&\le C'\frac{\eps^2}{d\sqrt{d}}N\sqrt{\sum_k \Psi_k}
\end{align*}
where in the third inequality we use $\sum_{k\le N} \Psi_k\ge c' d^2$, $N\le c\frac{d^{5/2}}{\eps^2}$ and Cauchy-Schwarz inequality and in the last inequality $C'$ is a universal constant. Therefore:
\begin{align*}
    \sum_k \Psi_k \le C'^2\left(  \frac{N^2\eps^4}{d^3}\right)\le C'^2c^2d^2.
\end{align*}
Hence
\begin{align*}
    c_0d^2\le \cI(X:Y)\le \sum_{k\le N} 3\Psi_k +3N\eps^2\exp(-Cd^2)\le 6 C'^2 c^2d^2
\end{align*}
which gives the contradiction for $c\ll \sqrt{c_0}/C'$. Finally we deduce $N\ge \Omega(d^{5/2}/\eps^2)$ and we conclude the proof of the first lower bound of Theorem~\ref{thm: LB-adaptive}.

If the adaptive algorithm {can only adapt on the last $\cO(H/\eps^2)$ observations}, the previous inequalities imply for all $1\le k \le N$:
\begin{align*}
   \sum_k \Psi_k &\le { \sum_k \left( \frac{4\eps^2}{d}+ 256\sqrt{H}\frac{\eps^2}{d^2}+ \frac{{64}\eps}{d\sqrt{d}}\sqrt{H}\sqrt{\Psi_k} \right)}
    \\&\le  { \sum_k \left(  \frac{4\eps^2}{d}+ 256\sqrt{H}\frac{\eps^2}{d^2}+ \frac{({64}\eps)^2H}{d^3}+\frac{\Psi_k}{2}\right)}
\end{align*}
where we use AM-GM inequality, hence we deduce:
\begin{align*}
    c_0d^2&\le \cI(X:Y)\le  \sum_k 3\Psi_k +3N\eps^2\exp(-Cd^2)
    \\&\le \frac{30\eps^2N}{d}+ 6\cdot 256\sqrt{H}\frac{\eps^2N}{d^2}+ \frac{6\cdot({64}\eps)^2HN}{d^3}
\end{align*}
and finally we obtain:
\begin{align*}
    N\ge \Omega\left(\min\left\{ \frac{d^4}{\sqrt{H}\eps^2}, \frac{d^5}{H\eps^2}, \frac{d^3}{\eps^2}\right\}\right).
\end{align*}
For $H=\mathcal{O}(d^2)$, this gives \eqref{equ:memory_bound} and we conclude the proof of the second lower bound of Theorem~\ref{thm: LB-adaptive}.
\end{proof}

\section{Conclusion and open problems}

We have provided lower bounds for Pauli channel tomography in the diamond norm using  ancilla-free independent strategies for both adaptive and non-adaptive strategies. In particular, we have shown that the number of measurements should be at least $\Omega(d^3/\eps^2)$ in the non-adaptive setting and $\Omega(d^{2.5}/\eps^2)$ in the adaptive setting. We would like to finish with three interesting directions. 
Finding the optimal complexity of Pauli channel tomography using adaptive individual measurements remains an open question. We conjecture this complexity to be $\Theta(d^3/\eps^2)$ since we remark that in many situations the adaptive strategies cannot overcome the non-adaptive ones. Furthermore, we already obtained a $\Theta(d^3/\eps^2)$ bound for adaptive strategies in the high precision and  limited adaptivity regime, further evidence of this bound.
Moreover, since~\cite{chen2022quantum} established the optimal complexity for estimating the eigenvalues of a Pauli channel in the $l_\infty$-norm using ancilla-assisted non-adaptive independent strategies, it would be interesting to find the optimal complexity to learn a Pauli channel in the diamond norm when the algorithm can use $k$-qubit ancilla for $k\le n$. Finally, it should be noted that all of the channel constructions used in this work have a very large spectral gap, i.e., are very noisy. It would be interesting to study the sample complexity of Pauli channel tomography in terms of the spectral gap as well.

\section*{Acknowledgment}
A.O. thanks Guillaume Aubrun for helpful discussions. We thank the anonymous reviewers for their thorough comments, which significantly improve the presentation.
This work is part of HQI initiative (\href{www.hqi.fr}{www.hqi.fr}) and is supported by France 2030 under the French National Research Agency award number “ANR-22-PNCQ-0002”. We also acknowledge support from the European Research Council (ERC Grant AlgoQIP, Agreement No. 851716).

\printbibliography
\appendix
\section{Technical tools}
\subsection{Pauli group properties}
In this section, we group some useful properties about the Pauli operators that we need for the proofs in this article.
\begin{lemma}\label{sum-Pauli}
We have for all $Q\in \{\mathds{I},X,Y,Z\}^{\otimes n}$:
\begin{align*}
    \sum_{P\in \mathds{P}_n}  (-1)^{P{\circ} Q} = d^2\cdot  \mathbb{1}_{Q=\mathds{I}}.
\end{align*}
\end{lemma}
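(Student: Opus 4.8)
The plan is to exploit the multiplicativity of the commutation indicator $P.Q$ across tensor factors, which reduces the $n$-qubit sum to a product of $n$ independent single-qubit sums. First I would record the additivity of the symplectic form: for Pauli strings $P=P_1\otimes\cdots\otimes P_n$ and $Q=Q_1\otimes\cdots\otimes Q_n$ one has $PQ=\bigotimes_{i}P_iQ_i=\bigotimes_i(-1)^{P_i.Q_i}Q_iP_i=(-1)^{\sum_i P_i.Q_i}QP$, so $P.Q=\sum_{i=1}^n P_i.Q_i \bmod 2$ and therefore $(-1)^{P.Q}=\prod_{i=1}^n(-1)^{P_i.Q_i}$.

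Since summing over $P\in\mathds{P}_n$ is the same as summing independently over each single-qubit factor $P_i\in\{\mathds{I},X,Y,Z\}$, the sum factorizes:
\begin{align*}
\sum_{P\in\mathds{P}_n}(-1)^{P.Q}=\prod_{i=1}^n\left(\sum_{R\in\{\mathds{I},X,Y,Z\}}(-1)^{R.Q_i}\right).
\end{align*}
It then remains to evaluate the single-qubit sum $S(Q_i):=\sum_{R\in\{\mathds{I},X,Y,Z\}}(-1)^{R.Q_i}$. If $Q_i=\mathds{I}$, then $R.Q_i=0$ for all four choices of $R$, so $S(\mathds{I})=4$. If $Q_i\in\{X,Y,Z\}$, then among the four Pauli matrices exactly two commute with $Q_i$ (namely $\mathds{I}$ and $Q_i$ itself) and the other two anticommute with it, so $S(Q_i)=2-2=0$.

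Plugging this back in, if $Q=\mathds{I}$ (i.e. every $Q_i=\mathds{I}$) the product equals $4^n=d^2$, and if $Q\neq\mathds{I}$ at least one factor vanishes, so the whole product is $0$; this is precisely the claimed identity $\sum_{P\in\mathds{P}_n}(-1)^{P.Q}=d^2\cdot\mathbb{1}_{Q=\mathds{I}}$. There is no real obstacle here; the only point worth stating carefully is the additivity $P.Q=\sum_i P_i.Q_i \bmod 2$, which is the short computation above. (Alternatively, one may recognize the statement as the standard orthogonality relation for the nondegenerate symplectic bilinear form $(P,Q)\mapsto P.Q$ on $\mathbb{F}_2^{2n}\cong\mathds{P}_n/\{\pm 1,\pm i\}$, but the factorization argument is self-contained.)
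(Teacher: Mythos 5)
Your proof is correct and follows essentially the same route as the paper: both decompose $P.Q$ additively over tensor factors, factorize the sum into $n$ single-qubit sums, and evaluate each as $4\cdot\mathbb{1}_{Q_i=\mathds{I}}$ using the fact that a non-identity Pauli matrix commutes with exactly two of the four single-qubit Paulis. Your write-up is if anything slightly more careful, since you explicitly justify the additivity $P.Q=\sum_i P_i.Q_i \bmod 2$, which the paper takes for granted.
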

\begin{proof}
It is clear that for $Q=\mathds{I}$, $Q$ commutes with every $P\in \mathds{P}_n$ and thus the equality holds. Now, let $Q\in \mathds{P}_n\setminus \{\mathds{I}\}$ and we write $Q=Q_1\otimes \dots \otimes Q_n$ where for all $i\in [n]$, $Q_i\in\{\mathds{I},X,Y,Z\} $ is a Pauli matrix. By the same decomposition for $P\in \mathds{P}_n$, we can write:
\begin{align*}
    \sum_{P\in \mathds{P}_n}  (-1)^{P{\circ} Q} &=  \sum_{P_1,\dots,P_{n}\in \{\mathds{I},X,Y,Z\}} (-1)^{P_1{\circ}Q_1+_2\dots+_2 P_n{\circ}Q_n }
    \\&= \prod_{i=1}^n \sum_{P_i\in \{\mathds{I},X,Y,Z\}} (-1)^{P_i{\circ}Q_i }
    \\&= \prod_{i=1}^n 4\mathbb{1}_{Q_i=\mathds{I}_2}
    \\&= d^2\mathbb{1}_{Q=\mathds{I}_d}
\end{align*}
where we have used in the third equality the fact that every non identity Pauli matrix $Q_i$ commutes only with the identity and itself (so it anti-commutes with the two other Pauli matrices) thus the sum $\sum_{P_i\in \{\mathds{I},X,Y,Z\}} (-1)^{P_i{\circ}Q_i }=0$.
\end{proof}
\begin{lemma}\label{int-Pauli}
We have for all matrices $\rho$:
\begin{align*}
    \sum_{P\in \{\mathds{I},X,Y,Z\}^{\otimes n}} P\rho P =  d\tr(\rho)\mathds{I}.
\end{align*}
\end{lemma}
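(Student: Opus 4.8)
The plan is to reduce to the Pauli basis by linearity. The key algebraic fact is that conjugation by a Pauli operator is trivial up to a sign: for $P,Q\in\mathds{P}_n$ we have $P^2=\mathds{I}$ and $PQ=(-1)^{P.Q}QP$, hence $PQP=(-1)^{P.Q}Q$. Summing over $P$ and applying Lemma~\ref{sum-Pauli} gives
\begin{align*}
    \sum_{P\in\mathds{P}_n}PQP=\Big(\sum_{P\in\mathds{P}_n}(-1)^{P.Q}\Big)Q=d^2\,\mathbb{1}_{Q=\mathds{I}}\,\mathds{I}=d\,\tr(Q)\,\mathds{I},
\end{align*}
where in the last step I use that $\tr(Q)=d$ if $Q=\mathds{I}$ and $\tr(Q)=0$ otherwise, so that $d^2\mathbb{1}_{Q=\mathds{I}}=d\,\tr(Q)$.

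Next I would expand an arbitrary matrix $\rho$ in the Pauli basis. Since $PQ$ is a scalar multiple of modulus one of a Pauli operator, equal to $\mathds{I}$ precisely when $P=Q$, one has $\tr(PQ)=d\,\mathbb{1}_{P=Q}$; hence $\{P/\sqrt d:P\in\mathds{P}_n\}$ is an orthonormal family of size $d^2=\dim_{\mathds{C}}\mathds{C}^{d\times d}$, i.e.\ a basis, and $\rho=\sum_{Q\in\mathds{P}_n}\frac{\tr(\rho Q)}{d}Q$. Plugging this into the left-hand side and using the identity just proven,
\begin{align*}
    \sum_{P\in\mathds{P}_n}P\rho P=\sum_{Q\in\mathds{P}_n}\frac{\tr(\rho Q)}{d}\sum_{P\in\mathds{P}_n}PQP=\sum_{Q\in\mathds{P}_n}\tr(\rho Q)\,\tr(Q)\,\mathds{I}=d\,\tr(\rho)\,\mathds{I},
\end{align*}
since only the $Q=\mathds{I}$ term survives (because $\tr(Q)=0$ for $Q\neq\mathds{I}$), and for that term $\tr(\rho\,\mathds{I})\tr(\mathds{I})=d\,\tr(\rho)$.

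There is no real obstacle here: the statement is a standard twirl identity and every step is an elementary computation. The only points that need to be stated carefully are the sign rule $PQP=(-1)^{P.Q}Q$ (which is where $P^2=\mathds{I}$ enters) and the fact that the Paulis form an orthogonal basis of $\mathds{C}^{d\times d}$, both of which follow immediately from the tensor-product structure already exploited in the proof of Lemma~\ref{sum-Pauli}. An alternative route that avoids the basis expansion is to verify the identity factor by factor: for a single qubit a direct computation gives $\sum_{P\in\{\mathds{I},X,Y,Z\}}P\sigma P=2\,\tr(\sigma)\,\mathds{I}$ for every $2\times2$ matrix $\sigma$, and the $n$-qubit identity then follows by taking tensor products and using multiplicativity of the trace; I would mention this as a remark but present the Pauli-basis argument as the main proof.
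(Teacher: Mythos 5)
Your proof is correct and follows essentially the same route as the paper's: expand $\rho$ in the orthogonal Pauli basis, use $PQP=(-1)^{P.Q}Q$ together with Lemma~\ref{sum-Pauli} to kill all terms except $Q=\mathds{I}$, and conclude. The only difference is cosmetic (you first prove the twirl identity on basis elements and then extend by linearity, while the paper substitutes the expansion directly), and your single-qubit tensor-product remark is a valid alternative.
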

\begin{proof}
Let $d=2^n$ and $\rho\in \mathds{C}^{d\times d}$. It is known that $\frac{1}{\sqrt{d}}\{\mathds{I},X,Y,Z\}^{\otimes n}$ forms an ortho-normal basis of $\mathds{C}^{d\times d}$ for the Hilbert-Schmidt {inner} product. Thus, we can write $\rho$ in this basis:
\begin{align*}
    \rho\!= \! \sum_{P\in \{\mathds{I},X,Y,Z\}^{\otimes n}}\! \tr\left( \!\frac{P}{\sqrt{d}} \rho \!\right) \!\frac{P}{\sqrt{d}}\!=\! \frac{1}{d}\!\sum_{P\in \{\mathds{I},X,Y,Z\}^{\otimes n}} \!\tr\left( P\rho \right)\! P. 
\end{align*}
Therefore we can simplify the LHS by using the identity $PQ=(-1)^{P{\circ} Q}QP$ for all $P,Q\in \mathds{P}_n$:
\begin{align*}
        \sum_{P\in \mathds{P}_n} P\rho P &= \frac{1}{d}\sum_{P,Q \in \mathds{P}_n} \tr(Q\rho) PQP 
       \\& = \frac{1}{d}\sum_{P,Q \in \mathds{P}_n} \tr(Q\rho) (-1)^{P{\circ} Q}QPP  
        \\&=  \sum_{Q \in \mathds{P}_n} \tr(Q\rho) Q \frac{1}{d}\sum_{P\in \mathds{P}_n}  (-1)^{P{\circ} Q} 
       \\& =  \sum_{Q \in \mathds{P}_n} \tr(Q\rho) Q \cdot d\cdot\mathbb{1}_{Q=\mathds{I}}
        \\&= d\tr(\rho)\mathds{I},
\end{align*}
where we have used Lemma~\ref{sum-Pauli} to obtain the fourth equality.
\end{proof}
\subsection{Kirszbraun theorem}\label{Kirszbraun theorem}
\begin{theorem}[Kirszbraun, \cite{mattila1999geometry}]
    If $U$ is a subset of $\mathds{R}$ and $f:U\rightarrow \mathds{R}$ is an $L$ Lipschitz function with respect to a distance $\mathbf{d}$, then  there is a Lipschitz function $g: \mathds{R}\rightarrow \mathds{R}$ that extends $f$ and has the same Lipschitz constant $L$ as $f$ with respect to the distance $\mathbf{d}$. 
    Moreover, the extension is provided by 
    \begin{align*}
        g(x) = \inf_{y\in U} \left(f(y) +L\cdot \mathbf{d}(x,y) \right).
    \end{align*}
\end{theorem}
\subsection{Concentration of Lipschitz functions of Gaussian random variables}\label{concentration}
\begin{theorem}[\cite{wainwright2019high}, Theorem 2.26]
    Let $(X_1, \dots, X_n)$ be a vector of i.i.d. standard Gaussian variables, and let $f:\mathds{R}^{n}\rightarrow \mathds{R}$ be $L$-Lipschitz with respect to the Euclidean norm. Then we have for all $t\ge 0$:
    \begin{align*}
        \pr{|f(X)-\ex{f(X)}|\ge t } \le 2e^{-\frac{t^2}{2L^2}}.
    \end{align*}
\end{theorem}

\subsection{Gaussian integration by parts }
Gaussian integration by parts (see e.g. \cite{van2014probability}) is a generalization of Isserlis' formula \cite{isserlis1918formula}.
\begin{theorem}\label{thm:GIBP}
Let $(X_1,\dots,X_d)$ be a Gaussian vector and $f:\mathds{R}^d\rightarrow \mathds{R}$  be a smooth function. We have:
\begin{align*}
    \ex{X_1f(X_1,\dots,X_d)}\!=\!\sum_{i=1}^d\! \cov(X_1,X_i) \ex{\partial_if(X_1,\dots,X_d)}.
\end{align*}
\end{theorem}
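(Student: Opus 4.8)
The plan is to reduce everything to the one-dimensional Gaussian integration-by-parts identity (Stein's identity) and then handle dimension $d$ either by differentiating the Gaussian density directly or by a linear change of variables. Since the right-hand side depends only on covariances, I would first note that it suffices to treat a \emph{centered} Gaussian vector $X=(X_1,\dots,X_d)$: the formula as stated presumes $\ex{X_1}=0$, since for $f\equiv 1$ the left side equals $\ex{X_1}$ while the right side is $0$; in the use made in this paper the coordinates are i.i.d.\ $\cN(0,1)$, so centering is automatic. The base case I would establish is the scalar identity
\begin{align*}
    \ex{Zg(Z)}=\ex{g'(Z)}\qquad(Z\sim\cN(0,1)),
\end{align*}
which is plain integration by parts: the density $\phi(z)=(2\pi)^{-1/2}e^{-z^2/2}$ satisfies $z\,\phi(z)=-\phi'(z)$, so $\ex{Zg(Z)}=-\int g\,\phi'\,dz=\int g'\,\phi\,dz=\ex{g'(Z)}$, the boundary term vanishing as soon as $g$ and $g'$ have at most sub-Gaussian growth (the implicit regularity assumption on $f$).

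\textbf{Nondegenerate case.} I would first assume $\Sigma:=\cov(X)$ is invertible, so that $X$ has density $p(x)=(2\pi)^{-d/2}(\det\Sigma)^{-1/2}\exp(-\tfrac12 x^{\top}\Sigma^{-1}x)$. The identity to exploit is $\nabla p(x)=-\Sigma^{-1}x\,p(x)$, equivalently $x\,p(x)=-\Sigma\,\nabla p(x)$, i.e.\ coordinatewise $x_1 p(x)=-\sum_{j=1}^d\cov(X_1,X_j)\,\partial_j p(x)$. Substituting into $\ex{X_1 f(X)}=\int x_1 f(x)p(x)\,dx$ and integrating by parts in each variable (once more the Gaussian decay of $p$ against the controlled growth of $f$ removes all boundary terms) would give
\begin{align*}
    \ex{X_1 f(X)}&=-\sum_{j=1}^d\cov(X_1,X_j)\int f\,\partial_j p\,dx \\
    &=\sum_{j=1}^d\cov(X_1,X_j)\int(\partial_j f)\,p\,dx \\
    &=\sum_{j=1}^d\cov(X_1,X_j)\,\ex{\partial_j f(X)}.
\end{align*}

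\textbf{Degenerate case.} For singular $\Sigma$ I would either (i) replace $\Sigma$ by $\Sigma+\eta\,\mathds{I}_d$, apply the nondegenerate identity to the corresponding Gaussian $X^{(\eta)}$, and let $\eta\downarrow 0$, using the growth hypotheses to pass both expectations and the covariance to the limit; or (ii) write $\Sigma=UU^{\top}$ with $U\in\mathds{R}^{d\times r}$, $r=\mathrm{rank}\,\Sigma$, so that $X$ has the same law as $Ug$ for $g\sim\cN(0,\mathds{I}_r)$, set $h(g):=f(Ug)$ (hence $\nabla h(g)=U^{\top}\nabla f(Ug)$ and $X_1=\sum_k(U^{\top}e_1)_k\,g_k$), apply $\ex{g_k h(g)}=\ex{\partial_k h(g)}$ --- the scalar identity used coordinatewise, conditioning on the remaining independent coordinates --- and reassemble to obtain $\ex{X_1 f(X)}=\ex{e_1^{\top}UU^{\top}\nabla f(X)}=\sum_{j=1}^d\cov(X_1,X_j)\,\ex{\partial_j f(X)}$.

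\textbf{Main obstacle.} The only point that genuinely needs care, and the only place where the statement as written is slightly informal, is integrability and regularity: one must assume that $f$ and its first partials are dominated by the Gaussian tails so that every expectation is finite and the integration-by-parts boundary terms vanish (e.g.\ $f$ and $\nabla f$ of at most polynomial growth). In the paper this is harmless: the function to which the theorem is applied (there called $F$) is a polynomial in $\alpha$ multiplied by the bounded factors $(1+u_{i_t}^{t,\alpha})$ and by $\|\alpha\|_2^{\pm 1}$, and the only mildly non-obvious point is that the negative power $1/\|\alpha\|_2$ is harmless against the Gaussian measure, since $\ex{\|\alpha\|_2^{-p}}<\infty$ for every $p<d^2$. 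Hence the hypotheses hold and the identity applies verbatim.
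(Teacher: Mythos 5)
Your proof is correct. Note, however, that the paper does not prove this statement at all: Theorem~\ref{thm:GIBP} is stated as a known fact (Stein's identity / Gaussian integration by parts) with a citation to a textbook, so there is no in-paper argument to compare against. Your write-up is the standard derivation — the scalar Stein identity plus the density identity $x\,p(x)=-\Sigma\,\nabla p(x)$ in the nondegenerate case, and a factorization $\Sigma=UU^{\top}$ (or regularization) for the degenerate case — and all the steps check out. Your two side remarks are also well taken and go slightly beyond what the paper says: the statement as written implicitly assumes $\ex{X_1}=0$ (take $f\equiv 1$), which holds in the paper's application since the $\alpha(P)$ are i.i.d.\ standard Gaussians; and the integrability/regularity of the function $F(\alpha)=\bigl(\|\alpha\|_2\,u_{i_k}^{k,\alpha}\bigr)\prod_{t<k}\bigl(1+u_{i_t}^{t,\alpha}\bigr)$ to which the identity is applied does deserve the sentence you give it, since the factors $1/\|\alpha\|_2$ make $F$ non-smooth at the origin, a measure-zero issue that is harmless against the Gaussian measure precisely because $\ex{\|\alpha\|_2^{-p}}<\infty$ for $p<d^2$.
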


\section{Proof of \eqref{equ:approximation_mutual_mcdiarmid}}
In this section, we give the proof of an approximation used in the proof of Theorem~\ref{thm: LB-adaptive}, more precisely in \eqref{equ:approximation_mutual_mcdiarmid}, where we showed that the empirical average over the ensemble of a certain function is well-approximated by its mean.
\begin{proposition}\label{prop:approx}
There is a universal constant $C>0$ such that with probability at least $9/10$ we have:
    \begin{align*}
        & \sum_{k=1}^N\frac{1}{M}\sum_{x=1}^M\sum_{i_1,\dots,i_{k-1}} \left(\prod_{t=1}^{k-1}\lambda_{i_t }^t\left(\frac{1+u_{i_t}^{t,x}}{d}\right)\right)\sum_{i_k}\frac{\lambda_{i_k}^k}{d}(u_{i_k}^{k,x})^2
        \\&\le \sum_{k=1}^N\mathds{E}_\alpha\Bigg[\!\sum_{i_1,\dots,i_{k-1}}\! \left(\prod_{t=1}^{k-1}\lambda_{i_t }^t\left(\frac{1+u_{i_t}^{t,\alpha}}{d}\right)\!\right)\sum_{i_k}\frac{\lambda_{i_k}^k}{d}(u_{i_k}^{k,\alpha})^2\Bigg] +N\eps^2\exp(-Cd^2).
    \end{align*}
\end{proposition}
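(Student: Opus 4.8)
The plan is to view the left-hand side as an empirical average over the $M$ i.i.d.\ Gaussian vectors $\alpha_1,\dots,\alpha_M$ that define the ensemble $\cF$, and to close the gap by a routine Hoeffding-type concentration estimate. I would fix the $1/3$-correct adaptive strategy first: its choices $\rho_k$, $\ket{\phi_{i_k}^k}$ and $\lambda_{i_k}^k$ depend only on the past outcomes $i_{<k}=(i_1,\dots,i_{k-1})$ and not on which channel is plugged in, so
\begin{align*}
 g(\alpha):=\sum_{k=1}^N\ \sum_{i_1,\dots,i_{k-1}}\Bigl(\prod_{t=1}^{k-1}\lambda_{i_t}^t\,\frac{1+u_{i_t}^{t,\alpha}}{d}\Bigr)\ \sum_{i_k}\frac{\lambda_{i_k}^k}{d}\bigl(u_{i_k}^{k,\alpha}\bigr)^2
\end{align*}
is a well-defined deterministic function of a single $\alpha$, the left-hand side of the claim is $\frac1M\sum_{x=1}^M g(\alpha_x)$, and the first term on the right-hand side is $\mathds{E}_\alpha\bigl[g(\alpha)\bigr]$. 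Thus the statement is just a one-sided deviation bound for $\frac1M\sum_x g(\alpha_x)$ around its mean.

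The first thing I would establish is the uniform bound $0\le g(\alpha)\le 16N\eps^2$, valid for every $\alpha$ with $\|\alpha\|_2\ne 0$ (hence almost surely). Nonnegativity is immediate. For the upper bound, note that for each fixed branch $i_{<k}$ the numbers $\lambda_{i_t}^t\,\frac{1+u_{i_t}^{t,\alpha}}{d}=\tr\!\bigl(\lambda_{i_t}^t\,\proj{\phi_{i_t}^t}\,\cP_\alpha(\rho_t)\bigr)$ are genuine conditional outcome probabilities, so they are nonnegative and $\sum_{i_1,\dots,i_{k-1}}\prod_{t<k}\lambda_{i_t}^t\,\frac{1+u_{i_t}^{t,\alpha}}{d}=1$; and Lemma~\ref{lem: useful}, applied to the state $\rho_k$ and the POVM $\{\lambda_{i_k}^k\,\proj{\phi_{i_k}^k}\}_{i_k}$ of the branch $i_{<k}$, gives $\frac1d\sum_{i_k}\lambda_{i_k}^k\bigl(u_{i_k}^{k,\alpha}\bigr)^2\le 16\eps^2$. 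Hence each of the $N$ summands of $g$ is a convex combination of numbers in $[0,16\eps^2]$, so $g(\alpha)\le 16N\eps^2$.

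Then, since $\alpha_1,\dots,\alpha_M$ are i.i.d., the variables $g(\alpha_1),\dots,g(\alpha_M)$ are i.i.d.\ and lie in $[0,16N\eps^2]$, and Hoeffding's inequality \cite{hoeff} gives, for every $s>0$,
\begin{align*}
 \pr{\frac1M\sum_{x=1}^M g(\alpha_x)-\mathds{E}_\alpha\bigl[g(\alpha)\bigr]>s}\ \le\ \exp\!\Bigl(-\frac{M\,s^2}{128\,N^2\eps^4}\Bigr).
\end{align*}
Taking $s=N\eps^2\exp(-Cd^2)$ and recalling $M=\exp(\Omega(d^2))$, the exponent equals $-\tfrac1{128}\exp\bigl(\Omega(d^2)-2Cd^2\bigr)$, which is below $-\log 10$ as soon as $C$ is chosen smaller than half the constant in $\log M$ (for $d$ larger than an absolute constant; the finitely many remaining values of $d$ affect only the implicit constants in the theorems). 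This is exactly the asserted inequality with this choice of $C$.

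The step I expect to be the main obstacle is not analytic but bookkeeping: one must check that the doubly-exponentially small failure probability can really be pushed below $1/10$, which constrains $C$ to be small compared with the (small, free) constant in $\log M$ produced by Lemma~\ref{lem:p_alpha} and the union bound — harmless, since shrinking $C$ only strengthens the bound. The other point to be careful about is that $g$ must be the \emph{same} function for every $x$: the adaptivity enters only through the dependence of $\rho_k,\ket{\phi_{i_k}^k},\lambda_{i_k}^k$ on past outcomes, never on $\alpha$, so the $g(\alpha_x)$ are genuinely i.i.d.\ and the concentration step goes through verbatim — equivalently it can be run as McDiarmid's bounded-differences inequality on the $M d^2$ underlying Gaussian coordinates, as the label of \eqref{equ:approximation_mutual_mcdiarmid} suggests.
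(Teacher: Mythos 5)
Your proposal is correct and follows essentially the same route as the paper: bound the per-step summand by $16\eps^2$ via Lemma~\ref{lem: useful} and the fact that the branch weights form a probability distribution, conclude $g(\alpha)\in[0,16N\eps^2]$, and apply Hoeffding to the empirical mean over the $M$ i.i.d.\ draws of $\alpha$, absorbing the deviation into $N\eps^2\exp(-Cd^2)$ using $M=\exp(\Omega(d^2))$. The only cosmetic difference is that the paper keeps the sum over $k$ outside and phrases the concentration as a bounded-differences estimate with increments $32N\eps^2/M$, whereas you aggregate into a single bounded function $g$ first; the two are equivalent.
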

\begin{proof}Let $k\in [N]$. For $x\in [M]$, let $f_k(x)$ be the function:
    \begin{align*}
        f_k(x)=\sum_{i_1,\dots,i_{k-1}} \left(\prod_{t=1}^{k-1}\lambda_{i_t }^t\left(\frac{1+u_{i_t}^{t,x}}{d}\right)\right)\sum_{i_k}\frac{\lambda_{i_k}^k}{d}(u_{i_k}^{k,x})^2.
    \end{align*}
    {where we recall that the random variable $u_{i_t}^{t,x}$ is defined as \begin{align*}
        u_{i_t}^{t,x}=\sum_{P\in \mathds{P}_n} \frac{2\alpha_x(P)\eps}{\|\alpha_x\|_2}\bra{\phi_{i_k}^k} P\rho_kP\ket{\phi_{i_k}^k}-\sum_{P\in \mathds{P}_n} \frac{2\alpha_x(P)\eps}{d\|\alpha_x\|_2}.
    \end{align*}
    where $\alpha_x(P)\sim \cN(0,1)$.}
    Similarly we define {for $\alpha=\left(\alpha(P) \right)_{P\in \mathds{P}_n}$ and $\alpha(P)\sim \cN(0,1)$:}
    \begin{align*}
        f_k(\alpha) \coloneqq  \sum_{i_1,\dots,i_{k-1}} \left(\prod_{t=1}^{k-1}\lambda_{i_t }^t\left(\frac{1+u_{i_t}^{t,\alpha}}{d}\right)\right)\sum_{i_k}\frac{\lambda_{i_k}^k}{d}(u_{i_k}^{k,\alpha})^2.
    \end{align*}
    {Now define the random variables 
    \begin{align*}
        F(x)=\sum_{k=1}^N f_k(x) \quad \text{and } \quad F(\alpha)= \sum_{k=1}^N f_k(\alpha).
    \end{align*}
     We want to show a concentration of the random variable $\frac{1}{M}\sum_{x=1}^M  F(x)$ around its mean $\mathds{E}_{\alpha}(F(\alpha))$. If the random variables $\left(F(x)\right)_x$ are bounded we can  use Hoeffding's inequality to obtain a concentration inequality for the empirical mean $\frac{1}{M}\sum_{x=1}^M  F(x)$.}

     {From Lemma~\ref{lem: useful}, we have for all $x\in [M]$, for all $k\in [N]$:
     \begin{align*}
         \sum_{i_k}\frac{\lambda_{i_k}^k}{d}(u_{i_k}^{k,x})^2\le 16\eps^2
     \end{align*}
     Hence using $\sum_{i_k}\frac{\lambda_{i_k}^k u_{i_k}^{k,x}}{d}=0$ and $\sum_{i_k}\frac{\lambda_{i_k}^k}{d}=1$ we obtain:
     \begin{align*}
        f_k(x)&=\sum_{i_1,\dots,i_{k-1}} \left(\prod_{t=1}^{k-1}\lambda_{i_t }^t\left(\frac{1+u_{i_t}^{t,x}}{d}\right)\right)\sum_{i_k}\frac{\lambda_{i_k}^k}{d}(u_{i_k}^{k,x})^2
        \\&\le \sum_{i_1,\dots,i_{k-1}} \left(\prod_{t=1}^{k-1}\lambda_{i_t }^t\left(\frac{1+u_{i_t}^{t,x}}{d}\right)\right)16\eps^2
        = 16\eps^2
    \end{align*}
    which implies an upper bound on the random variable $F$: 
     \begin{align*}
        0\le F(x)=\sum_{k=1}^N f_k(x) \le 16N \eps^2. 
    \end{align*}
    Therefore Hoeffding's inequality \cite{hoeff} implies:
    \begin{align*}
       & \pr{\left|\frac{1}{M}\sum_{x=1}^M F(x)-\ex{\frac{1}{M}\sum_{x=1}^M F(x)}\right|>s}
        \le 2\exp\left(-\frac{2s^2M}{16^2N^2\eps^4}\right).
    \end{align*}
     }
    Since for all $x\in [M]$ we have $\mathds{E}_{\alpha_x}(f_k(x))=\mathds{E}_\alpha (f_k(\alpha)) $, we deduce:
     \begin{align*}
        &\pr{\left| \sum_{k=1}^N\frac{1}{M}\sum_{x=1}^M f_k(x)-\sum_{k=1}^N\mathds{E}_\alpha (f_k(\alpha)) \right|>s}
        \le 2\exp\left(-\frac{s^2M}{128 N^2\eps^4}\right).
    \end{align*}
    Finally, by taking $s=12N\eps^2\sqrt{\frac{\log(20)}{M}}$, with probability at least $9/10$, we have: 
    \begin{align*}
        &\sum_{k=1}^N\frac{1}{M}\sum_{x=1}^M\sum_{i_1,\dots,i_{k-1}} \left(\prod_{t=1}^{k-1}\lambda_{i_t }^t\left(\frac{1+u_{i_t}^{t,x}}{d}\right)\right)\sum_{i_k}\frac{\lambda_{i_k}^k}{d}(u_{i_k}^{k,x})^2
        \\&= \sum_{k=1}^N\frac{1}{M}\sum_{x=1}^M f_k(x)
        \le  \sum_{k=1}^N\mathds{E}_\alpha (f_k(\alpha)) + 12N\eps^2\sqrt{\frac{\log(20)}{M}}
        \\&\le \sum_{k=1}^N \mathds{E}_\alpha\Bigg[\!\sum_{i_1,\dots,i_{k-1}} \!\left(\prod_{t=1}^{k-1}\lambda_{i_t }^t\left(\frac{1+u_{i_t}^{t,\alpha}}{d}\right)\!\right)\sum_{i_k}\frac{\lambda_{i_k}^k}{d}(u_{i_k}^{k,\alpha})^2\Bigg] 
      +N\eps^2\exp(-Cd^2)
    \end{align*}
    where $C>0$ is a universal constant and we used the fact that $M=\exp(\Omega(d^2))$.
    \end{proof}

\end{document}